\newcommand{\blind}{1}
\renewcommand{\baselinestretch}{1.31}
\newcommand{\linsp}{\renewcommand{\baselinestretch}{1.31}}
\newcommand{\linsps}{\renewcommand{\baselinestretch}{1.3}}
\newcommand{\Norm}[1]{\left\Vert #1\right\Vert}			
\newcommand{\vect}[1]{\boldsymbol{#1}}				
\newcommand{\tp}{{\rm T}}						
\newcommand{\sgn}{{\rm sgn}}						
\newcommand{\med}{{\rm med}}					
\newcommand{\argmin}[1]{\mathop{\rm{argmin}}_{#1}}	
\newcommand{\argmax}[1]{\mathop{\rm{argmax}}_{#1}}	
\newcommand*\widebar[1]{\,						
	\hbox{%
   		\kern0.01em%
		\vbox{%
			\hrule height 0.5pt		
			\kern0.33ex			
			\hbox{%
				\kern-0.1em		
				\ensuremath{#1}%
				\kern-0.1em		
			}%
		}%
	}%
\,}%
\let\hat\widehat
\let\tilde\widetilde
\let\bar\widebar
\newcommand{\mcA}{{\mathcal A}}
\newcommand{\mcH}{{\mathcal H}}					
\newcommand{\mcN}{{\mathcal N}}					
\newcommand{\mcP}{{\mathcal P}}
\newcommand{\mcQ}{{\mathcal Q}}
\newcommand{\mcX}{{\mathcal X}}
\newcommand{\mbB}{{\mathbb B}}					
\newcommand{\mbE}{{\mathbb E}}					
\newcommand{\mbI}{{\mathbb I}}					
\newcommand{\mbP}{{\mathbb P}}					
\newcommand{\mbQ}{{\mathbb Q}}					
\newcommand{\mbR}{{\mathbb R}}					
\newcommand{\mbS}{{\mathbb S}}					
\newcommand{\mbX}{{\mathbb X}}
\DeclareMathAlphabet\EuScriptBF{U}{eus}{b}{n}
\newcommand{\laplace}{ \text{\normalfont{Lap}} }
\newcommand{\majvote}{\text{\normalfont{MajVote}} }
\newcommand{\dpvote}{\text{\normalfont{DPVote}} }
\newcommand{\peel}{\text{\normalfont{Peeling}} }
\newcommand{\vote}{\text{\normalfont{Vote} }}
\newcommand{\supp}{\mathrm{supp}}	
\definecolor{DSgray}{cmyk}{0,1,0,0}
\definecolor{scolor}{cmyk}{0.5,2,0,0}
\newtheorem{lemma}{Lemma}
\newtheorem{theorem}{Theorem}
\newtheorem{definition}{Definition}
\newtheorem{proposition}{Proposition}
\newtheorem{remark}{Remark}
\begin{document}

\setstcolor{red}


\linsps

\if1\blind
{
	\title{\bf Majority Vote for Distributed Differentially Private Sign Selection}
		\author{Weidong Liu\thanks{
			School of Mathematical Sciences and MoE Key Lab of Artificial Intelligence, Shanghai Jiao Tong University, Shanghai 200240, China.}\,,
		Jiyuan Tu\thanks{
			School of Mathematics, Shanghai University of Finance and Economics, Shanghai 200433, China.
		}\,,
		Xiaojun Mao\thanks{
			School of Mathematical Sciences and Ministry of Education Key Laboratory of Scientific and Engineering Computing, Shanghai Jiao Tong University, Shanghai 200240, China.}\,,
		and Xi Chen
		\thanks{Stern School of Business, New York University, NY 10012, USA.
}
\thanks{Xiaojun Mao and Xi Chen are the co-corresponding authors.}
}

\date{}
	\maketitle
} \fi

\if0\blind
{
	\bigskip
	\bigskip
	\bigskip
	\begin{center}
		{\LARGE\bf Majority Vote for Distributed Differentially Private  Sign Selection}
	\end{center}
	\medskip
} \fi


\bigskip
\begin{abstract}
	Privacy-preserving data analysis has become more prevalent in recent years. In this study, we propose a distributed group differentially private Majority Vote mechanism, for the sign selection problem in a distributed setup. To achieve this, we apply the iterative peeling to the stability function and use the exponential mechanism to recover the signs. For enhanced applicability, we study the private sign selection for mean estimation and linear regression problems, in distributed systems. Our method recovers the support and signs with the optimal signal-to-noise ratio as in the non-private scenario, which is better than contemporary works of private variable selections. Moreover, the sign selection consistency is justified by theoretical guarantees. Simulation studies are conducted to demonstrate the effectiveness of the proposed method.
\end{abstract}

\noindent
{\it Keywords}: Majority voting; divide-and-conquer; privacy; sign selection

\linsp


\section{Introduction}

Recently, large amounts of sensitive data have been collected in a distributed manner. While one wants to extract more accurate statistical information from distributed datasets, critical awareness is required toward the probability of possible sensitive personal information leakages, during the learning process. This calls for the study of distributed learning under privacy constraints \citep{pathak_rane_bhiksha.2010, hamm_cao_belkin.2016, jayaraman_wang_etal.2018}.

In this study, we consider the following distributed sign recovery problem. 
Suppose that all $N$ data points $\mbX=\{\vect{X}_i\}_{i=1}^N$ are stored in a distributed system,  that consists of $m$ local machines $\mcH_j$ for $1\leq j\leq m$, and each local machine has $n=N/m$ observations. We focus on estimating
the nonzero positions and signs of a $p$-dimensional population parameter $\vect{\theta}^{*}$, with distributed group differential privacy (DGDP, see Definition \ref{def:dist_dp}). Let $\sgn(\vect{\theta}^*)=(\sgn(\theta^*_1),\dots,\sgn(\theta^*_p))$, where each of its elements takes a value in $\{1,-1,0\}$.  In non-distributed and non-privacy settings,  estimating $\sgn(\vect{\theta}^*)$, which is usually referred to as a sign/variable selection problem, is an important topic in high-dimensional statistics. Many sign/variable selection problems, including estimating the location of the nonzero elements in mean vectors or covariance matrices, support recovery of regression coefficients, and structural estimation of graphical models, have been well studied in the statistical community (see, e.g., \cite{butucea_etal.2018aos, butucea_etal.2020arXiv, tibshirani1996regression, miller.2002, meinshausen2006,karoui.2008aos, bickel_levina.2008, cai_liu.2011, cai_liu_luo.2011}). The sign selection problem also finds applications in many fields, including anomaly detection, medical imaging, and genomics \citep{fan_li.2001jasa, cai_liu_xia.2014jrssb}. There also exist several well-developed statistical tools such as the thresholding technique, $\ell_1$-regularization, and multiple testing to tackle the sign/variable selection problems. However, these problems become more challenging under differential privacy, without their in-depth exploration in the existing literature.

First, we define DGDP for distributed learning. In the privacy literature, differential privacy (DP), first proposed by \cite{dwork_etal.2006}, has been the most widely adopted definition of privacy tailored to statistical data analysis. Note that the $(\epsilon,\delta)$-differential privacy ($(\epsilon,\delta)$-DP) in \cite{dwork_etal.2006}, is designed to protect the privacy between adjacent datasets with a Hamming distance of one, indicating two datasets differing only by one sample (see, e.g., \cite{lei.2011}). In distributed learning, a local machine needs to protect the privacy of all its data points. It is natural to extend the standard DP to distributed group differential privacy as follows.
\begin{definition}  \label{def:dist_dp}
    (Distributed Group Differential Privacy) In a distributed system, assume that the entire dataset is stored in machines $\mcH_1,\dots,\mcH_m$. A randomized algorithm $\mcA:\mcX^N\rightarrow\Theta$ is $(\epsilon,\delta)$-distributed group differentially private ($(\epsilon,\delta)$-DGDP) if for any pair of datasets $\mbX\in\mcX^N$ and $\mbX'\in\mcX^N$, whose elements are the same except for one local machine, the following holds
	\begin{equation*}	
		\mbP\big\{\mcA(\mbX)\in U\big\}\leq e^{\epsilon}\cdot\mbP\big\{\mcA(\mbX')\in U\big\}+\delta,
	\end{equation*}
	for every measurable subset $U\subseteq\Theta$.
\end{definition}

Note that, when the sample size of each local machine is $n=1$, the DGDP reduces to the classical $(\epsilon,\delta)$-DP. Although there is in-depth literature on estimating the parameter $\vect{\theta}^{*}$ under the DP constraint, an efficient procedural design for DGDP sign selection in distributed systems is highly nontrivial. However, there are three major challenges in this regard. First, naively adding noise (a common DP technique) to the output of a non-private sign/variable selection method, makes it difficult to maintain the desired sign selection consistency. Second, most of the existing methods for parameter estimation with DP, are usually inoperative for sign selection. The essential difference between parameter estimation and sign selection is well known. The latter is closely related to multiple testing and top-$k$ selection. To the best of our knowledge, there is insufficient literature on DGDP multiple testing in distributed systems. On the other hand, the top-$k$ selection aims to select the largest $k$ elements among the $p$ given values, under the DP constraint (see, e.g., \cite{bafna_ullman.2017colt, steinke_ullman.2017focs, dwork_su_li.2018differentially, qiao_su_li.2021icml}). However, top-$k$ selection does not imply sign consistency, since obtaining the prior information about the number of non-zero elements $s$ is impractical, and it is impossible to set $k=s$.
Third, sign selection is relatively easy with a large signal-to-noise ratio (SNR) of the nonzero signals. However, when the SNR is close to the minimax lower bound (typically of the order $O(\sqrt{\log p/N})$), it complicates the theoretical analysis for sign selection consistency. As far as we know, for many important statistical problems including support recovery of linear regression under DP, no method attains the minimax lower bound even in a non-distributed setting.

In this paper, we address the above challenges by developing a general Majority Vote mechanism for sign selection problems,
which is particularly adaptive to the distributed system. Assume that $m$ parties vote to make a decision. Each party can vote for positive (1), negative (-1), and null (0). The positive (negative) decision will be made only when there are more than half of the votes for it; otherwise, it returns null. We leverage this Majority Vote mechanism to sign recovery problems in a distributed setup. More specifically, let each local machine obtain an initial sign vector based on its local samples and transmits the initial sign vector to a trustworthy server. By carefully constructing the stability function using the Majority Vote mechanism and combining it with the peeling technique and exponential mechanism in \cite{mcsherry_talwar.2007}, we develop a Majority Vote mechanism that protects distributed group differential privacy (DGDP). The server then generates the result based on this DGDP algorithm. Our method has the following advantages. First, every worker machine can conveniently apply classical techniques such as thresholding and/or Lasso to obtain an initial estimate, and every component only takes value in $\{1,-1,0\}$. Since every worker machine only needs to transmit the sign vector, the privacy of data on each machine is protected to a certain degree. Second, the proposed method only incurs very low communication costs as it only transmits the sign vector. More contributions to this work are summarized below.
\begin{itemize}
    \item To the best of our knowledge, we proposed the first distributed group differential privacy-aware sign selection algorithm in a distributed system. Moreover,  the proposed method guarantees sign selection consistency.
    
    \item For both the sparse mean estimation problem and the sparse regression problem, our proposed method allows the minimal signal to have the order $O(\sqrt{\log p/N})$, which meets the optimal ``beta-min" condition \citep{wainwright2009sharp} in the non-private case. In addition, the DP literature‘s often assumed the boundedness condition, is not required by our theory for data samples.

    \item Our Majority Vote mechanism and DGDP guarantee are substantially general, and do not assume any special underlying statistical model. Thus, in addition to the mean model and linear regression, they can be further applied to many other sign selection problems, such as sparse Gaussian graphical model estimation.
\end{itemize}


\subsection{Related Works}

Variable/sign selection and sparse parameter estimation have been fundamental problems in statistics. In recent years, significant efforts have been made to extend sparse learning/variable selection methods to a distributed setup. Several distributed algorithms, such as the distributed version of the proximal gradient method \citep{shi_ling_etal.2015tsp}, alternating direction method of multipliers \citep{boyd_etal.2011admm}, approximate Newton method \citep{wang_kolar_etal.2017}, and the Frank-Wolfe Algorithm \citep{bellet_liang_etal.2015} for sparse learning, have been proposed one after another. Moreover, in \cite{zhu_li_wang.2019}, the authors introduced the distributed Bayesian information criterion for variable selection. Furthermore, in \cite{battey2018distributed} and \cite{lee_liu_etal.2017}, the authors proposed aggregating local debiased Lasso estimators in one shot to achieve near-optimal statistical accuracy. On the contrary, our method directly aggregates local Lasso estimators (instead of debiased Lasso estimators) and achieves the near-optimal minimal signal strength condition, which is a novel contribution to the literature, despite privacy concerns.

In the past decade, differential privacy has gained significant attention in computer science and statistics. Many algorithms have been developed to deal with various problems, such as empirical risk minimization \citep{bassily_smith_thakurta.2014}, principal component analysis \citep{dwork_talwar_etal.2014stoc}, demand learning \citep{chen_simchi_wang.2022, chen_miao_wang.2021arXiv} and deep learning \citep{bu2020deep}. In addition to the classical differential privacy framework, several proposed privacy notions help adapt to different situations. For example, group DP \citep{dwork_aaron.2014}, local DP \citep{wasserman_zhou.2010, duchi_jordan_wainwright.2018}, Gaussian DP \citep{dong_roth_su.2019}, concentrated DP \citep{bun_steinke.2016}, R\'enyi DP \citep{ilya.2017}, and on-average KL-privacy \citep{wang_lei_fienberg.2016}. This study proposes the concept of distributed group DP, which is designed for distributed setup. The group structure in the distributed system is predetermined. This differs from the group DP in \cite{dwork_aaron.2014}, where the DP is relative to arbitrary subgroups of a given size. 

There are many works that study differentially private sparse regression problems, such as \cite{jain_thakurta.2014icml, talwar_etal.2015nips, kasiviswanathan_jin.2016icml, cai_wang_zhang.2019, wang_xu.2019icml}. However, most of these studies considered minimizing the loss function or proving $\ell_2$-consistency. It is not direct to obtain the model selection results from these works. To the best of our knowledge, only three works consider DP variable selection for linear regression in a non-distributed setting: \cite{kifer_smith_thakurta.2012colt}, \cite{thakurta_smith.2013colt}, and \cite{lei_charest_etal.2018jrssa}. We perform a more detailed comparison between these studies and ours. In \cite{kifer_smith_thakurta.2012colt}, two algorithms were proposed. One is based on an exponential mechanism \citep{mcsherry_talwar.2007}. Their algorithm requires solving many optimizations constrained in any $s$-sparse subspace. As pointed out in  \cite{kifer_smith_thakurta.2012colt}, this algorithm is computationally inefficient. The other used a resample-and-aggregate method \citep{kobbi_sofya_adam.2007, adam.2011}, which directly counts the number of nonzero elements that are initially estimated from $m$ block of subsamples and applies $\peel$ based on this counting. Their algorithm ignores the sign information from the initial estimators, which is useful for identifying the zero positions. Moreover, their method does not result in sign selection consistency and requires that the nonzero elements have magnitudes larger than $O(\sqrt{s\log p}/N^{1/4})$, which is not minimax optimal. In \cite{thakurta_smith.2013colt}, the authors defined two concepts of stability and proposed a PTR-based mechanism for variable selection. This method has a nontrivial probability of outputting the null (no result), which is undesirable in practice. In theory, it requires the boundness of the true parameter $\vect{\theta}^*$ and the covariates $\vect{X}$, which is not needed in our method. Moreover, they require the signals to be larger than $O(\sqrt{s\log p/N})$, which is not minimax optimal.
Furthermore, their algorithm was designed for differential privacy rather than DGDP. In \cite{lei_charest_etal.2018jrssa}, the authors proposed the use of the Akaike or Bayesian information criterion, in conjunction with the exponential mechanism, to choose the proper model. However, their method traverses all possible models and causes a heavy computational burden.

Variable selection is also related to multiple testing, which can be used to find significant signals while controlling the false discovery rate (FDR). \cite{dwork_su_li.2018differentially} proposed a differentially private procedure to control the FDR in multiple hypothesis testing. However, it cannot be applied directly to sign selection in a distributed group differential privacy setting. 

Our method also shares a similar spirit with the subsample-and-aggregate (SA) methods in the DP literature \citep{kobbi_sofya_adam.2007, adam.2011, kifer_smith_thakurta.2012colt}. Like SA methods, we compute noiseless results for each block and then aggregate them in a DP manner. However, our method differs from existing SA methods in several aspects. In our model, the data are already distributed, and our method ensures the privacy of each entire block, as opposed to existing SA methods that only protect the privacy of each element;  SA was originally proposed in \cite{kobbi_sofya_adam.2007} to compute the smooth sensitivity of statistics, which has a different motivation from our method; In \cite{adam.2011}, SA was advocated to achieve the same efficiency as non-private statistics. However, their method is limited to low-dimensional statistics and does not extend well to high-dimensional problems like model selection. In \cite{kifer_smith_thakurta.2012colt}, SA was employed for model selection with a voting mechanism as the aggregator. In contrast, our method uses a different voting rule, and our theoretical results are much stronger than theirs. For a comprehensive comparison of theoretical results, please refer to the preceding two paragraphs.


\subsection{Paper Organization and Notations}
The remainder of this paper is organized as follows. In Section \ref{sec:ptr_recov}, we introduce a general Majority Vote procedure and develop the DGDP algorithm. In Section \ref{sec:mean_recov}, we apply this to the private sign recovery problem for sparse mean estimation, in a distributed system. The theoretical results on sign selection consistency are provided.  In Section \ref{sec:lasso_recov}, we study the sign recovery problem for sparse linear regression and simultaneously state the sign selection consistency results. The outcomes of simulation studies are presented in Section \ref{sec:sim}, to demonstrate the proposed method's effectiveness. Finally, concluding remarks are provided in Section \ref{sec:conclude}. All proofs of the theory are relegated to the Appendix.

For every vector $\vect{v}=(v_1,...,v_p)^T$, define $|\vect{v}|_2=\sqrt{\sum_{l=1}^pv_l^2}$, $|\vect{v}|_1=\sum_{l=1}^p|v_l|$, and $|\vect{v}|_{\infty}=\max_{1\leq l\leq p}|v_l|$. Moreover, we use $\supp(\vect{v})=\{l\mid v_l\neq 0, 1\leq l\leq p\}$ to denote the support of the vector $\vect{v}$, and define $\vect{v}_{-l}=(v_1,\dots,v_{l-1},v_{l+1},\dots,v_p)^{\tp}$. For each matrix $\vect{A}\in\mbR^{p_1\times p_2}$, $\Norm{\vect{A}}=\sup_{|\vect{v}|_2=1}|\vect{A}\vect{v}|_2$, $\Norm{\vect{A}}_{\infty}=\max_{1\leq l_1\leq p_1,1\leq l_2\leq p_2}|A_{l_1,l_2}|$ and $\Norm{\vect{A}}_{L_{\infty}}=\sup_{|\vect{v}|_{\infty}=1}|\vect{A}\vect{v}|_{\infty}$. Furthermore, let $\Lambda_{\max}(\vect{A})$ and $\Lambda_{\min}(\vect{A})$ denote the largest and smallest eigenvalues of $\vect{A}$, respectively. We use $\mbI(\cdot)$ to denote the indicator function, and $\sgn(\cdot)$ to denote the sign function. For two sequences $\{a_n\},\{b_n\}$, we say $a_n\asymp b_n$ if $a_n=O(b_n)$ and $a_n = \Theta(b_n)$ hold simultaneously. For simplicity, we let $\mbS^{p-1}$ and $\mbB^{p}$ denote the unit sphere and the unit ball in $\mathbb{R}^p$ centered on $\vect{0}$. For a sequence of vectors $\{\vect{v}_i\}_{i=1}^n\subseteq\mbR^p$, we define $\med(\cdot)$ as the coordinate-wise median. Finally, the generic constants were assumed to be independent of $m,n,$ and $p$. We define $[p]$ to be the set $\{1,\dots,p\}$.  


\section{A General  Majority Vote Procedure for Privacy Preserving Sign Recovery} \label{sec:ptr_recov}

In this section, we introduce the Majority Vote procedure in a distributed system. Suppose that local machines estimate sgn($\vect{\theta}^{*}$) using the local dataset $\{\vect{X}_i,i\in\mcH_{j}\}$ separately and send the results to a trustworthy server. The server then aggregates the initial estimators received and provides a final result to the user. In this process, two pivotal procedures require careful design. First, the aggregation algorithm on the server integrates the sign information from initial estimators as much as possible. Moreover,  a specific noise-addition technique was developed for DGDP. This requires an innovative way to integrate existing DP methods. Second, local machines require a feasible approach to generate good initial estimators. A natural strategy to address this problem is the regularization technique. However, in the distributed setting, we argue in Section \ref{sec:mean_recov} that the classical choice of regularization parameters can not attain the minimax optimal rate at the signal level. Therefore, careful design of the regularization parameter is important.

\subsection{ A Majority Vote Procedure}

First, we formally introduce the Majority Vote procedure. Let $\vect{Q}_j^c$ be the sign vector obtained by the local machine $\mcH_j$ based on $\{\vect{X}_i,i\in\mcH_{j}\}$ (where $1\leq j\leq m$). The server then has a $p\times m$ sign matrix $\mbQ=(Q_{l,j}) = (\vect{Q}^c_1,\dots,\vect{Q}^c_m)$ where each $Q_{l,j}$ takes a value of $\{1,-1,0\}$, representing positive, negative and null, respectively. Here $\vect{Q}^c_j=(Q_{1,j}, \dots,Q_{p,j})^{\tp}$ denotes each column vector received from the $j$-th local machine (where $1\leq j\leq m$). For each coordinate $1\leq l\leq p$, we consider the Majority Vote with the row vector $\vect{Q}_l^{r}=(Q_{l,1},\dots,Q_{l,m})$. The mechanism is as follows. When more than half of the elements in $\vect{Q}_l^r$ have values of $1$ (or $-1$), the output is this value. Otherwise, it returns null ($0$). Formally, for each row $\vect{Q}^r_l$, we compute the number of positives, negatives, and nulls as follows:
\begin{equation}	\label{eq:pnz_number}
	N^+_l=\sum_{j=1}^m\mbI(Q_{l,j}=1),\quad N^-_l=\sum_{j=1}^m\mbI(Q_{l,j}=-1),\quad N^0_l=\sum_{j=1}^m\mbI(Q_{l,j}=0).
\end{equation}
Then $N_l^++N_l^-+N_l^0=m$ always holds, and the Majority Vote of $\vect{Q}^r_l$ can be equivalently computed as
\begin{equation*}
	\majvote\{\vect{Q}^r_l\}=
	\begin{cases}
		1\quad&\text{ if }N_l^+\geq N_l^0+N_l^-+1,\\
		-1\quad&\text{ if }N_l^-\geq N_l^0+N_l^++1,\\
		0\quad&\text{ otherwise}.
	\end{cases}
\end{equation*}
Let $\bar{\vect{Q}}$ be the  resulting vector of $\mbQ$ by the Majority Vote, that is,
\begin{eqnarray}\label{eq:tt}
    \bar{\vect{Q}} = (\majvote\{\vect{Q}^r_l\}, 1\leq l\leq p)^{\rm T}.
\end{eqnarray}
The merit of the Majority Vote will be illustrated in Section 3.1.  For the mean model and linear regression model, we show that
the vector $\bar{\vect{Q}}$ has sign selection consistency under some weak conditions (see Propositions \ref{thm:priv_mean_supp} and \ref{thm:priv_reg_supp} in the Appendix).
Although $\bar{\vect{Q}}$ has already protected the privacy of the local dataset in some sense (note that local machines only send $\{1,-1,0\}$ to the server), it does not satisfy DGDP in Definition \ref{def:dist_dp}. To solve this problem,   we integrate the Majority Vote with the Peeling algorithm and exponential mechanism and develop a distributed group differentially private algorithm. Furthermore, we prove that the new algorithm can still exhibit sign selection consistency. 

Before introducing our DGDP algorithm, we first review some classical definitions of the sensitivity function, exponential mechanism, and composition theorem for differential privacy.
For simplicity, we define $\bar{S}$ as the index set of nonzero elements in $\bar{\vect{Q}}$ and let $\bar{s}=|\bar{S}|$.


\subsection{Privacy Preliminaries}  \label{sec:priv_pre}

Popularly, a key step to developing a differentially private algorithm is to impose additional randomness on the selection procedure. The scale of such randomness is typically determined by the global sensitivity of the algorithm.

\begin{definition}[Global Sensitivity]
	Given an algorithm $\mcA:\mcX^N\rightarrow\Theta$, the global sensitivity of $\mcA$ is defined as
	\begin{equation*}
		\mathrm{GS}_{\mcA}=\sup\big\{|\mcA(\mbX)-\mcA(\mbX')|:\mbX,\mbX'\in\mcX^N, H(\mbX,\mbX')=1\big\},
	\end{equation*}
	where $H(\cdot,\cdot)$ denotes the Hamming distance.
\end{definition}

For the global sensitivity for DGDP in the distributed setting, we let $H(\mbX,\mbX')=1$ denote that all elements are the same except for one local machine.    
Global sensitivity measures the magnitude of the change in the output of $\mcA$, resulting from replacing one element (or one group of elements) in the input dataset. Intuitively, when the noise is scaled proportionally to the global sensitivity, the output of the differentially private version of $\mcA$ is relatively stable, regardless of the presence or absence of any individual data in the dataset. 
Indeed, we have the following result.
\begin{lemma}[The Laplace mechanism, Theorem 3.6 of \cite{dwork_aaron.2014}]
    For any algorithm $\mcA$ satisfying $\mathrm{GS}_{\mcA}<\infty$, $\mcA_1=\mcA + g$, where $g$ is sampled from $\laplace(\mathrm{GS}_{\mcA}/\epsilon)$, achieves $(\epsilon,0)$-differential privacy.   
\end{lemma}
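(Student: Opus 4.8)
The plan is to compare, pointwise, the output densities of the perturbed algorithm on two adjacent datasets and to show that their ratio never exceeds $e^{\epsilon}$. First I would fix an arbitrary pair $\mbX,\mbX'\in\mcX^N$ with $H(\mbX,\mbX')=1$, abbreviate $a=\mcA(\mbX)$, $a'=\mcA(\mbX')$, and write $b=\mathrm{GS}_{\mcA}/\epsilon$ for the noise scale. Since $g\sim\laplace(b)$ has density $t\mapsto\frac{1}{2b}\exp(-|t|/b)$, the shifted outputs $\mcA_1(\mbX)=a+g$ and $\mcA_1(\mbX')=a'+g$ admit the densities
\begin{equation*}
	p_{\mbX}(t)=\frac{1}{2b}\exp\!\Big(-\frac{|t-a|}{b}\Big),\qquad
	p_{\mbX'}(t)=\frac{1}{2b}\exp\!\Big(-\frac{|t-a'|}{b}\Big).
\end{equation*}

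The key step is the pointwise density-ratio bound. For every $t$ I would write
\begin{equation*}
	\frac{p_{\mbX}(t)}{p_{\mbX'}(t)}
	=\exp\!\Big(\frac{|t-a'|-|t-a|}{b}\Big)
	\leq\exp\!\Big(\frac{|a-a'|}{b}\Big)
	\leq\exp\!\Big(\frac{\mathrm{GS}_{\mcA}}{b}\Big)=e^{\epsilon},
\end{equation*}
where the first inequality is the reverse triangle inequality $|t-a'|-|t-a|\leq|a'-a|$, and the second uses $|a-a'|=|\mcA(\mbX)-\mcA(\mbX')|\leq\mathrm{GS}_{\mcA}$ from the definition of global sensitivity together with $b=\mathrm{GS}_{\mcA}/\epsilon$. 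Note the bound is uniform in $t$, which is precisely what forces the additive slack term to vanish, i.e.\ $\delta=0$.

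Finally I would integrate this pointwise bound against any measurable set. For arbitrary measurable $U\subseteq\Theta$,
\begin{equation*}
	\mbP\big\{\mcA_1(\mbX)\in U\big\}=\int_U p_{\mbX}(t)\,\diff t
	\leq e^{\epsilon}\int_U p_{\mbX'}(t)\,\diff t
	=e^{\epsilon}\,\mbP\big\{\mcA_1(\mbX')\in U\big\},
\end{equation*}
which is exactly the $(\epsilon,0)$-differential privacy guarantee. Since $\mbX,\mbX'$ were an arbitrary adjacent pair, this completes the argument; if $\mcA$ is vector-valued, the same reasoning applies coordinatewise with $|a-a'|$ read as the $\ell_1$-sensitivity and independent $\laplace(b)$ noise in each coordinate. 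I do not expect a genuine obstacle here, as this is the classical Laplace-mechanism argument; the only point requiring care is recognizing that the \emph{uniform} (in $t$) density-ratio bound is what yields $\delta=0$ rather than a weaker $(\epsilon,\delta)$ statement.
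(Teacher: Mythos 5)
Your proof is correct and is exactly the standard argument for Theorem 3.6 of \cite{dwork_aaron.2014}, which the paper itself does not reprove but simply cites: a pointwise Laplace density-ratio bound via the reverse triangle inequality and the sensitivity bound, then integration over an arbitrary measurable set. Your closing remarks on the uniform-in-$t$ bound yielding $\delta=0$ and on the coordinatewise $\ell_1$ extension for vector-valued $\mcA$ are also accurate.
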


Here, $\laplace(b)$ denotes the Laplace distribution with the density function $\frac{1}{2b}\exp(-|x|/b)$ for $x\in (-\infty,\infty)$, and the scale parameter $b>0$.
Our algorithm acts on every coordinate as the composition of multiple actions, aiding us to introduce two composition theorems for the convenient construction of our algorithm.

\begin{lemma}[Composition theorem, Theorem B.1 of \cite{dwork_aaron.2014}]	\label{lem:two_comp}
	Let $\mcA_1:\mcX^n\rightarrow\Theta_1$ be an $(\epsilon_1,\delta_1)$-differentially private mechanism, and $\mcA_2:(\mcX^n,\Theta_1)\rightarrow\Theta_2$ be an $(\epsilon_2,\delta_2)$-differentially private mechanism for every fixed element $\theta_1\in\Theta_1$. Then, the composition of the mappings, $(\mcA_1,\mcA_2):\mcX^n\rightarrow(\Theta_1,\Theta_2)$ by mapping $\mbX\in\mcX^n$ to $(\mcA_1(\mbX),\mcA_2(\mbX,\mcA_1(\mbX)))$, is an $(\epsilon_1+\epsilon_2,\delta_1+\delta_2)$-differentially private mechanism.
\end{lemma}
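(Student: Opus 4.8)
The plan is to reduce the statement to a single inequality between the two joint output distributions and then exploit the product structure of the adaptive composition. Fix an arbitrary pair of adjacent datasets $\mbX,\mbX'\in\mcX^n$ and a measurable set $U\subseteq\Theta_1\times\Theta_2$; it suffices to show that $\mbP\{(\mcA_1(\mbX),\mcA_2(\mbX,\mcA_1(\mbX)))\in U\}\le e^{\epsilon_1+\epsilon_2}\,\mbP\{(\mcA_1(\mbX'),\mcA_2(\mbX',\mcA_1(\mbX')))\in U\}+\delta_1+\delta_2$. Write $\mu$ (resp.\ $\mu'$) for the law of $\mcA_1(\mbX)$ (resp.\ $\mcA_1(\mbX')$) on $\Theta_1$, and for each fixed $\theta_1$ write $\nu_{\theta_1}$ (resp.\ $\nu'_{\theta_1}$) for the law of $\mcA_2(\mbX,\theta_1)$ (resp.\ $\mcA_2(\mbX',\theta_1)$) on $\Theta_2$. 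Denoting the section $U_{\theta_1}=\{\theta_2:(\theta_1,\theta_2)\in U\}$, the law of total probability expresses the left-hand probability as $\int_{\Theta_1}\nu_{\theta_1}(U_{\theta_1})\,\mu(d\theta_1)$, and likewise for $\mbX'$; a routine check that $\theta_1\mapsto\nu_{\theta_1}(U_{\theta_1})$ is measurable is needed here but is standard.

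The crux is a density decomposition for $\mcA_1$. Let $\lambda=\mu+\mu'$ and let $f=d\mu/d\lambda$, $g=d\mu'/d\lambda$ be the corresponding densities. Define the ``good'' density $f_{g}=\min(f,e^{\epsilon_1}g)$ and the ``excess'' density $f_{b}=(f-e^{\epsilon_1}g)_+$, so that $f=f_g+f_b$, and let $\mu_g,\mu_b$ be the measures with these densities. By construction $\mu_g(S)=\int_S f_g\,d\lambda\le e^{\epsilon_1}\int_S g\,d\lambda=e^{\epsilon_1}\mu'(S)$ for every $S$. Moreover, applying the $(\epsilon_1,\delta_1)$-differential privacy of $\mcA_1$ to the set $B=\{f>e^{\epsilon_1}g\}$ gives $\mu(B)\le e^{\epsilon_1}\mu'(B)+\delta_1$, whence the total excess mass satisfies $\mu_b(\Theta_1)=\int_B(f-e^{\epsilon_1}g)\,d\lambda=\mu(B)-e^{\epsilon_1}\mu'(B)\le\delta_1$. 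Thus $\mu=\mu_g+\mu_b$ with $\mu_g(\cdot)\le e^{\epsilon_1}\mu'(\cdot)$ and $\mu_b(\Theta_1)\le\delta_1$; crucially, $\mu_b$ is an excess-density measure rather than the restriction of $\mu$ to a set, which is precisely what makes the additive budget come out cleanly.

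It remains to assemble the pieces. Splitting $\int_{\Theta_1}\nu_{\theta_1}(U_{\theta_1})\,\mu(d\theta_1)$ along $\mu=\mu_g+\mu_b$ and using $0\le\nu_{\theta_1}(U_{\theta_1})\le1$ bounds the $\mu_b$-contribution by $\mu_b(\Theta_1)\le\delta_1$. For the $\mu_g$-part, apply the $(\epsilon_2,\delta_2)$-differential privacy of $\mcA_2$ at each fixed $\theta_1$, i.e.\ $\nu_{\theta_1}(U_{\theta_1})\le e^{\epsilon_2}\nu'_{\theta_1}(U_{\theta_1})+\delta_2$; since $\mu_g(\Theta_1)\le\mu(\Theta_1)=1$, the additive term contributes at most $\delta_2$, while the multiplicative term is controlled using $\mu_g(\cdot)\le e^{\epsilon_1}\mu'(\cdot)$ (which upgrades from indicators to nonnegative integrands by a layer-cake argument) to give $e^{\epsilon_2}\int\nu'_{\theta_1}(U_{\theta_1})\,\mu_g(d\theta_1)\le e^{\epsilon_1+\epsilon_2}\int\nu'_{\theta_1}(U_{\theta_1})\,\mu'(d\theta_1)$, and the last integral is exactly the right-hand probability. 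Collecting terms yields the claimed $(\epsilon_1+\epsilon_2,\delta_1+\delta_2)$ guarantee. The one genuinely delicate step is the decomposition: the naive attempt to realize the ``bad event'' as a subset $B$ with $\mu(B)\le\delta_1$ fails, and one must instead split the density so that the discarded mass equals $\int(f-e^{\epsilon_1}g)_+\,d\lambda$; everything else is bookkeeping with disintegration and the extension of the privacy inequality to $[0,1]$-valued integrands.
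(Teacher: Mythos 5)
Your proof is correct. The paper does not prove this lemma at all---it is imported verbatim as Theorem~B.1 of \cite{dwork_aaron.2014}---and your argument is essentially the standard one from that reference: the density decomposition $f=\min(f,e^{\epsilon_1}g)+(f-e^{\epsilon_1}g)_+$, with the excess mass bounded by $\delta_1$ via the privacy inequality applied to $B=\{f>e^{\epsilon_1}g\}$, followed by disintegration over the first coordinate, is exactly the mechanism behind the cited result, and your flagged subtlety (that the ``bad'' part must be excess density rather than the restriction of $\mu$ to a small set) is the genuinely delicate point, correctly handled.
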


\begin{lemma}[Advanced composition theorem, Corollary 3.21 in \cite{dwork_aaron.2014}]	\label{thm:adv_comp}
	For all $\epsilon\in(0,1),\delta,\delta'\in[0,1]$, the $k$-fold adaptive composition of the class of $(\epsilon,\delta)$-differentially private mechanisms preserves $(2\sqrt{2k\log(1/\delta')}\epsilon, k\delta+\delta')$-differential privacy.
\end{lemma}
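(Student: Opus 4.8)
The plan is to control the \emph{privacy loss random variable} of the composed mechanism and show that it concentrates. Write the $k$-fold adaptive composition as $\mcA=(\mcA_1,\dots,\mcA_k)$, where $\mcA_i$ may depend on the earlier outputs $o_{<i}=(o_1,\dots,o_{i-1})$, and fix an adjacent pair $\mbX,\mbX'$. For an output sequence $o=(o_1,\dots,o_k)$ define the privacy loss
\begin{equation*}
  c(o)=\ln\frac{\mbP\{\mcA(\mbX)=o\}}{\mbP\{\mcA(\mbX')=o\}}=\sum_{i=1}^{k}\underbrace{\ln\frac{\mbP\{\mcA_i=o_i\mid o_{<i},\mbX\}}{\mbP\{\mcA_i=o_i\mid o_{<i},\mbX'\}}}_{=:c_i},
\end{equation*}
the chain rule being exact because the only coupling between the steps is through $o_{<i}$. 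I would first carry out the standard reduction from the ``$(\epsilon,\delta)$'' regime to the ``$(\epsilon,0)$'' regime: each $(\epsilon,\delta)$-DP mechanism agrees, on the fixed adjacent pair, with an $(\epsilon,0)$-DP mechanism outside an event of probability at most $\delta$, so union-bounding across the $k$ steps discards an event of probability at most $k\delta$, which is precisely the additive term in the final $\delta$ budget. On the complementary event I may then assume every $c_i$ arises from an $(\epsilon,0)$-DP step.

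Next I would establish the two moment estimates that drive the concentration. Conditioning on $o_{<i}$, the $(\epsilon,0)$ property gives the deterministic bound $|c_i|\leq\epsilon$. Moreover, since $\sum_{o_i}\mbP\{\mcA_i=o_i\mid o_{<i},\mbX\}e^{-c_i}=\sum_{o_i}\mbP\{\mcA_i=o_i\mid o_{<i},\mbX'\}=1$, the conditional law of $c_i$ under $\mcA(\mbX)$ satisfies $\mbE[e^{-c_i}\mid o_{<i}]=1$. Maximizing $\mbE[c_i\mid o_{<i}]$ over distributions supported on $[-\epsilon,\epsilon]$ subject to this constraint is an extremal two-point problem whose optimum splits mass between $\pm\epsilon$; solving it yields $\mbE[c_i\mid o_{<i}]\leq\epsilon\tanh(\epsilon/2)\leq\epsilon^2/2$. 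Hence the total expected loss is at most $k\epsilon^2/2$, a term of lower order in the regime $\epsilon\in(0,1)$.

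I would then center the increments and apply Azuma--Hoeffding. The variables $\xi_i=c_i-\mbE[c_i\mid o_{<i}]$ form a martingale difference sequence with respect to the filtration generated by $o_{<i}$ and satisfy $|\xi_i|\leq 2\epsilon$, so
\begin{equation*}
  \mbP\Big\{\sum_{i=1}^{k}\xi_i\geq t\Big\}\leq\exp\!\Big(-\frac{t^2}{2k\epsilon^2}\Big).
\end{equation*}
Choosing $t=\sqrt{2k\ln(1/\delta')}\,\epsilon$ makes the right-hand side equal to $\delta'$, so with probability at least $1-\delta'$ the total loss obeys $c\leq k\epsilon^2/2+\sqrt{2k\ln(1/\delta')}\,\epsilon\leq 2\sqrt{2k\ln(1/\delta')}\,\epsilon=:\epsilon'$, the last step absorbing the $O(k\epsilon^2)$ drift into the fluctuation term in the stated parameter regime. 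Finally I would invoke the standard equivalence that a mechanism whose privacy loss exceeds $\epsilon'$ with probability at most $\delta''$ (in both directions, for every adjacent pair) is $(\epsilon',\delta'')$-DP; summing the two discarded probabilities, $k\delta$ from the reduction and $\delta'$ from Azuma, delivers the claimed $(\,\epsilon',\,k\delta+\delta'\,)$-DP guarantee.

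The main obstacle I anticipate is the $(\epsilon,\delta)\!\to\!(\epsilon,0)$ reduction in the first step: it requires the coupling lemma that represents an $(\epsilon,\delta)$ mechanism as an $(\epsilon,0)$ mechanism perturbed on a $\delta$-probability event, and one must track this event \emph{symmetrically} for both $\mcA(\mbX)$ and $\mcA(\mbX')$ so that the failure probabilities from the reduction and from the Azuma tail combine additively into $k\delta+\delta'$ rather than interacting. Getting the direction of the privacy-loss inequality right on both sides, and using the full identity $\mbE[e^{-c_i}]=1$ rather than a one-sided likelihood-ratio bound so that the two-point extremal computation is justified, are the delicate points; the remaining steps are routine once these are in place.
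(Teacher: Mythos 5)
The paper offers no proof of this lemma: it is imported verbatim as Corollary~3.21 of \cite{dwork_aaron.2014}, so the only meaningful benchmark is the source's proof (Dwork--Rothblum--Vadhan). Your proposal is essentially that proof, correctly reconstructed: the privacy-loss chain rule, the $(\epsilon,\delta)\to(\epsilon,0)$ coupling reduction contributing the additive $k\delta$, a per-step bound on the expected loss, and Azuma for the centered increments. Your two-point extremal computation is sound (the replacement of any atom $x\in(-\epsilon,\epsilon)$ by a $\{\pm\epsilon\}$ mixture preserving $\mbE[e^{-c_i}]=1$ weakly increases the mean, by concavity of the resulting gap function), and the bound $\mbE[c_i\mid o_{<i}]\leq\epsilon\tanh(\epsilon/2)\leq\epsilon^2/2$ is in fact sharper than the $\epsilon(e^\epsilon-1)$ used in the book's Theorem~3.20.

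Two points deserve care. First, the Azuma constant: from the stated justification $|\xi_i|\leq2\epsilon$ with the crude form of Azuma you only get $\exp\{-t^2/(8k\epsilon^2)\}$, which at $t=\sqrt{2k\ln(1/\delta')}\,\epsilon$ gives $(\delta')^{1/4}$ rather than $\delta'$, and your final absorption step then has no slack left. You need the range form of the Hoeffding--Azuma lemma: conditionally on $o_{<i}$, $\xi_i$ lies in an interval of \emph{length} $2\epsilon$, so $\mbE[e^{\lambda\xi_i}\mid o_{<i}]\leq e^{\lambda^2\epsilon^2/2}$, which does yield the tail $\exp\{-t^2/(2k\epsilon^2)\}$ you wrote. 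Second, the absorption $k\epsilon^2/2\leq\sqrt{2k\ln(1/\delta')}\,\epsilon$ requires $\epsilon\leq2\sqrt{2\ln(1/\delta')/k}$, which is \emph{not} implied by $\epsilon\in(0,1)$; indeed, for fixed $\epsilon$ and $k\to\infty$ the composed loss genuinely grows like $k\epsilon^2$, exceeding $2\sqrt{2k\log(1/\delta')}\,\epsilon$, so the lemma as restated in the paper is slightly too strong if read for all $\epsilon\in(0,1)$. The missing constraint is implicit in Corollary~3.21 of \cite{dwork_aaron.2014}, where $\epsilon=\epsilon'/(2\sqrt{2k\ln(1/\delta')})$ with $\epsilon'<1$; your hedge ``in the stated parameter regime'' is exactly where this enters, and the regime does hold in this paper's application of the lemma, where the per-step budget $\epsilon'=\epsilon/(4\sqrt{2\tilde{s}\log(2/\delta)})$ is small by construction. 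With these two clarifications made explicit, your proof is correct and is the standard argument behind the cited result.
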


We refer to Section 3.5.2 of \cite{dwork_aaron.2014} for a more detailed introduction to the definition of differential privacy, under the $k$-fold adaptive composition. Invariably, the two aforementioned composition theorems still hold for the DGDP.

Note that our goal is to recover the sign vector, whereas the direct addition of Laplace noise to $\bar{\vect{Q}}$ can completely destroy its value.
 It is well known that when the algorithm takes values with some special structure $\Theta$ (e.g. $\Theta=\{1,-1,0\}$ in sign recovery), we can apply the exponential mechanism invented by \cite{mcsherry_talwar.2007}. 
 The exponential mechanism maintains the same structure as $\Theta$ for the final output. In more detail,
 this mechanism assigns each pair of values in $\Theta$ and the dataset $\mbX$, with a utility function $f:\mcX^n\times\Theta\rightarrow\mbR$, and generates a noisy output according to the global sensitivity of the utility function. In particular, we define 
\begin{eqnarray}\label{ut}
	\Delta f = \max_{\theta\in\Theta}\max_{H(\mbX,\mbX')=1}|f(\mbX,\theta)-f(\mbX',\theta)|.
\end{eqnarray}
The exponential mechanism outputs a result $\theta\in\Theta$ with probability proportional to $\exp(\epsilon f(\mbX,\theta)/(2\Delta f))$, that is,
\begin{eqnarray*}
\mbP\Big{(}\text{OUTPUT}=\theta\Big{)}=c\exp(\epsilon f(\mbX,\theta)/(2\Delta f)),
\end{eqnarray*}
where $c$ is the scale constant of the probability distribution.
We refer the reader to Section 3.4 in \cite{dwork_aaron.2014} for more details about the exponential mechanism. The following lemma provides the privacy guarantee for the exponential mechanism.

\begin{lemma}(Theorem 3.10 in \cite{dwork_aaron.2014})	\label{lem:exp_mech}
	The exponential mechanism preserves $(\epsilon,0)$-differential privacy.
\end{lemma}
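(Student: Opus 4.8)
The plan is to verify the definition of $(\epsilon,0)$-differential privacy directly from the closed-form output distribution of the exponential mechanism. Since the output space is the finite set $\Theta=\{1,-1,0\}$ here (and the argument works verbatim for any discrete $\Theta$), it suffices to bound the ratio of output probabilities pointwise: if $\mbP(\mcA(\mbX)=\theta)\leq e^{\epsilon}\,\mbP(\mcA(\mbX')=\theta)$ for every $\theta\in\Theta$ and every adjacent pair with $H(\mbX,\mbX')=1$, then summing over any measurable $U\subseteq\Theta$ immediately yields the defining inequality $\mbP(\mcA(\mbX)\in U)\leq e^{\epsilon}\mbP(\mcA(\mbX')\in U)$ with $\delta=0$.

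Next I would write out this pointwise ratio using the explicit density $\mbP(\mcA(\mbX)=\theta)=\exp(\epsilon f(\mbX,\theta)/(2\Delta f))/\sum_{\theta'\in\Theta}\exp(\epsilon f(\mbX,\theta')/(2\Delta f))$, and factor it as a product of two quotients: the ratio of the numerators, $\exp(\epsilon f(\mbX,\theta)/(2\Delta f))/\exp(\epsilon f(\mbX',\theta)/(2\Delta f))$, and the inverse ratio of the normalizing sums, $\big(\sum_{\theta'}\exp(\epsilon f(\mbX',\theta')/(2\Delta f))\big)/\big(\sum_{\theta'}\exp(\epsilon f(\mbX,\theta')/(2\Delta f))\big)$.

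For the first factor, the definition \eqref{ut} of $\Delta f$ gives $f(\mbX,\theta)-f(\mbX',\theta)\leq\Delta f$, so the numerator ratio is at most $\exp(\epsilon\Delta f/(2\Delta f))=e^{\epsilon/2}$. For the second factor I would apply the same sensitivity bound term by term inside the sum: since $\exp(\epsilon f(\mbX',\theta')/(2\Delta f))\leq e^{\epsilon/2}\exp(\epsilon f(\mbX,\theta')/(2\Delta f))$ for each $\theta'$, the ratio of the two normalizers is also at most $e^{\epsilon/2}$. Multiplying the two bounds gives the overall likelihood ratio $\leq e^{\epsilon/2}\cdot e^{\epsilon/2}=e^{\epsilon}$, which is precisely the claim.

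The proof is short and presents no real obstacle; the only point worth flagging is that the factor of $2$ in the exponent is essential. It splits the privacy budget evenly between the shift in the chosen utility and the shift in the normalizing constant, so that each contributes exactly $e^{\epsilon/2}$; using $\Delta f$ in place of $2\Delta f$ would only deliver $(2\epsilon,0)$-DP. For the DGDP version invoked later in the paper, $H(\mbX,\mbX')=1$ is read as ``the two datasets differ only on a single local machine'' and $\Delta f$ in \eqref{ut} is computed with respect to that notion of adjacency; the argument above is otherwise unchanged.
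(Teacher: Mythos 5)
Your proof is correct and is exactly the standard McSherry--Talwar argument that the paper invokes by citation (Theorem 3.10 of Dwork and Roth, 2014, is proved there by the same numerator/normalizer factorization, with each factor bounded by $e^{\epsilon/2}$ via the sensitivity $\Delta f$). Your closing remarks --- that the factor $2$ in $2\Delta f$ is what splits the budget between the two factors, and that the DGDP version follows by reinterpreting $H(\mbX,\mbX')=1$ as differing on one local machine --- match the paper's own discussion following the lemma, so there is nothing to add.
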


The exponential mechanism is also $(\epsilon,0)$-DGDP when we modify $H(\mbX,\mbX')=1$ in (\ref{ut}) to denote that all elements are the same, except for one local machine. In fact, we can view the samples in one local machine as one higher-dimensional vector. Thus, all the conclusions on DP hold naturally for DGDP.
The above lemma provides us with a method to modify the Majority Vote to a differentially private counterpart.
 We must leverage the knowledge of the sparseness of the true parameter. If we ignore the sparseness and apply the DP algorithm to every coordinate, by Lemma \ref{thm:adv_comp}, we know that each coordinate should preserve $(\tilde{\epsilon},\tilde{\delta})$ privacy, where $\tilde{\epsilon}\asymp \frac{\epsilon}{\sqrt{p\log(1/\delta)}}$. The noise level imposed on each coordinate should then be scaled as $\frac{\sqrt{p\log(1/\delta)}}{\epsilon}$, which may significantly deteriorate the performance when $p$ is large. Therefore, it is important to use sparseness to reduce the level of noise and improve performance. To this end, we use the Peeling algorithm, which is typically used in the problem of differentially private top-k selection and was also used in \cite{dwork_aaron.2014, dwork_su_li.2018differentially, cai_wang_zhang.2019}. Using the Peeling algorithm, we can reduce the dimension and select a subset of $\{1,2,...,p\}$ that asymptotically covers the true support. The challenge is to
construct an efficient utility function $f$ that is required in both the Peeling algorithm and exponential mechanism. That is, we must construct a measure of the closeness of the dataset $\mbQ$ to the discrete set $\{-1,0,1\}$. In the next section, we solve these problems by introducing the concept of a stability function.


\subsection{Private Sign Selection According to Majority Vote}	\label{sec:medsign}	

Recall the definitions of $\{N^+_l,N^+_l,N^0_l\}$ given in \eqref{eq:pnz_number}.
To solve the questions addressed at the end of the previous section, for each row $\vect{Q}^r_l$ where $1\leq l\leq p$, we define the stability level of dataset $\vect{Q}^r_l$ with respect to the $\majvote(\cdot)$ function as the minimal number of elements in $\vect{Q}^r_l$ which need to be flipped to change the value of $\majvote\{\vect{Q}^r_l\}$. This can also be explicitly computed as:
\begin{equation}	\label{eq:stab_fun}
	f^S(\vect{Q}^r_l)=
	\begin{cases}
	N^+_l- N^0_l-N^-_l\quad&\text{if }\majvote\{\vect{Q}^r_l\}=1,\\
	N^-_l- N^0_l-N^+_l\quad&\text{if }\majvote\{\vect{Q}^r_l\}=-1,\\
	-\min\big\{N^+_l+ N^0_l-N^-_l, N^-_l+ N^0_l-N^+_l\big\} \quad&\text{if }\majvote\{\vect{Q}^r_l\}=0.
	\end{cases}
\end{equation} 
Note that we assign the stability $f^S(\vect{Q}^r_l)$ with a minus sign for $\majvote\{\vect{Q}^r_l\}=0$ because we only want to select the nonzero elements.

\begin{algorithm}[!t]
	\caption{{\small Peeling ($\peel(\mbQ,\tilde{s},\epsilon,\delta)$)}}
	\label{alg:suppselect}
	\hspace*{\algorithmicindent} \hspace{-0.5cm}   {\textbf{Input:} The set of signs $\mbQ=\{\vect{Q}^c_1,...,\vect{Q}^c_m\}$; the number of selections $\tilde{s}$; privacy level $(\epsilon,\delta)$; initial $\tilde{S}=\emptyset$.} 	
	\begin{algorithmic}[1]
		\FOR{$1\leq l\leq p$}
		\STATE Compute the stability level $f^S(\vect{Q}_l^r)$ based on \eqref{eq:stab_fun}.
		\ENDFOR
		\FOR{$1\leq t\leq \tilde{s}$}
		\STATE Generate $g_{t,1},\dots,g_{t,p}\sim\laplace(4\sqrt{2\tilde{s}\log(1/\delta)}/\epsilon)$;
		\STATE Add $l^*=\argmax{l\in[p]\backslash\tilde{S}}f^S(\vect{Q}_l^r)+g_{t,l}$ to $\tilde{S}$.
		\ENDFOR
	\end{algorithmic}
	 \textbf{Output:} Return the sets $\tilde{S}$.
\end{algorithm}
To select a subset that covers the support of $\bar{\vect{Q}}$ in a private way, we leverage the $\peel$ algorithm with the stability function $f^S(\vect{Q}_l^r)$.  This procedure is presented in Algorithm \ref{alg:suppselect}. Note that in this algorithm we repeatedly select the maximal element in a private manner. Recently, \cite{qiao_su_li.2021icml} proposed a one-shot $\mathrm{Top}$-$\tilde{s}$ selection algorithm that reduces computational burden. However, it uses a larger scale of noise, and theoretically only accepts small $\epsilon$ and $\delta$ (namely, $\epsilon\leq 0.2$ and $\delta\leq 0.05$). Thus, it is not adopted in our study.

The Peeling algorithm finds a subset $\tilde{S}$ that asymptotically covers the support of $\bar{\vect{Q}}$. Notably, this does not generate the signs for each coordinate.  Therefore, after generating the set $\tilde{S}$ for each coordinate $l$ (where $1\leq l\leq p$), we apply the exponential mechanism by defining the following utility function:
\begin{equation}    \label{eq:exp_util}
    \begin{cases}
        f_l(\vect{Q}_l^r,1)=N_l^+-N_l^0-N_l^-,\\
        f_l(\vect{Q}_l^r,-1)=N_l^--N_l^0-N_l^+,\\
        f_l(\vect{Q}_l^r,0)=\min\big\{N^+_l+ N^0_l-N^-_l, N^-_l+ N^0_l-N^+_l\big\}.
    \end{cases}
\end{equation}
Using this utility function, we can easily verify that the global sensitivity in (\ref{ut}) is $\Delta f=2$. That is, changing the entire dataset in one local machine can at most change the value of $f$ by $\pm 2$. This results in the realization of our private Majority Vote procedure as presented in Algorithm \ref{alg:medsign}.

\begin{algorithm}[!t]
	\caption{{\small Differentially Private Majority Vote for Sign Recovery ($\dpvote(\mbQ,\tilde{s},\epsilon,\delta)$)}}
	\label{alg:medsign}
	\hspace*{\algorithmicindent} \hspace{-0.5cm}   {\textbf{Input:} The set of signs $\mbQ=\{\vect{Q}^c_1,...,\vect{Q}^c_m\}$; the number of selections $\tilde{s}$; privacy level $(\epsilon,\delta)$.} 	
	\begin{algorithmic}[1]
		\STATE Apply $\peel(\mbQ,\tilde{s},\epsilon/2,\delta/2)$ to select the index set $\tilde{S}=\{l_1,l_2,\dots,l_{\tilde{s}}\}$.
		\FOR{$l\in\tilde{S}$}
		\STATE Compute the quantities $f_l(\vect{Q}_l^r,1), f_l(\vect{Q}_l^r,-1), f_l(\vect{Q}_l^r,0)$ based on \eqref{eq:exp_util}.
		\STATE Generate the random sign $\hat{Q}_l$ according to the following distribution 
		\begin{equation*}
			\begin{cases}
			\mbP(\hat{Q}_l=1)=\frac{P_l^+}{P_l^+ + P_l^- + P_l^0},\\
			\mbP(\hat{Q}_l=0)=\frac{P_l^0}{P_l^+ + P_l^- + P_l^0},\\
			\mbP(\hat{Q}_l=-1)=\frac{P_l^-}{P_l^+ + P_l^- + P_l^0},
		\end{cases}
		\quad\text{ where }
		\begin{cases}
			&P_l^+ = \exp\{\epsilon' f_l(\vect{Q}_l^r,1)/4\},\\
			&P_l^0 = \exp\{\epsilon' f_l(\vect{Q}_l^r,0)/4\},\\
			&P_l^- = \exp\{\epsilon' f_l(\vect{Q}_l^r,-1)/4\}.
		\end{cases}
		\end{equation*}
		and $\epsilon' = \epsilon/(4\sqrt{2\tilde{s}\log(2/\delta)})$.
		\ENDFOR
	\end{algorithmic}
	 \textbf{Output:} Return the sets $\hat{S} = \{l|l\in\tilde{S},\hat{Q}_l\neq0\}$ and $\hat{Q}=\{\hat{Q}_l|l\in\hat{S}\}$.
\end{algorithm}

The concept of stability in differential privacy is related to the classical ``Propose-Test-Release'' mechanism. The ``Propose-Test-Release'' mechanism, or the PTR mechanism, was first proposed in \cite{dwork_lei.2009}. We also refer the reader to \cite{thakurta_smith.2013colt, dwork_aaron.2014, vadhan.2017} for further exploration of this topic. The PTR mechanism uses stability to measure the distance of the database from the sensitive dataset and decides whether to halt the algorithm. We leverage this idea by using the stability function as the utility function in the $\peel$ algorithm and the exponential mechanism. Unlike the PTR mechanism, our algorithm does not halt and always outputs a non-empty set of signs. 


\subsection{Theories of $\dpvote$ Method}	\label{sec:medsign_theory}

We show that the $\peel$ algorithm and the coordinate-wise exponential mechanism both attain $(\frac{\epsilon}{2},\frac{\delta}{2})$-DGDP. Then, from the composition theorem in Lemma \ref{lem:two_comp}, we know that the entire procedure is $(\epsilon,\delta)$-DGDP. Namely, we have the following theorem:

\begin{theorem}[Differential privacy of $\dpvote$]	\label{thm:medsign_dp}
	The proposed $\dpvote$ mechanism in Algorithm \ref{alg:medsign} is $(\epsilon,\delta)$-distributed group differentially private.
\end{theorem}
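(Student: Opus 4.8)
The plan is to decompose $\dpvote$ (Algorithm \ref{alg:medsign}) into its two constituent stages — the $\peel$ selection of the index set $\tilde S$ (Algorithm \ref{alg:suppselect}) and the coordinate-wise exponential mechanism that assigns signs on $\tilde S$ — show that each stage is $(\epsilon/2,\delta/2)$-DGDP, and then invoke the composition theorem (Lemma \ref{lem:two_comp}) to conclude $(\epsilon,\delta)$-DGDP. A preliminary remark I would record up front is that, as noted after Lemma \ref{lem:exp_mech}, every privacy primitive used (the Laplace and exponential mechanisms, basic and advanced composition) transfers verbatim to DGDP once the adjacency $H(\mbX,\mbX')=1$ is read as ``all columns of $\mbQ$ coincide except one machine-column $\vect{Q}^c_j$''. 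Consequently all sensitivities below are to be computed with respect to replacing a single column $\vect{Q}^c_j$.

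The first key computation is the global sensitivity of the stability and utility functions under this machine-level adjacency. Replacing one column $\vect{Q}^c_j$ alters at most one entry $Q_{l,j}$ of each row $\vect{Q}^r_l$, which moves exactly one unit of mass among the counts $(N^+_l,N^-_l,N^0_l)$ of \eqref{eq:pnz_number}. Hence each linear form appearing in \eqref{eq:stab_fun} and \eqref{eq:exp_util} (for instance $N^+_l-N^0_l-N^-_l$) changes by at most $2$, and checking the piecewise cases shows that $f^S$ and every $f_l$ have global sensitivity $\Delta f = 2$, as asserted after \eqref{eq:exp_util}. This is the model-specific input that calibrates all the injected noise.

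For the $\peel$ stage I would argue that each of the $\tilde s$ iterations — adding fresh $\laplace(\cdot)$ noise to $\{f^S(\vect{Q}^r_l)\}$ and reporting the arg-max index — is a report-noisy-max (equivalently, single-round exponential) selection step that is $\epsilon_0$-DGDP, with $\epsilon_0$ fixed by $\Delta f=2$ and the Laplace scale of Algorithm \ref{alg:suppselect} evaluated at budget $(\epsilon/2,\delta/2)$, namely $8\sqrt{2\tilde s\log(2/\delta)}/\epsilon$. Applying the advanced composition theorem (Lemma \ref{thm:adv_comp}) to the $\tilde s$ adaptive rounds with slack $\delta'=\delta/2$, this scale is chosen precisely so that $2\sqrt{2\tilde s\log(1/\delta')}\,\epsilon_0=\epsilon/2$, yielding $(\epsilon/2,\delta/2)$-DGDP for $\peel$. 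For the sign-assignment stage, each coordinate $l\in\tilde S$ draws $\hat Q_l$ from a law proportional to $\exp(\epsilon' f_l(\vect{Q}^r_l,\cdot)/4)=\exp(\epsilon' f_l/(2\Delta f))$, which is exactly the exponential mechanism and hence $\epsilon'$-DGDP by Lemma \ref{lem:exp_mech}. Composing across the $\tilde s$ coordinates by Lemma \ref{thm:adv_comp} with slack $\delta/2$, the choice $\epsilon'=\epsilon/(4\sqrt{2\tilde s\log(2/\delta)})$ forces $2\sqrt{2\tilde s\log(2/\delta)}\,\epsilon'=\epsilon/2$, so this stage is likewise $(\epsilon/2,\delta/2)$-DGDP.

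Finally, since $\tilde S$ is data-dependent, the sign-assignment stage is a mechanism that takes the output of $\peel$ as its input, which is exactly the adaptive regime covered by Lemma \ref{lem:two_comp}; combining the two $(\epsilon/2,\delta/2)$-DGDP stages gives $(\epsilon/2+\epsilon/2,\delta/2+\delta/2)=(\epsilon,\delta)$-DGDP, proving the theorem. I expect the main difficulty to be bookkeeping rather than any deep step: (i) matching the per-round budget $\epsilon_0$ and the report-noisy-max constant to the Laplace scale written in Algorithm \ref{alg:suppselect}, so that advanced composition lands exactly on $(\epsilon/2,\delta/2)$ in each stage; and (ii) rigorously justifying $\Delta f=2$ under the machine-level adjacency, where a single machine casts a vote in every coordinate at once so that all $\tilde s$ rounds and all coordinate-wise mechanisms can be perturbed simultaneously.
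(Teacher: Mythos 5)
Your proposal is correct and follows essentially the same route as the paper's proof: compute the global sensitivity $\Delta f=2$ of the stability/utility functions under machine-level adjacency, show each $\peel$ round and each coordinate-wise exponential mechanism is $(\epsilon',0)$-DGDP with $\epsilon'=\epsilon/(4\sqrt{2\tilde{s}\log(2/\delta)})$, apply advanced composition (Lemma \ref{thm:adv_comp}) with slack $\delta'=\delta/2$ to land each stage at $(\epsilon/2,\delta/2)$, and conclude by basic composition (Lemma \ref{lem:two_comp}). Your explicit verification that replacing one column $\vect{Q}^c_j$ moves one unit of mass among $(N^+_l,N^-_l,N^0_l)$ and hence changes each linear form by at most $2$ is a detail the paper merely asserts, not a different argument.
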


Note that the $(\epsilon,\delta)$-DGDP conclusion for our algorithm holds for any sign matrix, $\mbQ$. 
Next, we show that the algorithm outputs the same signs as $\bar{\vect{Q}}$ in (\ref{eq:tt}) asymptotically under weak conditions.
 Let $\bar{S}$ be the support for $\bar{\vect{Q}}$ and $\bar{s}=|\bar{S}|$.

\begin{theorem}[Sign consistency of $\dpvote$]	\label{thm:medsign_cons}
	Given the privacy level $(\epsilon,\delta)$, suppose $\tilde{s}\geq \bar{s}$, and for $1\leq l\leq p$, the stability function satisfies
	\begin{equation}	\label{eq:medsign_cons_term1}
		f_l(\vect{Q}_l^r,\bar{Q}_l) \geq 8(\gamma+1)\sqrt{2\tilde{s}\log(2/\delta)}\log p/\epsilon,
	\end{equation} 
	where $\gamma>2$. Then, the proposed $\dpvote$ algorithm produces $(\hat{S},\hat{\vect{Q}})$ that has the same signs as $\bar{\vect{Q}} $with a probability (with respect to the randomness in the algorithm) of no less than $1-O(p^{-\gamma+2})$. 	
\end{theorem}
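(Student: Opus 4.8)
The plan is to treat the two random stages of Algorithm \ref{alg:medsign} separately, because the output $(\hat S,\hat{\vect Q})$ carries the same signs as $\bar{\vect Q}$ exactly on the event $\{\bar S\subseteq\tilde S\}\cap\{\hat Q_l=\bar Q_l\text{ for all }l\in\tilde S\}$ (any coordinate outside $\tilde S$ is reported as $0=\bar Q_l$ automatically). The engine of the whole argument is a single consequence of hypothesis \eqref{eq:medsign_cons_term1}. Writing $T:=8(\gamma+1)\sqrt{2\tilde s\log(2/\delta)}\log p/\epsilon$ and comparing \eqref{eq:stab_fun} with \eqref{eq:exp_util}, we have $f^S(\vect Q_l^r)=f_l(\vect Q_l^r,\bar Q_l)$ when $\bar Q_l\neq0$ and $f^S(\vect Q_l^r)=-f_l(\vect Q_l^r,0)$ when $\bar Q_l=0$. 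Since \eqref{eq:medsign_cons_term1} is imposed at \emph{every} coordinate, this separates the stability scores as $f^S(\vect Q_l^r)\geq T$ for $l\in\bar S$ and $f^S(\vect Q_l^r)\leq -T$ for $l\notin\bar S$, i.e.\ a gap of $2T$ between signals and nulls.

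First, for the Peeling stage I would work on the event $\mcE:=\{|g_{t,l}|<T\text{ for all }1\leq t\leq\tilde s,\ 1\leq l\leq p\}$. On $\mcE$ the noisy score $f^S(\vect Q_l^r)+g_{t,l}$ is positive for every $l\in\bar S$ and negative for every $l\notin\bar S$, so in each round the arg-max is attained at a signal coordinate as long as one is still unselected; because $\tilde s\geq\bar s$, every index of $\bar S$ is peeled into $\tilde S$. The Peeling noise (Algorithm \ref{alg:suppselect} run at level $(\epsilon/2,\delta/2)$) has scale $b=8\sqrt{2\tilde s\log(2/\delta)}/\epsilon$, so $T/b=(\gamma+1)\log p$ and $\mbP(|g_{t,l}|\geq T)=e^{-T/b}=p^{-(\gamma+1)}$. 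A union bound over the at most $\tilde s p\leq p^2$ perturbations then gives $\mbP(\bar S\not\subseteq\tilde S)\leq\mbP(\mcE^c)\leq p^{1-\gamma}$.

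Next, for the sign-generation stage I would condition on any realized $\tilde S$ and bound $\mbP(\hat Q_l\neq\bar Q_l)$ for each $l\in\tilde S$. A short case analysis of \eqref{eq:exp_util} shows that the utility of the correct label beats each competitor by at least $2T$: when $\bar Q_l=1$ one has $f_l(\vect Q_l^r,1)-f_l(\vect Q_l^r,0)=2f_l(\vect Q_l^r,1)\geq2T$ and $f_l(\vect Q_l^r,1)-f_l(\vect Q_l^r,-1)=2(N_l^+-N_l^-)\geq 2f_l(\vect Q_l^r,1)\geq2T$; the case $\bar Q_l=-1$ is symmetric; and when $\bar Q_l=0$ the hypothesis $f_l(\vect Q_l^r,0)\geq T$ forces $N_l^0\geq T$, producing margins $2f_l(\vect Q_l^r,0)$ and $2N_l^0$ over $\mp1$, both at least $2T$. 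Since the weights are proportional to $\exp(\epsilon' f_l/4)$ with $\epsilon'=\epsilon/(4\sqrt{2\tilde s\log(2/\delta)})$, each wrong label has probability at most $\exp(-\epsilon'(2T)/4)=p^{-(\gamma+1)}$, so $\mbP(\hat Q_l\neq\bar Q_l)\leq2p^{-(\gamma+1)}$, and a union bound over the $\tilde s\leq p$ coordinates of $\tilde S$ caps this stage's failure at $2p^{-\gamma}$.

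Combining the two bounds, the probability that $(\hat S,\hat{\vect Q})$ fails to reproduce the signs of $\bar{\vect Q}$ is at most $\mbP(\bar S\not\subseteq\tilde S)+\mbP(\exists\,l\in\tilde S:\hat Q_l\neq\bar Q_l)\leq p^{1-\gamma}+2p^{-\gamma}=O(p^{-\gamma+2})$, which is the claim. The hard part will be the first paragraph: recognizing that \eqref{eq:medsign_cons_term1} is stated for all $l$ and therefore, through the sign flip built into the null branch of $f^S$, simultaneously lower-bounds the stability at signals and upper-bounds it at nulls. This $2T$ separation is precisely what lets the clean event $\mcE$ with the \emph{full} threshold $T$ succeed (a crude $T/2$ split would not suffice) and thereby deliver the stated rate; the Laplace tail estimate and the margin case-checking for the exponential mechanism are then routine.
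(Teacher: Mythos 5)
Your proof is correct, and its second stage (the margin case analysis for the exponential mechanism and the wrong-label bound $\exp(-\epsilon' T/2)=p^{-(\gamma+1)}$) coincides with the paper's own computation; but your analysis of the $\peel$ stage takes a genuinely different, and cleaner, route. The paper first reduces to an ``extreme case'' ($f^S(\vect{Q}_l^r)$ maximal on $\bar{S}$ and zero off $\bar{S}$), asserting without formal justification that this case is least favorable, then analyzes each round via a ``$1$ vs $p-\bar{s}$'' comparison and controls the maxima of the Laplace noises through a Poisson-approximation device (Lemmas \ref{lem:possion_approx} and \ref{lem:Lap_max}), arriving at a peeling failure probability of order $p^{-\gamma+2}$. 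You instead extract from \eqref{eq:medsign_cons_term1} the exact identities $f^S(\vect{Q}_l^r)=f_l(\vect{Q}_l^r,\bar{Q}_l)$ on $\bar{S}$ and $f^S(\vect{Q}_l^r)=-f_l(\vect{Q}_l^r,0)$ off $\bar{S}$ --- making explicit the $2T$ separation the paper uses only implicitly --- and condition on the single event $\mcE$ that all $\tilde{s}p$ Laplace draws lie below $T$ in absolute value, on which every noisy score is positive on $\bar{S}$ and negative off it, so coverage $\bar{S}\subseteq\tilde{S}$ holds deterministically. This buys three things: it bypasses the informal extreme-case monotonicity reduction, it replaces the Poisson-approximation machinery with the elementary Laplace tail $\mbP(|g|\geq T)=e^{-T/b}=p^{-(\gamma+1)}$ plus a union bound, and it yields the slightly sharper stage-one bound $p^{1-\gamma}$, so your total failure probability is in fact $O(p^{-\gamma+1})$, comfortably within the stated $O(p^{-\gamma+2})$; the only price is controlling all $\tilde{s}p$ noise variables simultaneously rather than just the per-round comparisons, which costs nothing at this rate. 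Your treatment of the null case, deriving $N_l^0\geq f_l(\vect{Q}_l^r,0)\geq T$ and hence margins $2f_l(\vect{Q}_l^r,0)$ and $2N_l^0$ over the labels $\pm 1$, is likewise a correct and more explicit rendering of the inequalities the paper states tersely in its bound for $\mbP(\hat{Q}_l=0)$.
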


In this paper, we focus on the high-dimensional scenario where the dimension $p$ tends to infinity. Therefore, Theorem \ref{thm:medsign_cons} implies consistency under this assumption.
Condition (\ref{eq:medsign_cons_term1})  indicates that, if a sign  (c.f. $+1$) obtains sufficient votes (exceeding the other two  $8(\gamma+1)\sqrt{2\tilde{s}\log(2/\delta)}\log p/\epsilon$ votes),  then the algorithm outputs the true sign of $\bar{\vect{Q}}$ with high probability. The term $8(\gamma+1)\sqrt{2\tilde{s}\log(2/\delta)}\log p/\epsilon$ is determined by the noise level in the Peeling algorithm. Recall that $\bar{\vect{Q}}$ is the estimator (without noise addition) for the sign vector of the population parameter. The proof shows that $\bar{\vect{Q}}$ has sign consistency for the mean vector and linear regression models, which indicates that $\dpvote$ can have sign consistency for these two models. In these settings, we can deduce the corresponding minimal signal strength conditions (see (\ref{eq:mean_mu_min}) and (\ref{eq:regression_theta_min})) from (\ref{eq:medsign_cons_term1}), which are regular in variable selection.

Notably, Theorem \ref{thm:medsign_cons} does not assume any underlying model for the dataset in the local machines. That is, $\dpvote$ works for many statistical models once we can construct statistics in local machines to ensure sign consistency for $\bar{\vect{Q}}$. Therefore, except for the mean model and linear regression problem given in Sections \ref{sec:mean_recov} and \ref{sec:lasso_recov}, $\dpvote$ can also be applied to other variable selection problems such as Gaussian graphical model estimation.

 In the following sections, we apply the proposed $\dpvote$ algorithm to the sign selection for the mean vector and linear regression.


\section{Private Sign Selection of Mean Vector}	\label{sec:mean_recov}

Private mean estimation is a fundamental problem in private statistical analysis and has been intensively studied \citep{dwork_etal.2006, lei.2011, bassily_smith_thakurta.2014, cai_wang_zhang.2019}. The standard approach is to project the data onto a known bounded domain and then apply the noises according to the diameter of the feasible domain and the privacy level. This requires that the input data or true parameter lie in a known bounded domain, which seems unsatisfactory in practice. Furthermore, few studies have examined the sign selection for the mean vector under DGDP. 
As the first application of the $\dpvote$ algorithm, in this section, we investigate the private sign recovery problem for a sparse mean vector in a distributed system. The results of sign selection consistency and privacy guarantee are provided.


\subsection{$\dpvote$ for Mean Estimation}    \label{sec:majvote_mean}

Let $\vect{\theta}^*=(\theta^*_1,\dots,\theta^*_p)^{\tp}$ be the true population parameter of interest. We denote $S$ as the support of $\vect{\theta}^*$ and $s=|S|$. We assume that the vector is sparse in the sense that many entries $\theta^*_l$ are zero. There are $N$ i.i.d. observations $\vect{X}_i$ that satisfy $\mbE[\vect{X}_i]=\vect{\theta}^*$ and are stored in $m$ machines $\mcH_j$ (where $1\leq j\leq m$), each holds $n_j$ (where $1\leq j\leq m$) samples. We assume that $n_j$'s have the same order as $n$, namely $n_1\asymp ...\asymp n_m\asymp n$. We define $\mbX=\{\vect{X}_1,\dots,\vect{X}_N\}$ as the full dataset.  Our task is to identify $\sgn(\vect{\theta}^*)$ under DGDP.

To present our method more clearly, we define the quantization function $\mcQ_{\lambda}(\cdot)$ as follows:
\begin{equation}	\label{eq:quant_def}
	\mcQ_{\lambda}(x)=\sgn\big\{\underbrace{\sgn(x)\cdot(|x|-\lambda)_+}_{\text{shrinkage operator}}\big\}=
	\begin{cases}
		\sgn(x)\quad&\text{ if }|x|>\lambda,\\
		0\quad&\text{ if }|x|\leq\lambda.
	\end{cases}
\end{equation}
Here $\lambda$ is the threshold parameter. When $x$ is a vector, $\mcQ_{\lambda}(x)$ performs the aforementioned operation in a coordinated manner. In particular, when $\lambda=0$, the function $\mcQ_{0}(\cdot)$ is the same as sign function $\sgn(\cdot)$. We then present our method in Algorithm \ref{alg:sign_mean}. The choice of thresholding parameter is discussed after Theorem \ref{prop:dp_mean_sign} in Section \ref{sec:priv_mean_theory}.

\begin{algorithm}[!t]
	\caption{{\small Differentially private Majority Vote ($\dpvote$) for sparse mean.}}
	\label{alg:sign_mean}
	\hspace*{\algorithmicindent} \hspace{-0.6cm} {\textbf{Input:} Dataset $\mbX=\{\vect{X}_1,\dots,\vect{X}_N\}$ stored in $m$ local machines (where $j=1,\dots,m$), the universal thresholding parameter $\lambda_N$, number of selections $\tilde{s}$, privacy level $(\epsilon,\delta)$.} 	
	\begin{algorithmic}[1]
		\FOR{$j=1,\dots,m$}
		\STATE Compute the local sample mean $\bar{\vect{X}}_j=n_j^{-1}\sum_{i\in\mcH_j}\vect{X}_i$ on the $j$-th machine and obtain the sign vector $\vect{Q}_j=\mcQ_{\lambda_N}(\bar{\vect{X}}_j)$. Send $\vect{Q}_{j}$ to the server.
		\ENDFOR
		\STATE Apply $\dpvote$ 
		\begin{equation}	\label{eq:dpmed-mean_sign}
			\hat{\vect{Q}}(\mbX)=\dpvote(\{\vect{Q}_j\}_{1\leq j\leq m},\tilde{s},\epsilon,\delta).
		\end{equation}
	\end{algorithmic}
	\textbf{Output:}  The sign vector $\hat{\vect{Q}}(\mbX)$.
\end{algorithm}

\begin{figure}[t]
\begin{center}
\begin{tikzpicture}[
	dot/.style={draw,circle,minimum size=3mm,inner sep=0pt,outer sep=0pt,fill=black},
	dots/.style={draw,circle,minimum size=3mm,inner sep=0pt,outer sep=0pt,fill=white},
	to/.style={->,>=stealth',shorten >=1pt,thick,font=\sffamily\large},
	]
	\node[dot] at (-3.75,7.75) {};
	\node[dot] at (-3.75,7.25) {};
	\node[dot] at (-3.75,6.75) {};
	\node[dot] at (-3.75,6.25) {};
	\node[dots] at (-3.75,5.75) {};	
	\node[dots] at (-3.75,5.25) {};
	\node[dot] at (-3.75,4.75) {};
	\node[dot] at (-3.75,3.25) {};
	\node[dot] at (-3.75,1.25) {};
	\node[dots] at (-3.75,0.75) {};
	
	\node[dot] at (-2.75,7.75) {};
	\node[dot] at (-2.75,7.25) {};
	\node[dot] at (-2.75,6.25) {};
	\node[dots] at (-2.75,5.75) {};
	\node[dots] at (-2.75,5.25) {};
	\node[dots] at (-2.75,4.75) {};
	\node[dots] at (-2.75,4.25) {};
	\node[dot] at (-2.75,2.75) {};
	\node[dot] at (-2.75,1.25) {};
	\node[dot] at (-2.75,0.25) {};
	
	\node[dot] at (-1.75,7.75) {};
	\node[dot] at (-1.75,7.25) {};
	\node[dot] at (-1.75,6.75) {};
	\node[dot] at (-1.75,6.25) {};
	\node[dots] at (-1.75,5.75) {};
	\node[dots] at (-1.75,4.75) {};
	\node[dot] at (-1.75,3.75) {};
	\node[dots] at (-1.75,3.25) {};
	\node[dots] at (-1.75,0.25) {};

	\node[dot] at ( 0.25,7.25) {};
	\node[dot] at ( 0.25,6.75) {};
	\node[dot] at ( 0.25,6.25) {};
	\node[dots] at ( 0.25,5.75) {};
	\node[dots] at ( 0.25,5.25) {};
	\node[dots] at ( 0.25,4.75) {};
	\node[dots] at ( 0.25,1.25) {};
	\node[dot] at ( 0.25,2.25) {};
	\node[dot] at ( 0.25,0.75) {};
	
	\node[dot] at ( 3.25,7.75) {};
	\node[dot] at ( 3.25,7.25) {};
	\node[dot] at ( 3.25,6.75) {};
	\node[dot] at ( 3.25,6.25) {};
	\node[dots] at ( 3.25,5.75) {};
	\node[dots] at ( 3.25,5.25) {};
	\node[dots] at ( 3.25,4.75) {};

	\draw [thick] [decorate,decoration={brace,amplitude=10pt},xshift=-4pt,yshift=0pt]
(-4,0) -- (-4,4.46) node [black,midway,xshift=-0.6cm] {\footnotesize $S^c$} ;
	\draw [thick] [decorate,decoration={brace,amplitude=10pt},xshift=-4pt,yshift=0pt]
(-4,4.54) -- (-4,5.96) node [black,midway,xshift=-0.6cm] {\footnotesize $S^-$} ;
	\draw [thick] [decorate,decoration={brace,amplitude=10pt},xshift=-4pt,yshift=0pt]
(-4,6.04) -- (-4,8) node [black,midway,xshift=-0.6cm] {\footnotesize $S^+$} ;
	\draw [thick, dashed] (-4, 6) -- (0.5,6);
	\draw [thick, dashed] ( 3, 6) -- (3.5,6);
	\draw [thick, dashed] (-4, 4.5) -- (0.5,4.5);
	\draw [thick, dashed] ( 3, 4.5) -- (3.5,4.5);
	\draw [thick] (-4,0) -- (-4,8);
	\draw [thick] (-3.5,0) -- (-3.5,8);
	\draw [thick] (-3,0) -- (-3,8);
	\draw [thick] (-2.5,0) -- (-2.5,8);
	\draw [thick] (-2,0) -- (-2,8);
	\draw [thick] (-1.5,0) -- (-1.5,8);
	\draw [thick] (0,0) -- (0,8);
	\draw [thick] (0.5,0) -- (0.5,8);
	\draw [thick] (3,0) -- (3,8);
	\draw [thick] (3.5,0) -- (3.5,8);
	\draw (-3.75, -0.6) node[above] {$\vect{Q}^c_1$};
	\draw (-2.75, -0.6) node[above] {$\vect{Q}^c_2$};
	\draw (-1.75, -0.6) node[above] {$\vect{Q}^c_3$};
	\draw ( 0.25, -0.6) node[above] {$\vect{Q}^c_m$};
	\draw ( 3.25, -0.6) node[above] {$\tilde{\vect{Q}}$};
	\draw (-0.75,     4) node[above] {$\bullet\;\bullet\;\bullet$};
	
	\draw[to] (0.75, 4)  to[bend right=0]
		node[above] {$\majvote$} (2.75, 4);
\end{tikzpicture}
\end{center}
	\caption{This figure visualizes the mechanism of the majority vote approach. Denote $S^+$, $S^-$ and $S^c$ as the sets of positives, negatives, and zeros of the true parameter $\vect{\theta}^*$, respectively. The black dots and white dots in each column represent the estimated positive and negative locations on each local machine. By aggregating these local sign vectors with the proposed method, we can recover the true signs with high probability.}
	\label{fig:med-dc_intuition}
\end{figure}


\noindent{\bf Majority Vote vs direct thresholding.} Before presenting the theoretical results, we first briefly discuss the advantages of the Majority Vote in sign/variable selection. Using the thresholding estimator as the initial statistic,  we let
\begin{equation}	\label{eq:med-mean_sign}
	\bar{\vect{Q}}=\bar{\vect{Q}}(\mbX)=\majvote(\mcQ_{\lambda_N}(\bar{\vect{X}}_j)\mid 1\leq j\leq m),
\end{equation}
where $\bar{\vect{X}}_{j}$'s are the local sample means. To estimate support $S$,
we claim that the Majority Vote procedure is more efficient than direct thresholding on local sample means.
Suppose that the sample $\vect{X}_{i}\sim \vect{\theta}^*+\mcN(0,\vect{I})$. The local sample size is $n$. It is well known that a common threshold level for the local sample mean is $\lambda_{n}=c\sqrt{\log p/n}$ with $c=\sqrt{2}$, such that the local machine can identify all zero positions in $\theta^*$. It is easy to see that any smaller thresholding constant $c<\sqrt{2}$ results in false positives. With the thresholding level $\lambda_{n}$, the minimal signal of nonzero position should be no less than the order $O(\sqrt{\log p/n})$, otherwise, selection consistency becomes impossible. In contrast, the Majority Vote procedure allows for a thresholding level smaller than $\sqrt{2\log p/n}$. In fact, by using
the Majority Vote procedure, we let $\lambda_{N}=c\sqrt{\log p/N}$ for some $c>0$. Recall that $N$ denotes the total number of samples. This thresholding level is much smaller than $\sqrt{2\log p/n}$, and hence, many false nonzero positions are misidentified by local machines. However, due to the symmetry of the normal distribution, the number of false $1$'s and $-1$'s are likely equal, and both are less than half. Thus, after applying the Majority Vote rule, the output is still $0$. Now with a smaller $\lambda_{N}$, we allow a weaker minimal signal strength assumption for nonzero positions. The mechanism of $\majvote$ is illustrated in Figure \ref{fig:med-dc_intuition}.


\subsection{Theory of Mean Vector Sign Selection}	\label{sec:priv_mean_theory}

To discuss the theoretical properties of our method, we introduce the distribution space $\mcP$ for the population $\vect{X}$:
\begin{equation}	\label{eq:sample_dist}
\begin{aligned}
	\mcP(\vect{\theta}^*, C)=\Big\{\mbP\Big|&  \mbE_{\mbP}[\vect{X}]=\vect{\theta}^*, \max_{1\leq l\leq p}\mbE_{\mbP}\big[|X_l-\theta_l^*|^3\big]\leq C \Big\},
\end{aligned}
\end{equation}
where $C>0$ is a constant. This is a relatively weak condition for the distribution of $\vect{X}$. 
 We obtain the following sign consistency result.

\begin{theorem} \label{prop:dp_mean_sign}
    (Sign consistency of $\dpvote$ for the mean vector) Let $\{\vect{X}_1,\dots,\vect{X}_N\}$ be $N$ i.i.d. random vectors sampled from $\mbP(\vect{\theta}^*,C)$ distributed in $m$ machines $\mcH_1,\dots,\mcH_m$, with each machine holds $n_j$ ($1\leq j\leq m$) samples. Here we assume $n_1\asymp ...\asymp n_m\asymp n$. Moreover, assume that there exist sufficiently large constants $C_1,C_2,\gamma_0>0$ such that
    \begin{itemize}
        \item[(a)] The dimension $p$ satisfies $p=O(n^{\gamma_0})$, and the number of machines $m$ satisfies $\sqrt{\tilde{s}\log(1/\delta)}\log p/\epsilon =o(m)$.  We take
        \begin{equation}    \label{eq:mean_lambdaN}
            \lambda_N\geq C_1\Big(\sqrt{\frac{\log p}{N}}+\frac{\sqrt{\tilde{s}\log(1/\delta)}\log p}{m\sqrt{n}\epsilon}+\frac{1}{n}\Big);
        \end{equation}
        \item[(b)] Define $S=\supp(\vect{\theta}^*)$, and assume that 
        \begin{equation}    \label{eq:mean_mu_min}
            \min_{l\in S}|\theta_l^*|\geq C_2\Big(\sqrt{\frac{\log p}{N}}+\frac{\sqrt{\tilde{s}\log(1/\delta)}\log p}{m\sqrt{n}\epsilon}+\frac{1}{n}\Big).
        \end{equation}
    \end{itemize}
    Then for $\tilde{s}\geq s,\gamma_1>0$ and $\hat{\vect{Q}}(\mbX)$ defined in Algorithm \ref{alg:sign_mean}, we have
    \begin{equation*} 
        \lim_{N\rightarrow\infty}\mbP\Big(\hat{\vect{Q}}(\mbX) = \sgn(\vect{\theta}^*)\Big)=1.
    \end{equation*}
\end{theorem}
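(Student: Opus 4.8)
\; The plan is to derive this theorem from the model-free sign-consistency guarantee of Theorem~\ref{thm:medsign_cons}, by verifying its hypotheses on a high-probability event governed by the sample randomness. There are two independent sources of randomness: that of the data $\{\vect{X}_i\}$, which determines the sign matrix $\mbQ$ and hence the counts $N_l^+,N_l^-,N_l^0$; and that internal to Algorithm~\ref{alg:medsign}, which Theorem~\ref{thm:medsign_cons} already handles. Accordingly, I would introduce a good event $\mcE$ on which simultaneously (i) $\bar{\vect{Q}}=\sgn(\vect{\theta}^*)$ and $\bar{s}=s$, and (ii) for every coordinate $l$ the stability $f_l(\vect{Q}_l^r,\bar{Q}_l)$ clears the peeling threshold $8(\gamma+1)\sqrt{2\tilde{s}\log(2/\delta)}\log p/\epsilon$ appearing in \eqref{eq:medsign_cons_term1}. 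On $\mcE$ we have $\tilde{s}\ge s=\bar{s}$, so Theorem~\ref{thm:medsign_cons} gives $\hat{\vect{Q}}(\mbX)=\bar{\vect{Q}}=\sgn(\vect{\theta}^*)$ with conditional probability at least $1-O(p^{-\gamma+2})$; combining this with $\mbP(\mcE)\to1$ and choosing $\gamma$ large yields the conclusion. The substance of the proof is thus the probabilistic claim $\mbP(\mcE)\to1$, which I would isolate as the auxiliary Proposition~\ref{thm:priv_mean_supp}.

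To control $\mcE$ I would first analyze one machine. Fix $l$ and set $\pi_l^+=\mbP(\bar{X}_{j,l}>\lambda_N)$ and $\pi_l^-=\mbP(\bar{X}_{j,l}<-\lambda_N)$. Since $\mcP(\vect{\theta}^*,C)$ only provides a bounded third moment, I would estimate these probabilities through the Berry--Esseen theorem, replacing the standardized local mean (whose scale is $\asymp 1/\sqrt n$) by a Gaussian at the cost of an $O(1/\sqrt n)$ error; this error is precisely the source of the $1/n$ terms in \eqref{eq:mean_lambdaN}--\eqref{eq:mean_mu_min}. For $l\in S$ with, say, $\theta_l^*>0$, the minimal-signal bound \eqref{eq:mean_mu_min} makes the gap $\theta_l^*-\lambda_N$ positive and, in the borderline regime, of the same order as the threshold, so that $2\pi_l^+-1\asymp \sqrt n\,(\theta_l^*-\lambda_N)$. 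For $l\notin S$ (where $\theta_l^*=0$), thresholding at $\lambda_N$ forces $\pi_l^+,\pi_l^-$ strictly below $\tfrac12$ with $\tfrac12-\pi_l^\pm\asymp \sqrt n\,\lambda_N$, while the two tails nearly cancel, leaving $|\pi_l^+-\pi_l^-|=O(1/\sqrt n)$.

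Next I would pass from the population probabilities to the empirical counts $N_l^\pm$ via a Bernstein/Hoeffding bound for a sum of $m$ independent indicators, followed by a union bound over $l\in[p]$ (permissible because $p=O(n^{\gamma_0})$). Multiplying the per-machine estimates by $m$ shows: for $l\in S$, the winning margin obeys $f_l(\vect{Q}_l^r,\sgn(\theta_l^*))\asymp m\sqrt n\,(\theta_l^*-\lambda_N)$, so that \eqref{eq:mean_mu_min}---after rescaling $\theta_l^*$ by the factor $m\sqrt n$---makes it exceed the peeling threshold; for $l\notin S$, both $N_l^+$ and $N_l^-$ stay below $m/2$, giving $\majvote\{\vect{Q}_l^r\}=0$ (hence $\bar{\vect{Q}}=\sgn(\vect{\theta}^*)$ and $\bar{s}=s$), while the null stability $f_l(\vect{Q}_l^r,0)=N_l^0-|N_l^+-N_l^-|\asymp m\sqrt n\,\lambda_N$ exceeds the same threshold through \eqref{eq:mean_lambdaN}. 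In this bookkeeping the three terms of \eqref{eq:mean_lambdaN}--\eqref{eq:mean_mu_min} map, under the factor $m\sqrt n$, onto the three count-level quantities that must be dominated: the statistical fluctuation $\sqrt{m\log p}$, the Laplace noise of Algorithm~\ref{alg:suppselect} of order $\sqrt{\tilde{s}\log(1/\delta)}\log p/\epsilon$, and the Berry--Esseen asymmetry of order $m/\sqrt n$.

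I expect the null coordinates to be the main obstacle. Because $\lambda_N$ is deliberately taken as small as $\asymp\sqrt{\log p/N}$ (to permit the minimax signal level $\sqrt{\log p/N}$ rather than the much larger single-machine level $\sqrt{\log p/n}$), each machine carries a non-negligible false-positive/negative rate, with $\tfrac12-\pi_l^\pm$ only of order $\sqrt{\log p/m}$. Showing that $\majvote$ nonetheless returns $0$, and more stringently that the null stability $N_l^0-|N_l^+-N_l^-|$ stays above the privacy noise, rests entirely on the near-cancellation of false positives and negatives. The delicate point is to place all three count-level quantities on the correct side of the peeling threshold uniformly over the $p$ null coordinates: the balance-of-zeros margin $N_l^0\asymp m\sqrt n\,\lambda_N$ must dominate simultaneously the $O(\sqrt{m\log p})$ sampling fluctuation of $N_l^\pm$ (forcing the constant $C_1$ to be large) and the $O(m/\sqrt n)$ asymmetry left by the non-Gaussian Berry--Esseen correction (which is exactly why $\lambda_N$ must also carry the $1/n$ term). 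Handling this balance with only a third-moment assumption, rather than the boundedness usually imposed in the privacy literature, is the crux of the argument.
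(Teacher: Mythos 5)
Your proposal takes essentially the same route as the paper's own proof: it reduces Theorem \ref{prop:dp_mean_sign} to the model-free guarantee of Theorem \ref{thm:medsign_cons} by verifying the stability condition \eqref{eq:medsign_cons_term1} coordinate-wise on a high-probability event, isolates the non-private consistency of $\bar{\vect{Q}}$ as an auxiliary proposition (the paper's Proposition \ref{thm:priv_mean_supp}), and controls the counts $N_l^{\pm}$ via a per-machine Berry--Esseen estimate combined with a Bernstein-type bound for sums of indicators and a union bound over coordinates, which is exactly what the paper packages as Lemma \ref{lemma:concen_mom} (via Lemmas \ref{lemma:berry_esseen} and \ref{lemma:cai_liu}). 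Your accounting of how the three terms in \eqref{eq:mean_lambdaN}--\eqref{eq:mean_mu_min} map, under the factor $m\sqrt{n}$, onto the sampling fluctuation $\sqrt{m\log p}$, the peeling noise $\sqrt{\tilde{s}\log(1/\delta)}\log p/\epsilon$, and the Berry--Esseen asymmetry $m/\sqrt{n}$ matches the paper's bookkeeping, so the proposal is correct and not materially different.
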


As can be seen from assumptions (a) and (b), there are three terms in \eqref{eq:mean_lambdaN} and $\eqref{eq:mean_mu_min}$. The first term corresponds to the optimal signal-to-noise ratio, the second term is incurred by the privacy constraint, and the last term is caused by the error in the Berry-Esseen bound. 

From Theorem \ref{thm:medsign_dp}, $\dpvote$ is $(\epsilon,\delta)$-DGDP. We now comment on the lower bound condition in (\ref{eq:mean_mu_min}). First, we fix the privacy level $(\epsilon,\delta)$. Assume that the number of local machines $m=O(n)$ and take $\tilde{s}$ in the Peeling algorithm satisfying $\sqrt{\tilde{s}\log p\log(1/\delta)}/\epsilon = O(\sqrt{m})$. Thus, (\ref{eq:mean_mu_min}) becomes $\min_{l\in S}|\theta_l^*|\geq C\sqrt{\log p/N}$. This is the classical minimax rate-optimal lower bound for variable selection. That is, if we replace $C\sqrt{\log p/N}$ with $o(\sqrt{\log p/N})$, then for any statistics $T_{N}$ constructed from $\{\vect{X}_1,\dots,\vect{X}_N\}$, there is a sufficiently small constant $c>0$ and a parameter $\vect{\theta}^*$ satisfying
$\min_{l\in S}|\theta_l^*|=c\sqrt{\log p/N}$ such that the support of $\vect{\theta}^{*}$ can not be estimated by $T_{N}$ consistently. See Theorem 3 in \cite{cai_liu_xia.2014jrssb} for further details. 

At the end of this section, we remark that the private sign selection of the mean vector has a wide range of applications. Because we allow each coordinate of the vector $\vect{X}$ to be correlated, it is not difficult to extend our method to differentially private support recovery for the sparse covariance matrix \citep{karoui.2008aos, bickel_levina.2008, cai_liu.2011} and the sparse precision matrix \citep{yuan.2010jmlr, cai_liu_luo.2011}.


\section{Private Sign Selection of Linear Regression} \label{sec:lasso_recov}

In this section, we leverage the idea of $\dpvote$ for the sign selection problem for sparse linear regression in distributed systems. With some commonly adopted assumptions on the covariate vector and the noise, we prove the effectiveness and privacy of our method.


\subsection{$\dpvote$ for Regression Parameter}

Let $(\vect{X}_i,Y_i)$ (where $i=1,\dots,N$) be the i.i.d. observations from the model
\begin{equation}    \label{eq:lin_model}
	Y=\vect{X}^{\tp}\vect{\theta}^*+z,
\end{equation}
where $\vect{\theta}^*=(\theta_1^*,\dots,\theta_p^*)^{\tp}$ is the true sparse regression parameter, and $z$ is the noise independent of the covariate $\vect{X}$. The full dataset is denoted by $\mbX=\{(\vect{X}_1,Y_1),\dots,(\vect{X}_N,Y_N)\}$, and it is assumed that $\mbX$ is divided into $m$ local machines $\mcH_j$ ($1\leq j\leq m$), each holds $n_j$ samples. Similarly, we attempt to recover the sign vector $\sgn(\vect{\theta}^*)=(\sgn(\theta^*_1),\dots,\sgn(\theta^*_p))^{\tp}$.

By leveraging the idea of $\dpvote$, we present Algorithm \ref{alg:priv_reg} for the sign recovery of sparse regression.
Let
	\begin{equation}    \label{eq:local_Lasso}
				\hat{\vect{\theta}}_j(\lambda)=\argmin{\vect{\theta}\in\mbR^p}\frac{1}{2n_j}\sum_{i\in\mcH_j}(Y_i-\vect{X}_i^{\tp}\vect{\theta})^2+\lambda|\vect{\theta}|_1.
			\end{equation}
We use sgn$(\hat{\vect{\theta}}_j(\lambda))$ for Majority Vote. We need to choose a problem-specific regularization parameter $\lambda_j$ for each local machine. In particular, we let $\lambda_j$ be the smallest number such that the local estimator $\hat{\vect{\theta}}_j(\lambda_j)$ having at most $\tilde{s}$ nonzero elements and no less than a universal constant $\lambda_N$ in (\ref{eq:regression_lambdaN}). More precisely, we let
\begin{equation}    \label{eq:lambda_set}
    \lambda_j = \min\left\{\lambda\;\middle|\;|\hat{\vect{\theta}}_j(\lambda)|_{0}\leq\tilde{s}, ~\lambda\geq\lambda_N\right\}.    
\end{equation}
It is worth noting that $\lambda_j$'s could be different for every local machine. It should also be noted that the $\lambda_j$'s always exist. To demonstrate this, we observe that the resulting estimator $\hat{\vect{\theta}}_j(\lambda)$ is $\vect{0}$ if $\lambda$ is sufficiently large. This implies that the set on the right-hand side of \eqref{eq:lambda_set} is always nonempty. For the implementation of (\ref{eq:local_Lasso}), we can use the LARS algorithm \citep{efron_hastie_etal.2004aos} because it obtains all the solution paths efficiently. Hence the choice of $\lambda_{j}$ can be quickly implemented.

\begin{algorithm}[!t]
	\caption{{\small Differentially private Majority Vote for sparse linear regression ($\dpvote$ Lasso)}}
	\label{alg:priv_reg}
	\hspace*{\algorithmicindent} \hspace{-0.5cm}   {\textbf{Input:} Data on local machines $\{(\vect{X}_i,Y_i)| i\in\mcH_j\}$ for $j=1,\dots,m$, the universal regularization parameter $\lambda_N$ in \eqref{eq:regression_lambdaN} and $\lambda_{j}$ in (\ref{eq:lambda_set}), privacy level $(\epsilon,\delta)$.} 	
	\begin{algorithmic}[1]
		\FOR{$j=1,\dots,m$}
		\STATE Let $\hat{\vect{\theta}}_j=\hat{\vect{\theta}}_j(\lambda_{j})$ in (\ref{eq:local_Lasso}),
			and obtain the sign vector $\vect{Q}_j = \text{sgn}(\hat{\vect{\theta}}_j)$. Send $\vect{Q}_{j}$ to the server.
		\ENDFOR
		\STATE Apply $\dpvote$ 
		\begin{equation}	\label{eq:dpmed-lasso_sign}
			\hat{\vect{Q}}(\mbX)=\dpvote(\{\vect{Q}_j\}_{1\leq j\leq m},\tilde{s},\epsilon,\delta).
		\end{equation}
	\end{algorithmic}
	 \textbf{Output:}  The sign vector $\hat{\vect{Q}}(\mbX)$.
\end{algorithm}


\subsection{Theory of Regression Parameter Sign Recovery}	\label{sec:priv_lasso_theory}

For linear regression, we consider the following distribution space
\begin{equation}	\label{eq:cov_dist}
\begin{aligned}
	\mcP_{\vect{X}, Y}(\vect{\theta}^*, \rho, \eta_1, C_1,\eta_2, C_2)=\Big\{\mbP\Big|& \rho\leq\Lambda_{\min}(\vect{\Sigma})\leq\Lambda_{\max}(\vect{\Sigma})\leq \rho^{-1}, \\ & \sup_{|\vect{v}|_2=1}\mbE_{\mbP}\big\{\exp(\eta_1|\vect{v}^{\tp}\vect{X}|^2)\big\}\leq C_1, \\ 
	&z=Y-\vect{X}^{\tp}\vect{\theta}^*,z\perp\vect{X}, \mbE_{\mbP}\big\{\exp(\eta_2|z|^2)\big\}\leq C_2  \Big\},
\end{aligned}
\end{equation}
where $\rho,C_1,C_2,\eta_1,\eta_2$ are positive constants. This implies that both the covariate vector $\vect{X}$ and the noise $z$ are sub-Gaussian. Then, we obtain the following sign consistency result.

\begin{theorem} \label{prop:dp_lasso_sign}
    (Sign consistency of $\dpvote$ Lasso) Let $\mbX=\{(\vect{X}_1,Y_1),\dots,(\vect{X}_N,Y_N)\}$ be $N$ i.i.d. random vectors sampled from $\mcP_{\vect{X},Y}(\vect{\theta}^*,\rho,\eta_1,C_1,\eta_2,C_2)$ distributed in $m$ machines $\mcH_1,\dots,\mcH_m$, with each machine holds $n_j$ ($1\leq j\leq m$) samples. Here we assume $n_1\asymp ...\asymp n_m\asymp n$. Moreover, assume that sufficiently large constants $C_3,C_4,\gamma_0$ exist such that
    \begin{itemize}
		\item[(a)] The dimension, the number of machines and the sparsity level satisfy $p=O(n^{\gamma_0})$,\\$\sqrt{\tilde{s}\log(1/\delta)}\log p/\epsilon =o(m), \tilde{s}=o(\sqrt{n/\log p})$, and we take 
		\begin{equation}    \label{eq:regression_lambdaN} 
		\lambda_N=C_3\Big(\sqrt{\frac{\log p}{N}}+\frac{\sqrt{\tilde{s}\log(1/\delta)}\log p}{m\sqrt{n}\epsilon}+\frac{1}{n}\Big);
		\end{equation}
		\item[(b)] For $S=\supp(\vect{\theta}^*)$, the minimal signal satisfies 
		\begin{equation}    \label{eq:regression_theta_min}
		    \min_{l\in S}|\theta_l^*|\geq C_4\Big(\sqrt{\frac{\log p}{N}}+\frac{\sqrt{\tilde{s}\log(1/\delta)}\log p}{m\sqrt{n}\epsilon}+\frac{1}{n}+\max_{1\leq j\leq m}\lambda_j\Big),
		\end{equation}
		with probability tending to $1$. 
		\item[(c)] The covariance matrix $\vect{\Sigma}=\mbE\vect{X}\vect{X}^{\tp}$ is positive definite. Let $\vect{\Sigma}^{-1}=(\vect{\omega}_1,\dots,\vect{\omega}_p)$, assume that
		\begin{equation}\label{eq:incoherence}
			\max_{l\in S^c}\frac{|\vect{\omega}_{-l}|_1}{\omega_{l,l}}\leq 1-\Delta_0.
		\end{equation}
	\end{itemize}
    Then for $\tilde{s}\geq s$, the estimator $\hat{\vect{Q}}(\mbX)$ defined in Algorithm \ref{alg:priv_reg} satisfies 
    \begin{equation*}
    	\lim_{N\rightarrow\infty}\mbP\Big(\hat{\vect{Q}}(\mbX) = \sgn(\vect{\theta}^*)\Big)= 1.
    \end{equation*}
\end{theorem}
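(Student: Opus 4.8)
The plan is to split the argument into a model-free stability layer and a model-specific analysis of the local Lasso sign vectors, and to connect them through the general result already established in Theorem \ref{thm:medsign_cons}. That theorem guarantees that $\dpvote$ reproduces the noiseless majority-vote vector $\bar{\vect{Q}}$ of \eqref{eq:tt} with probability $1-O(p^{-\gamma+2})$, provided $\tilde s\geq \bar s$ and the per-coordinate stability margin satisfies \eqref{eq:medsign_cons_term1}, i.e. $f_l(\vect{Q}_l^r,\bar Q_l)\geq T$ with $T:=8(\gamma+1)\sqrt{2\tilde s\log(2/\delta)}\log p/\epsilon$. Hence it suffices to prove a Proposition \ref{thm:priv_reg_supp}-type statement: with probability tending to one, (i) $\bar{\vect{Q}}=\sgn(\vect{\theta}^*)$ (so $\bar s=s\leq \tilde s$), and (ii) every stability margin exceeds $T$. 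Using $N_l^++N_l^0+N_l^-=m$ together with the utility function \eqref{eq:exp_util}, I would rewrite the margins as $f_l=2N_l^{\sgn(\theta_l^*)}-m$ for $l\in S$ and $f_l=m-2\max\{N_l^+,N_l^-\}$ for $l\in S^c$, which turns (i)--(ii) into two-sided control of the vote counts $N_l^{\pm}$.

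The heart of the work is a per-machine, per-coordinate analysis of the vote $Q_{l,j}=\sgn(\hat\theta_{j,l})$ coming from \eqref{eq:local_Lasso}. On the sub-Gaussian space $\mcP_{\vect{X},Y}$, using the KKT/primal-dual-witness calculus, the eigenvalue floor $\rho\leq\Lambda_{\min}(\vect{\Sigma})$, sub-Gaussian concentration of the local Gram matrix $n^{-1}\sum_{i\in\mcH_j}\vect{X}_i\vect{X}_i^{\tp}$ and the score $n^{-1}\sum_{i\in\mcH_j}\vect{X}_i z_i$, and the sparsity budget $\tilde s=o(\sqrt{n/\log p})$, I would characterize the random local sign. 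Because the per-machine signal-to-noise ratio is only of order $\sqrt{\log p/m}$ and hence vanishes, no single machine reliably recovers the sign; instead I would quantify the bias of the sign distribution. For $l\in S$, a Gaussian (Berry--Esseen) approximation of the relevant local statistic shows that $\mbP(\sgn(\hat\theta_{j,l})=\sgn(\theta_l^*))$ exceeds $1/2$ by an amount of order $\sqrt n\,(|\theta_l^*|-O(\lambda_j))$, up to an $O(1/n)$ approximation error. The minimal-signal condition \eqref{eq:regression_theta_min}, whose $\max_{j}\lambda_j$ term absorbs the Lasso shrinkage bias and whose $1/n$ term absorbs the Berry--Esseen remainder, keeps this bias positive and of order $\sqrt{\log p/m}$. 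For $l\in S^c$ the data-driven penalty \eqref{eq:lambda_set} only caps the support at $\tilde s$, so false positives persist on individual machines; here the incoherence condition \eqref{eq:incoherence} is what removes any systematic drift toward a fixed sign, so that the false votes split asymptotically symmetrically between $+1$ and $-1$ and the selection probability stays bounded below $1/2$ with room to spare.

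With the per-machine sign probabilities in hand, I would exploit that across the $m$ machines the indicators $\mbI(Q_{l,j}=\pm 1)$ are independent, and apply Hoeffding/Bernstein to show the margins concentrate around their means. For $l\in S$ the expected margin is $m$ times the per-machine bias, hence of order $\sqrt{m\log p}$; taking the constant $C_4$ in \eqref{eq:regression_theta_min} large makes this mean dominate both the fluctuation $O(\sqrt{m\log p})$ and the threshold $T$ under the scaling in conditions \eqref{eq:regression_lambdaN} and \eqref{eq:regression_theta_min}. For $l\in S^c$, symmetry plus the sub-threshold selection probability keep $\max\{N_l^+,N_l^-\}\leq (m-T)/2$. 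A union bound over the $p=O(n^{\gamma_0})$ coordinates—affordable because each per-coordinate failure probability decays polynomially in $p$—then yields (i) and (ii) simultaneously. Feeding this into Theorem \ref{thm:medsign_cons} and invoking the privacy guarantee of Theorem \ref{thm:medsign_dp} completes the proof.

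I expect the main obstacle to be the local Lasso sign analysis under the random, data-dependent penalty $\lambda_j$ of \eqref{eq:lambda_set}: the usual fixed-$\lambda$ primal-dual-witness argument must be made uniform over the random $\lambda_j$, and—unlike the mean model, where the symmetry of the $S^c$ votes follows directly from the symmetry of Gaussian tails—here the symmetry must be extracted from the incoherence condition \eqref{eq:incoherence} despite correlations between $X_l$ for $l\in S^c$ and the active set. A second, conceptual subtlety is that the per-machine signal is genuinely buried (SNR of order $\sqrt{\log p/m}$), so the decisive margin of order $\sqrt{m\log p}$ must be manufactured entirely through aggregation, which is precisely why the constants in \eqref{eq:regression_lambdaN} and \eqref{eq:regression_theta_min} have to be taken sufficiently large.
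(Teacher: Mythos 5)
Your proposal follows essentially the same route as the paper's proof: reduce to the noiseless vote $\bar{\vect{Q}}$ via Theorem \ref{thm:medsign_cons} with the margins rewritten as $f_l = 2N_l^{\sgn(\theta_l^*)}-m$ on $S$ and $f_l = m-2\max\{N_l^+,N_l^-\}$ on $S^c$; analyze each local Lasso sign through the KKT stationarity condition multiplied by $\vect{\Sigma}^{-1}$, with the Gram-matrix perturbation absorbed using $\tilde{s}=o(\sqrt{n/\log p})$; quantify the per-machine sign bias by a Berry--Esseen approximation and concentrate the vote counts with a Bernstein-type exponential inequality (this is exactly the paper's Lemma \ref{lemma:concen_mom}, applied to the scores $n^{-1}\sum_{i\in\mcH_j}\vect{\omega}_l^{\tp}\vect{X}_i z_i$); union bound over $p=O(n^{\gamma_0})$ coordinates; and compose with Theorem \ref{thm:medsign_dp}. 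Your accounting of the scales (per-machine bias of order $\sqrt{\log p/m}$, aggregate margin of order $\sqrt{m\log p}$ dominating $T=8(\gamma+1)\sqrt{2\tilde{s}\log(2/\delta)}\log p/\epsilon$) matches the paper's, as does your treatment of the data-driven penalty, which the paper sidesteps via the floor $\lambda_j\geq\lambda_N$ and by assuming \eqref{eq:regression_theta_min} with the $\max_{1\leq j\leq m}\lambda_j$ term holds with probability tending to one.

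One substantive correction on the null coordinates, where you predict the main obstacle: the paper does not extract any symmetry from \eqref{eq:incoherence}, and attempting to prove that false votes ``split asymptotically symmetrically'' would stall, since the score distribution is only mean zero, not symmetric. The actual mechanism is a threshold argument. From the coordinatewise expansion $\hat{\theta}_{j,l}-\theta_l^* = -\lambda_j\omega_{l,l}Z_{j,l}-\lambda_j\vect{\omega}_{l,-l}^{\tp}\vect{Z}_{j,-l}+n^{-1}\sum_{i\in\mcH_j}\vect{\omega}_l^{\tp}\vect{X}_i z_i+o_{\mbP}(\log p/n)$, a false vote $\hat{\theta}_{j,l}>0$ at $l\in S^c$ forces $Z_{j,l}=1$, and \eqref{eq:incoherence} bounds the total penalty contribution by $-\Delta_0\lambda_j\omega_{l,l}\leq -\Delta_0\lambda_N\omega_{l,l}$; hence the mean-zero score must exceed the deterministic threshold $\tfrac{1}{2}\Delta_0\lambda_N\omega_{l,l}>0$. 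Each of $\mbP(Q_{l,j}=1)$ and $\mbP(Q_{l,j}=-1)$ is therefore separately below $1/2$ by a margin governed by $\sqrt{n}\lambda_N$, with no claim that $N_l^+\approx N_l^-$, and Lemma \ref{lemma:concen_mom} (with $k\asymp T$) converts this into $\max\{N_l^+,N_l^-\}\leq (m-T)/2$ with probability $1-O(p^{-\gamma})$. With that substitution your sketch coincides with the paper's argument.
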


From Theorem \ref{thm:medsign_dp} we know $\dpvote$ regression is $(\epsilon,\delta)$-DGDP. We now make some comments regarding these assumptions. As can be seen from assumption (a), the lower bound of the regularization parameter $\lambda_N$ has three terms, which is the same as in \eqref{eq:mean_lambdaN}. In assumption (b), we assume $\eqref{eq:regression_theta_min}$ holds with a probability tending to $1$. Here, randomness comes from the selection of $\lambda_j$'s. Note that each $\lambda_j$ is chosen according to \eqref{eq:lambda_set} which depends on the data.  Assumption (c) implies that, for $l\in S^c$, the $l$-th row of the precision machine $\vect{\Sigma}^{-1}$ is dominated by the diagonal entry $\omega_{l,l}$. 

\begin{remark}
 In the lower bound condition \eqref{eq:regression_theta_min} on the signals,  there is an additional term $\max_{1\leq j\leq m}\lambda_j$, which depends on the magnitude of each $\lambda_j$. From the classical sparsity result for Lasso, under some regular conditions, $\supp(\hat{\vect{\theta}}_j)\subseteq S$ with probability tending to one, when $\lambda_{j}\geq \tilde{C}\sqrt{\log p/n}$ for a large $\tilde{C}>0$ (see \cite{wainwright2009sharp}). Therefore, it can be seen see that $\lambda_{j}$ in (\ref{eq:lambda_set}) lies in the interval $(\lambda_N,\tilde{C}\sqrt{\log p/n})$ with a probability tending to one.  Furthermore,
if the dimension $p$ satisfies $p=o(\sqrt{n/\log n})$ and $\sqrt{p\log p\log(1/\delta)}/\epsilon = O(\sqrt{m})=O(\sqrt{n})$, we can simply set $\tilde{s}=p$, then,
we have $\lambda_{j}=\lambda_{N}$. In this case, $\lambda_{N}\approx C\sqrt{\log p/N}$ and condition (\ref{eq:regression_theta_min}) becomes $\min_{l\in S}|\theta_l^*|\geq C\sqrt{\log p/N}$ for some $C>0$. This lower bound is optimal for variable selection consistency.  
\end{remark}


\section{Simulation Study}	\label{sec:sim}

In this section, we conduct two experiments to examine the performance of our Majority Vote procedure and the relevant differentially private versions. Since our study is the first to examine DGDP in a distributed system, it
is difficult to compare it with other methods. However,  we still choose the methods of \cite{cai_wang_zhang.2019}
as the object for comparison. There are two main reasons for this choice. First, the approaches suggested by \cite{cai_wang_zhang.2019} are based on iterative algorithms and can be implemented in a distributed system. Second, from comparisons with \cite{cai_wang_zhang.2019}, we can see that the methods designed for parameter estimation perform poorly for sign/variable selection. Therefore, it is important to investigate the problem of sign/variable selection  separately, as we claimed in the introduction. As we pointed out in the section on related works, there are also some special works on the variable selection with DP. However, some of them are computationally inefficient, and others cannot be applied directly to the distributed setting.

\subsection{Results for Sparse Mean Estimation}
In the first experiment, we consider the sparse mean estimation problem. The observations $\{\vect{X}_1,\dots,\vect{X}_N\}$ are sampled from the model
\begin{equation*}
	\vect{X}_i=\vect{\theta}^*+\vect{z}_i,
\end{equation*} 
where the noises $\vect{z}_i$'s are drawn from a multivariate normal distribution $\mcN(\vect{0},\vect{\Sigma})$. Here, the covariance matrix $\vect{\Sigma}$ is a $p\times p$ Toeplitz matrix with its $(i,j)$-th entry $\Sigma_{ij}=0.5^{|i-j|}$, where $1\leq i,j\leq p$. We set the dimension $p$ to $500$. The parameter of interest is defined as:
\begin{equation}    \label{eq:sim_mu}
	\vect{\theta}^*=(\,1,0.8,\cdots,0.2,-0.2,\cdots,-0.8,-1,\vect{0}^{\tp}_{p-10}\,)^{\rm T},
\end{equation}
which implies that the sparsity level $s$ is fixed at $10$. For the regularization parameter $\lambda_N$ on each local machine, we take $\lambda_N=0.1$ for simplicity. 
For the privacy level $(\epsilon,\delta)$, we take $\epsilon=0.5$ and $\delta = 0.05$. 
Next, we check the performance of the following three methods: Majority Vote without adding noise, $\mathrm{DPvote}$ in Algorithm \ref{alg:sign_mean}, and the method by \cite{cai_wang_zhang.2019}. 
\begin{enumerate}[label=(\alph*)]
	\item $\mathrm{Vote}$: Majority Vote without adding noise;	
	\item $\mathrm{DPvote}$: Majority Vote with DGDP;
	\item $\mathrm{CWZ}$: The method proposed in \cite{cai_wang_zhang.2019} for all samples on a single machine. Here we take the truncation level $R$  in \cite{cai_wang_zhang.2019} to be $2$.	
\end{enumerate}

Note that $\mathrm{Vote}$ is a non-privacy algorithm. For $\mathrm{DPvote}$ and $\mathrm{CWZ}$, we set the selection sparsity level $\tilde{s}$ to $15$.  The original $\mathrm{CWZ}$ is designed to protect the privacy of a single element. To protect the privacy of the whole local samples, we multiply the noise level with the local sample size $n$ by the composition theorem in Lemma \ref{lem:two_comp}. The performance of sign recovery is measured using the following two criteria:
\begin{itemize}	
	\item \textbf{False Discovery Rate (FDR). }
		\begin{equation*}
			\mathrm{FDR}=\frac{\sum_{l\notin S}\mbI\{\hat{Q}_l(\mbX)\neq0\}+\sum_{l\in S^-}\mbI\{\hat{Q}_l(\mbX)=1\}+\sum_{l\in S^+}\mbI\{\hat{Q}_l(\mbX)=-1\}}{\max[\sum_{l=1}^p\mbI\{\hat{Q}_l(\mbX)\neq0\},1]}.
		\end{equation*}
	\item \textbf{Power. }
		\begin{equation*}
			\mathrm{Power}=\frac{\sum_{l\in S}\mbI\{\hat{Q}_l(\mbX)=\sgn(\theta^*_l)\}}{\max[|S|,1]}.
		\end{equation*}
\end{itemize}

\paragraph{Effect of the number of machines} To evaluate the effect of the number of machines $m$, we fix the local sample size $n$ to be $500$ and vary the number of machines $m$ from $500$ to $1500$ by $100$. The total sample size $N=nm$ varies accordingly. Then we report the $\mathrm{FDR}$ and $\mathrm{Power}$ of the three methods: $\mathrm{DPvote}$, $\mathrm{Vote}$,  and $\mathrm{CWZ}$. The results are shown in Figure \ref{fig:mean_machine}.

As we can see, $\vote$ always performs the best on both the power and FDR. This is not surprising, as we do not add any noise to $\vote$. The $\mathrm{Power}$ of $\dpvote$ is also close to one and is higher than that of $\mathrm{CWZ}$ significantly. On the other hand, the $\mathrm{FDR}$s of $\vote$ and $\dpvote$ are close to zero, whereas the FDRs of $\mathrm{CWZ}$ are quite large. The behavior of $\mathrm{CWZ}$ indicates that procedures work well in parameter estimation may be not suitable for sign selection.

\paragraph{Effect of the privacy budget} To study the effect of the privacy budget $\epsilon$, we fix the number of machines $m$ to be $800$, the local sample size $n$ to be $500$, and vary the privacy level $\epsilon$ from $0.1$ to $1$ by $0.1$. Because $\vote$ is a non-private algorithm, we focus on the other two private algorithms in this part: $\mathrm{DPvote}$ and $\mathrm{CWZ}$. The $\mathrm{FDR}$ and $\mathrm{Power}$ of these methods are presented in Figure \ref{fig:mean_privacy}.

With an increase in the privacy level $\epsilon$, the performance of $\mathrm{DPvote}$ improves. This is because a larger privacy level $\epsilon$ induces smaller random noise in private algorithms. Note that $\mathrm{DPvote}$ outperforms $\mathrm{CWZ}$ uniformly at all  privacy levels. Although $\mathrm{CWZ}$ behaves better as $\epsilon$ increases, it still has a large FDR even when $\epsilon=1$.

\begin{figure}[!h]
	\begin{center}
		\includegraphics[width=0.8\textwidth]{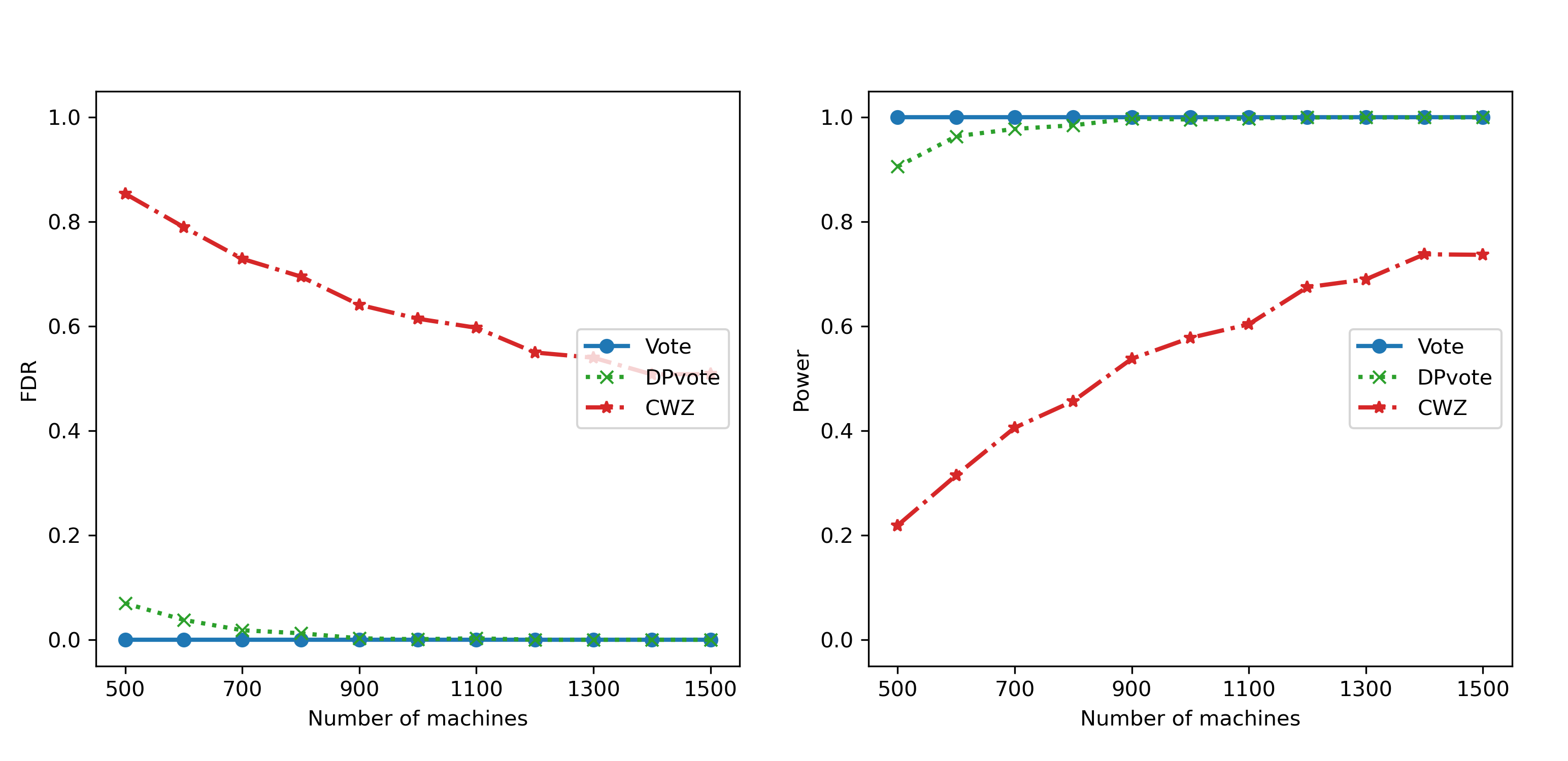}
	\end{center}
	\caption{The FDR and Power over the number of machines for sparse mean estimation. The number of machines varies from $500$ to $1500$, the local sample size is $500$, and the dimension $p$ is $500$.} 
	\label{fig:mean_machine}
\end{figure}

\begin{figure}[!h]
	\begin{center}
		\includegraphics[width=0.8\textwidth]{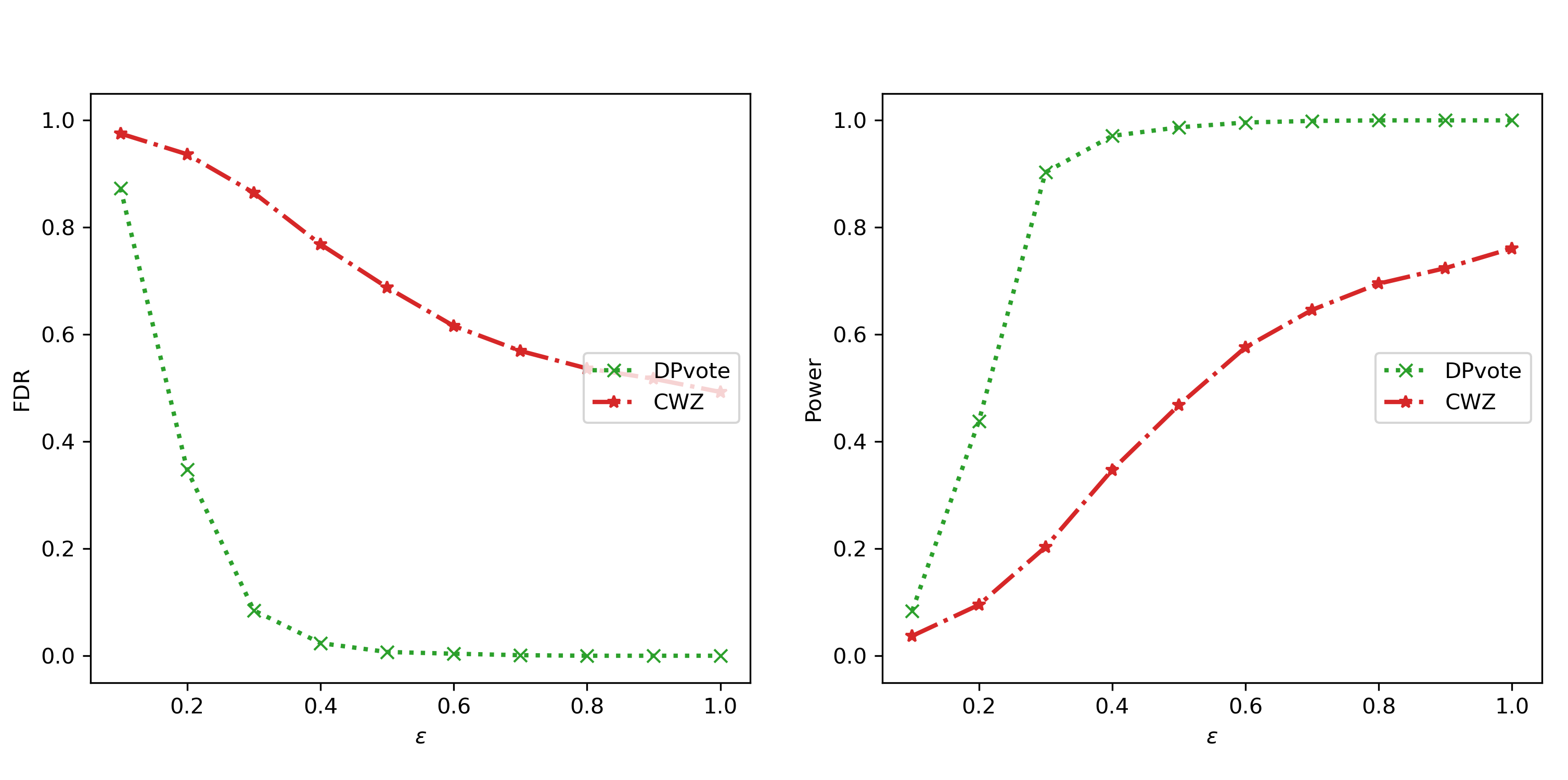}
	\end{center}
	\caption{The FDR and Power over the privacy level $\epsilon$ for sparse mean estimation. The local sample size is $500$, the number of machines is $800$, and the dimension $p$ is $500$.}
	\label{fig:mean_privacy}
\end{figure}

\subsection{Results for Sparse Linear Regression}
In this section, we consider the linear model defined by \eqref{eq:lin_model}. The i.i.d. noises $z_i$'s are drawn from $\mcN(0,1)$ and the i.i.d. covariate vectors $\vect{X}_i=( X_{i,1},\dots,X_{i,p})^{\tp}$ ($i=1,\dots,N$) are drawn from a multivariate normal distribution $\mcN(\vect{0},\vect{\Sigma})$. The covariance matrix $\vect{\Sigma}$ is a $p\times p$ Toeplitz matrix with the $(i,j)$-th entry $\Sigma_{ij}=0.5^{|i-j|}$, where $1\leq i,j\leq p$. We fix the dimension $p=200$. Moreover, we set the true coefficient $\vect{\theta}^*$ as the same as $\vect{\theta}^*$ in \eqref{eq:sim_mu}. In the CWZ method proposed in \cite{cai_wang_zhang.2019}, we take the truncation level $R$ to be $2$, the number of steps $T$ to be 20, and the step size $\eta$ to be $0.1$. 

 We adopt the same settings as the counterpart in the experiment for the sparse mean estimation. 
The $\mathrm{FDR}$ and $\mathrm{Power}$  are shown in Figures \ref{fig:lasso_machine} and \ref{fig:lasso_privacy}. It is not surprising that all these methods perform similarly to mean estimation. That is, with an increase in the number of machines $m$ or the privacy level $\epsilon$, the $\mathrm{FDR}$ of $\dpvote$ approaches zero and $\mathrm{Power}$ approaches one. Again, $\dpvote$ outperforms $\mathrm{CWZ}$ in all settings. 

\begin{figure}[!h]
	\begin{center}
		\includegraphics[width=0.8\textwidth]{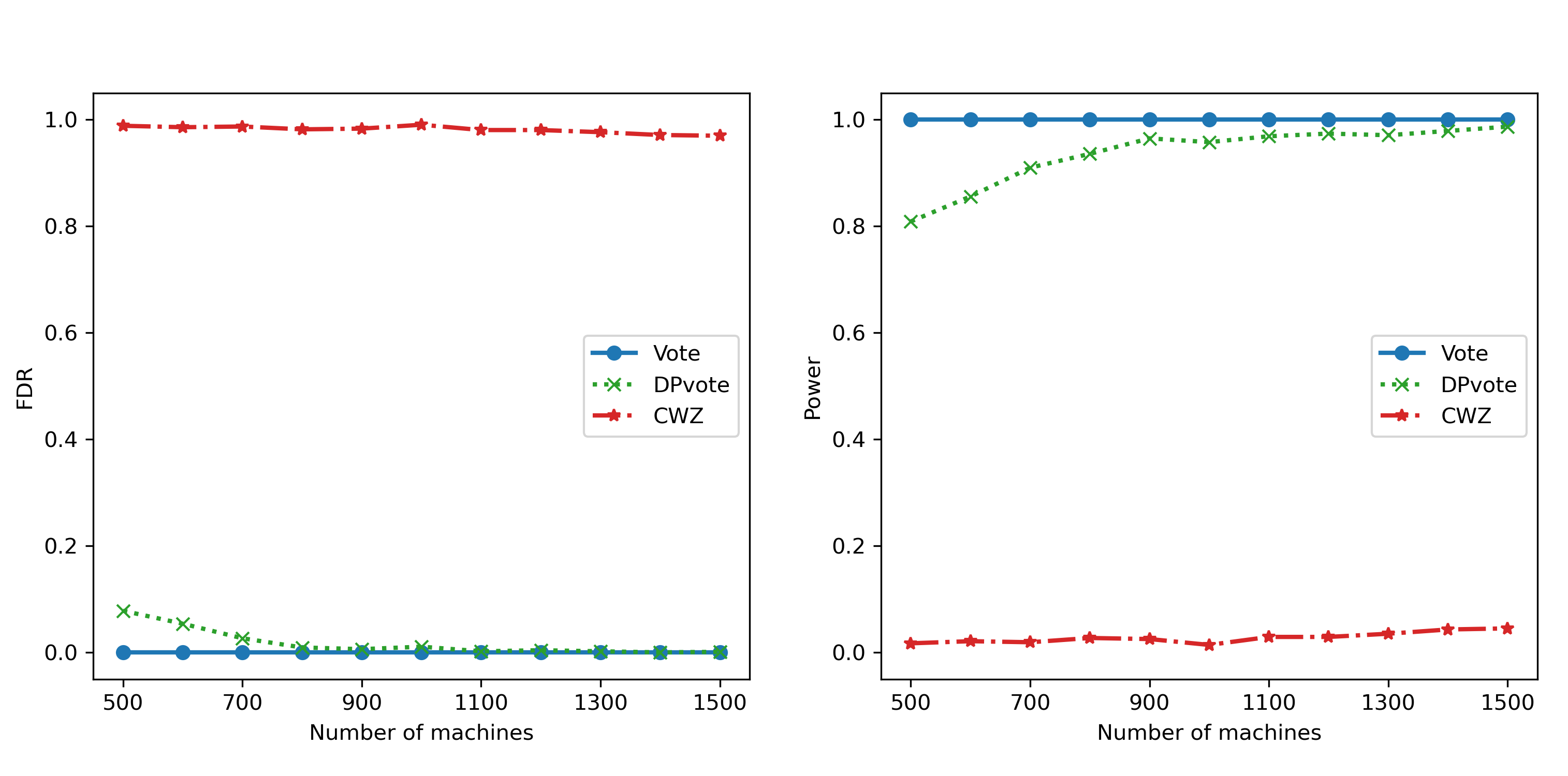}
	\end{center}
	\caption{The FDR and Power over the number of machines for sparse linear regression. The number of machines varies from $500$ to $1500$, the local sample size is $500$, and the dimension $p$ is $500$.}
	\label{fig:lasso_machine}
\end{figure}

\begin{figure}[!h]
	\begin{center}
		\includegraphics[width=0.8\textwidth]{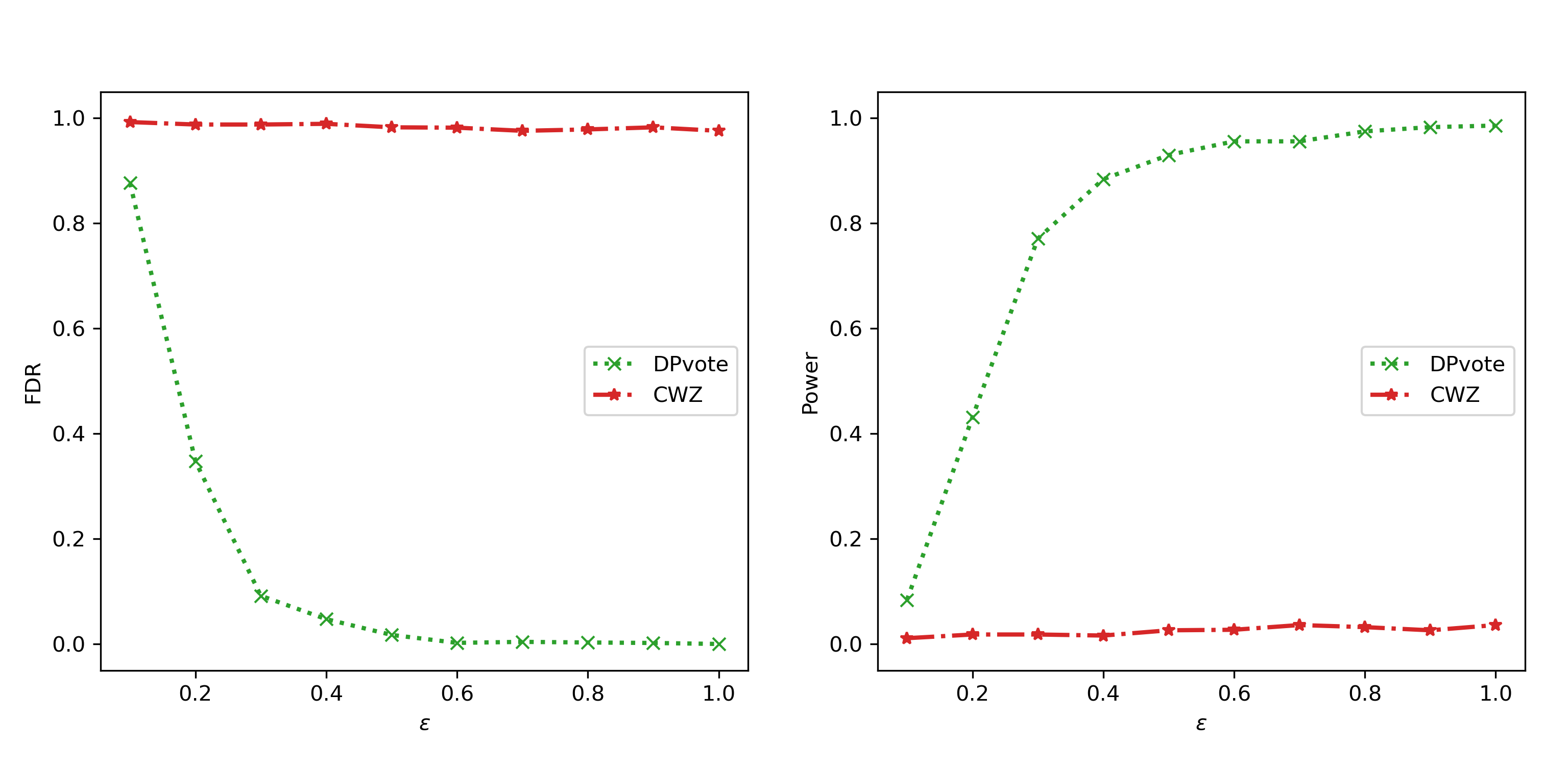}
	\end{center}
	\caption{The FDR and Power over the privacy level $\epsilon$ for sparse linear regression. The local sample size is $500$, the number of machines is $800$, and the dimension $p$ is $500$.}
	\label{fig:lasso_privacy}
\end{figure}

\section{Conclusion}	\label{sec:conclude}

In summary, this study considers the sign selection problem under privacy constraints and applies it to private variable selection for sparse mean vectors and sparse linear regression in a distributed setup. Using the differentially private Majority Vote algorithm, we can consistently recover the signs. The admissible signal-to-noise ratio can have the same order as that of the non-private algorithm. In the following, we will discuss several additional topics and intriguing avenues for future research.

In the following section, we will explore several additional topics and intriguing avenues for future research.

\begin{itemize}
	\item \textbf{More sparse recovery problems.} In this paper, we apply the DPvote algorithm to variable selection for sparse mean vectors and sparse linear regression. It is natural to extend our method to differentially private support recovery for sparse covariance matrix \citep{karoui.2008aos, bickel_levina.2008, cai_liu.2011} and sparse precision matrix \citep{yuan.2010jmlr, cai_liu_luo.2011}. It is also interesting to consider applying our method to the debias-Lasso problem \citep{javanmard2014confidence, lee_liu_etal.2017, javanmard_montanari.2018};
	\item \textbf{Untrusted curator.} In the DGDP definition, we make the assumption that a trusted curator is present to collect information from each machine and execute the algorithm. In scenarios where the curator cannot be trusted, secure multi-party computation (SMC) techniques are typically employed to prevent any party from learning intermediate information \citep{goldreich1998secure, du_atallah.2001, evans_etal.2018}. While this setting is beyond the scope of our current paper, It would be interesting to explore combining our method with SMC techniques to address this issue. Moreover, this scenario bears a resemblance to the local DP setup, where additional noise is added when transmitting data to an untrusted curator \citep{duchi_jordan_wainwright.2018}. It would be possible to explore combining techniques from local DP with our method to tackle this privacy concern;
	\item \textbf{Optimal minimal signal condition.} In condition (b) of Theorem 3, we can find that the minimal signal condition is lower bounded by three terms. The second term $\sqrt{\tilde{s}\log(1/\delta)}$\mbox{}\\$\log p/(m\sqrt{n}\epsilon)$ is new and arises from the privacy constraint. In \cite{butucea_etal.2020arXiv}, the author provided a $d/(\epsilon\sqrt{n})$  lower bound of the minimal signal under local DP setup (see Corollary 2.2). We acknowledge that the proof techniques for DP algorithms and local DP algorithms are different (see \cite{cai_wang_zhang.2019} and \cite{duchi_jordan_wainwright.2018}). Additionally, in \cite{bafna_ullman.2017colt} and \cite{steinke_ullman.2017focs}, the authors demonstrated sample complexity lower bounds for DP selection, independent of the privacy levels $\epsilon$ and $\delta$.  It would be interesting to develop a minimax theory for minimal signal under DP, as well as our proposed DGDP setups. 
\end{itemize}

\bibliographystyle{asa}
\bibliography{private_med_dc}

\begin{thebibliography}{62}
\newcommand{\enquote}[1]{``#1''}
\expandafter\ifx\csname natexlab\endcsname\relax\def\natexlab#1{#1}\fi

\bibitem[{Arratia et~al.(1989)Arratia, Goldstein, and
  Gordon}]{arratia_etal.1989aop}
Arratia, R., Goldstein, L., and Gordon, L. (1989), \enquote{Two moments suffice
  for Poisson approximations: The Chen-Stein method,} \textit{Ann. Probab.},
  17, 9--25.

\bibitem[{Bafna and Ullman(2017)}]{bafna_ullman.2017colt}
Bafna, M. and Ullman, J. (2017), \enquote{The price of selection in
  differential privacy,} in \textit{Proceedings of the 2017 Conference on
  Learning Theory}, PMLR, vol.~65 of \textit{Proceedings of Machine Learning
  Research}, pp. 151--168.

\bibitem[{Bassily et~al.(2014)Bassily, Smith, and
  Thakurta}]{bassily_smith_thakurta.2014}
Bassily, R., Smith, A., and Thakurta, A. (2014), \enquote{Private empirical
  risk minimization: Efficient algorithms and tight error bounds,} in
  \textit{Proceedings of the 2014 IEEE 55th Annual Symposium on Foundations of
  Computer Science}, IEEE Computer Society, FOCS '14, p. 464–473.

\bibitem[{Battey et~al.(2018)Battey, Fan, Liu, Lu, and
  Zhu}]{battey2018distributed}
Battey, H., Fan, J., Liu, H., Lu, J., and Zhu, Z. (2018), \enquote{Distributed
  estimation and inference with statistical guarantees,} \textit{Ann.
  Statist.}, 46, 1352--1382.

\bibitem[{Bellet et~al.(2015)Bellet, Liang, Garakani, Balcan, and
  Sha}]{bellet_liang_etal.2015}
Bellet, A., Liang, Y., Garakani, A.~B., Balcan, M.-F., and Sha, F. (2015),
  \textit{A Distributed Frank-Wolfe Algorithm for Communication-Efficient
  Sparse Learning}, pp. 478--486.

\bibitem[{Bickel and Levina(2008)}]{bickel_levina.2008}
Bickel, P.~J. and Levina, E. (2008), \enquote{{Covariance regularization by
  thresholding},} \textit{Ann. Statist.}, 36, 2577 -- 2604.

\bibitem[{Boyd et~al.(2011)Boyd, Parikh, Chu, Peleato, and
  Eckstein}]{boyd_etal.2011admm}
Boyd, S., Parikh, N., Chu, E., Peleato, B., and Eckstein, J. (2011),
  \enquote{Distributed optimization and statistical learning via the
  alternating direction method of multipliers,} \textit{Found. Trends Mach.
  Learn.}, 3, 1--122.

\bibitem[{Bu et~al.(2020)Bu, Dong, Long, and Su}]{bu2020deep}
Bu, Z., Dong, J., Long, Q., and Su, W.~J. (2020), \enquote{Deep learning with
  gaussian differential privacy,} \textit{Harvard data science review}, 2020.

\bibitem[{Bun and Steinke(2016)}]{bun_steinke.2016}
Bun, M. and Steinke, T. (2016), \enquote{Concentrated differential privacy:
  Simplifications, extensions, and lower bounds,} in \textit{TCC (B1)},
  Springer, pp. 635--658.

\bibitem[{Butucea et~al.(2023)Butucea, Dubois, and
  Saumard}]{butucea_etal.2020arXiv}
Butucea, C., Dubois, A., and Saumard, A. (2023), \enquote{Phase transitions for
  support recovery under local differential privacy,} \textit{Math. Stat.
  Learn.}, 6, 1--50.

\bibitem[{Butucea et~al.(2018)Butucea, Ndaoud, Stepanova, and
  Tsybakov}]{butucea_etal.2018aos}
Butucea, C., Ndaoud, M., Stepanova, N.~A., and Tsybakov, A.~B. (2018),
  \enquote{{Variable selection with Hamming loss},} \textit{Ann. Statist.}, 46,
  1837 -- 1875.

\bibitem[{Cai and Liu(2011)}]{cai_liu.2011}
Cai, T. and Liu, W. (2011), \enquote{Adaptive thresholding for sparse
  covariance matrix estimation,} \textit{J. Amer. Statist. Assoc.}, 106,
  672--684.

\bibitem[{Cai et~al.(2011)Cai, Liu, and Luo}]{cai_liu_luo.2011}
Cai, T., Liu, W., and Luo, X. (2011), \enquote{A constrained $\ell_1$
  minimization approach to sparse precision matrix estimation,} \textit{J.
  Amer. Statist. Assoc.}, 106, 594--607.

\bibitem[{Cai et~al.(2014)Cai, Liu, and Xia}]{cai_liu_xia.2014jrssb}
Cai, T.~T., Liu, W., and Xia, Y. (2014), \enquote{Two-sample test of high
  dimensional means under dependence,} \textit{J. R. Stat. Soc. Series B Stat.
  Methodol.}, 76, 349--372.

\bibitem[{Cai et~al.(2021)Cai, Wang, and Zhang}]{cai_wang_zhang.2019}
Cai, T.~T., Wang, Y., and Zhang, L. (2021), \enquote{{The cost of privacy:
  Optimal rates of convergence for parameter estimation with differential
  privacy},} \textit{Ann. Statist.}, 49, 2825 -- 2850.

\bibitem[{{Chen} et~al.(2021){Chen}, {Miao}, and
  {Wang}}]{chen_miao_wang.2021arXiv}
{Chen}, X., {Miao}, S., and {Wang}, Y. (2021), \enquote{{Differential privacy
  in personalized pricing with nonparametric demand models},} \textit{Oper.
  Res.}, To appear.

\bibitem[{Chen et~al.(2022)Chen, Simchi-Levi, and Wang}]{chen_simchi_wang.2022}
Chen, X., Simchi-Levi, D., and Wang, Y. (2022), \enquote{Privacy-preserving
  dynamic personalized pricing with demand learning,} \textit{Manag. Sci.}, 68,
  4878--4898.

\bibitem[{Chow and Teicher(2012)}]{chow_teicher.2012}
Chow, Y. and Teicher, H. (2012), \textit{Probability Theory: Independence,
  Interchangeability, Martingales}, Springer Texts in Statistics, Springer New
  York.

\bibitem[{Dong et~al.(2022)Dong, Roth, and Su}]{dong_roth_su.2019}
Dong, J., Roth, A., and Su, W.~J. (2022), \enquote{Gaussian differential
  privacy,} \textit{J. R. Stat. Soc. Series B Stat. Methodol.}, 84, 3--37.

\bibitem[{Du and Atallah(2001)}]{du_atallah.2001}
Du, W. and Atallah, M.~J. (2001), \enquote{Secure Multi-Party Computation
  Problems and Their Applications: A Review and Open Problems,} in
  \textit{Proceedings of the 2001 Workshop on New Security Paradigms},
  Association for Computing Machinery, NSPW '01, p. 13–22.

\bibitem[{Duchi et~al.(2018)Duchi, Jordan, and
  Wainwright}]{duchi_jordan_wainwright.2018}
Duchi, J.~C., Jordan, M.~I., and Wainwright, M.~J. (2018), \enquote{Minimax
  optimal procedures for locally private estimation,} \textit{J. Amer. Statist.
  Assoc.}, 113, 182--201.

\bibitem[{Dwork and Lei(2009)}]{dwork_lei.2009}
Dwork, C. and Lei, J. (2009), \enquote{Differential privacy and robust
  statistics,} in \textit{Proceedings of the Forty-First Annual ACM Symposium
  on Theory of Computing}, Association for Computing Machinery, STOC ’09, p.
  371–380.

\bibitem[{Dwork et~al.(2006)Dwork, McSherry, Nissim, and
  Smith}]{dwork_etal.2006}
Dwork, C., McSherry, F., Nissim, K., and Smith, A. (2006), \enquote{Calibrating
  noise to sensitivity in private data analysis,} in \textit{Theory of
  Cryptography}, Springer Berlin Heidelberg, pp. 265--284.

\bibitem[{Dwork and Roth(2014)}]{dwork_aaron.2014}
Dwork, C. and Roth, A. (2014), \enquote{The algorithmic foundations of
  differential privacy,} \textit{Found. Trends Theor. Comput. Sci.}, 9,
  211–407.

\bibitem[{Dwork et~al.(2021)Dwork, Su, and
  Zhang}]{dwork_su_li.2018differentially}
Dwork, C., Su, W., and Zhang, L. (2021), \enquote{Differentially private false
  discovery rate control,} \textit{J. priv. confid.}, 11.

\bibitem[{Dwork et~al.(2014)Dwork, Talwar, Thakurta, and
  Zhang}]{dwork_talwar_etal.2014stoc}
Dwork, C., Talwar, K., Thakurta, A., and Zhang, L. (2014), \enquote{Analyze
  Gauss: Optimal bounds for privacy-preserving principal component analysis,}
  in \textit{Proceedings of the Forty-Sixth Annual ACM Symposium on Theory of
  Computing}, Association for Computing Machinery, STOC '14, p. 11–20.

\bibitem[{Efron et~al.(2004)Efron, Hastie, Johnstone, and
  Tibshirani}]{efron_hastie_etal.2004aos}
Efron, B., Hastie, T., Johnstone, I., and Tibshirani, R. (2004),
  \enquote{{Least angle regression},} \textit{Ann. Statist.}, 32, 407 -- 499.

\bibitem[{Evans et~al.(2018)Evans, Kolesnikov, and Rosulek}]{evans_etal.2018}
Evans, D., Kolesnikov, V., and Rosulek, M. (2018), \enquote{A Pragmatic
  Introduction to Secure Multi-Party Computation,} \textit{Foundations and
  Trends® in Privacy and Security}, 2, 70--246.

\bibitem[{Fan and Li(2001)}]{fan_li.2001jasa}
Fan, J. and Li, R. (2001), \enquote{Variable selection via nonconcave penalized
  likelihood and its oracle properties,} \textit{J. Amer. Statist. Assoc.}, 96,
  1348--1360.

\bibitem[{Goldreich(1998)}]{goldreich1998secure}
Goldreich, O. (1998), \enquote{Secure multi-party computation,}
  \textit{Manuscript. Preliminary version}, 78.

\bibitem[{Hamm et~al.(2016)Hamm, Cao, and Belkin}]{hamm_cao_belkin.2016}
Hamm, J., Cao, Y., and Belkin, M. (2016), \enquote{Learning privately from
  multiparty data,} in \textit{Proceedings of The 33rd International Conference
  on Machine Learning}, PMLR, vol.~48 of \textit{Proceedings of Machine
  Learning Research}, pp. 555--563.

\bibitem[{Jain and Thakurta(2014)}]{jain_thakurta.2014icml}
Jain, P. and Thakurta, A.~G. (2014), \enquote{(Near) Dimension independent risk
  bounds for differentially private learning,} in \textit{Proceedings of the
  31st International Conference on Machine Learning}, vol.~32, pp. 476--484.

\bibitem[{Javanmard and Montanari(2014)}]{javanmard2014confidence}
Javanmard, A. and Montanari, A. (2014), \enquote{Confidence intervals and
  hypothesis testing for high-dimensional regression,} \textit{J. Mach. Learn.
  Res.}, 15, 2869--2909.

\bibitem[{Javanmard and Montanari(2018)}]{javanmard_montanari.2018}
--- (2018), \enquote{{Debiasing the lasso: Optimal sample size for Gaussian
  designs},} \textit{Ann. Statist.}, 46, 2593 -- 2622.

\bibitem[{Jayaraman et~al.(2018)Jayaraman, Wang, Evans, and
  Gu}]{jayaraman_wang_etal.2018}
Jayaraman, B., Wang, L., Evans, D., and Gu, Q. (2018), \enquote{Distributed
  learning without distress: privacy-preserving empirical risk minimization,}
  in \textit{Advances in Neural Information Processing Systems 31}, Curran
  Associates, Inc., pp. 6343--6354.

\bibitem[{Karoui(2008)}]{karoui.2008aos}
Karoui, N.~E. (2008), \enquote{{Operator norm consistent estimation of
  large-dimensional sparse covariance matrices},} \textit{Ann. Statist.}, 36,
  2717 -- 2756.

\bibitem[{Kasiviswanathan and Jin(2016)}]{kasiviswanathan_jin.2016icml}
Kasiviswanathan, S.~P. and Jin, H. (2016), \enquote{Efficient private empirical
  risk minimization for high-dimensional learning,} in \textit{Proceedings of
  The 33rd International Conference on Machine Learning}, vol.~48, pp.
  488--497.

\bibitem[{Kifer et~al.(2012)Kifer, Smith, and
  Thakurta}]{kifer_smith_thakurta.2012colt}
Kifer, D., Smith, A., and Thakurta, A. (2012), \enquote{Private convex
  empirical risk minimization and high-dimensional regression,} in
  \textit{Proceedings of the 25th Annual Conference on Learning Theory}, PMLR,
  vol.~23 of \textit{Proceedings of Machine Learning Research}, pp.
  25.1--25.40.

\bibitem[{Lee et~al.(2017)Lee, Liu, Sun, and Taylor}]{lee_liu_etal.2017}
Lee, J.~D., Liu, Q., Sun, Y., and Taylor, J.~E. (2017),
  \enquote{Communication-efficient sparse regression,} \textit{J. Mach. Learn.
  Res.}, 18, 115–144.

\bibitem[{Lei(2011)}]{lei.2011}
Lei, J. (2011), \enquote{Differentially private M-estimators,} in
  \textit{Advances in Neural Information Processing Systems 24}, Curran
  Associates, Inc., pp. 361--369.

\bibitem[{Lei et~al.(2018)Lei, Charest, Slavkovic, Smith, and
  Fienberg}]{lei_charest_etal.2018jrssa}
Lei, J., Charest, A., Slavkovic, A., Smith, A., and Fienberg, S. (2018),
  \enquote{{Differentially private model selection with penalized and
  constrained likelihood},} \textit{J. R. Stat. Soc. Ser. A Stat. Soc.}, 181,
  609--633.

\bibitem[{{McSherry} and {Talwar}(2007)}]{mcsherry_talwar.2007}
{McSherry}, F. and {Talwar}, K. (2007), \enquote{Mechanism design via
  differential privacy,} in \textit{48th Annual IEEE Symposium on Foundations
  of Computer Science (FOCS'07)}, pp. 94--103.

\bibitem[{Meinshausen and Bühlmann(2006)}]{meinshausen2006}
Meinshausen, N. and Bühlmann, P. (2006), \enquote{High-dimensional graphs and
  variable selection with the Lasso,} \textit{Ann. Statist.}, 34, 1436--1462.

\bibitem[{Miller(2002)}]{miller.2002}
Miller, A. (2002), \textit{Subset Selection in Regression}, New York: Chapman
  and Hall/CRC.

\bibitem[{{Mironov}(2017)}]{ilya.2017}
{Mironov}, I. (2017), \enquote{Rényi differential privacy,} in \textit{2017
  IEEE 30th Computer Security Foundations Symposium (CSF)}, pp. 263--275.

\bibitem[{Nissim et~al.(2007)Nissim, Raskhodnikova, and
  Smith}]{kobbi_sofya_adam.2007}
Nissim, K., Raskhodnikova, S., and Smith, A. (2007), \enquote{Smooth
  sensitivity and sampling in private data analysis,} in \textit{Proceedings of
  the Thirty-Ninth Annual ACM Symposium on Theory of Computing}, Association
  for Computing Machinery, STOC ’07, p. 75–84.

\bibitem[{Pathak et~al.(2010)Pathak, Rane, and Raj}]{pathak_rane_bhiksha.2010}
Pathak, M., Rane, S., and Raj, B. (2010), \enquote{Multiparty differential
  privacy via aggregation of locally trained classifiers,} in \textit{Advances
  in Neural Information Processing Systems 23}, Curran Associates, Inc., pp.
  1876--1884.

\bibitem[{Qiao et~al.(2021)Qiao, Su, and Zhang}]{qiao_su_li.2021icml}
Qiao, G., Su, W., and Zhang, L. (2021), \enquote{Oneshot differentially private
  Top-k selection,} in \textit{Proceedings of the 38th International Conference
  on Machine Learning}, PMLR, vol. 139 of \textit{Proceedings of Machine
  Learning Research}, pp. 8672--8681.

\bibitem[{Shi et~al.(2015)Shi, Ling, Wu, and Yin}]{shi_ling_etal.2015tsp}
Shi, W., Ling, Q., Wu, G., and Yin, W. (2015), \enquote{A Proximal Gradient
  Algorithm for Decentralized Composite Optimization,} \textit{IEEE Trans.
  Signal Process.}, 63, 6013--6023.

\bibitem[{Smith(2011)}]{adam.2011}
Smith, A. (2011), \enquote{Privacy-preserving statistical estimation with
  optimal convergence rates,} in \textit{Proceedings of the Forty-Third Annual
  ACM Symposium on Theory of Computing}, Association for Computing Machinery,
  STOC ’11, p. 813–822.

\bibitem[{Steinke and Ullman(2017)}]{steinke_ullman.2017focs}
Steinke, T. and Ullman, J. (2017), \enquote{Tight lower bounds for
  differentially private selection,} in \textit{2017 IEEE 58th Annual Symposium
  on Foundations of Computer Science (FOCS)}, pp. 552--563.

\bibitem[{Talwar et~al.(2015)Talwar, Guha~Thakurta, and
  Zhang}]{talwar_etal.2015nips}
Talwar, K., Guha~Thakurta, A., and Zhang, L. (2015), \enquote{Nearly optimal
  private LASSO,} in \textit{Advances in Neural Information Processing
  Systems}, Curran Associates, Inc., vol.~28.

\bibitem[{Thakurta and Smith(2013)}]{thakurta_smith.2013colt}
Thakurta, A.~G. and Smith, A. (2013), \enquote{Differentially private feature
  selection via stability arguments, and the robustness of the Lasso,} in
  \textit{Proceedings of the 26th Annual Conference on Learning Theory}, PMLR,
  vol.~30 of \textit{Proceedings of Machine Learning Research}, pp. 819--850.

\bibitem[{Tibshirani(1996)}]{tibshirani1996regression}
Tibshirani, R. (1996), \enquote{Regression shrinkage and selection via the
  lasso,} \textit{J. R. Stat. Soc. Series B Stat. Methodol.}, 58, 267--288.

\bibitem[{Vadhan(2017)}]{vadhan.2017}
Vadhan, S. (2017), \enquote{The complexity of differential privacy,} in
  \textit{Tutorials on the Foundations of Cryptography}, Springer, pp.
  347--450.

\bibitem[{Wainwright(2009)}]{wainwright2009sharp}
Wainwright, M.~J. (2009), \enquote{Sharp thresholds for high-dimensional and
  noisy sparsity recovery using $\ell_1$-constrained quadratic programming
  (Lasso),} \textit{IEEE Trans. Inform. Theory}, 55, 2183--2202.

\bibitem[{Wang and Xu(2019)}]{wang_xu.2019icml}
Wang, D. and Xu, J. (2019), \enquote{On sparse linear regression in the local
  differential privacy model,} in \textit{Proceedings of the 36th International
  Conference on Machine Learning}, vol.~97, pp. 6628--6637.

\bibitem[{Wang et~al.(2017)Wang, Kolar, Srebro, and
  Zhang}]{wang_kolar_etal.2017}
Wang, J., Kolar, M., Srebro, N., and Zhang, T. (2017), \enquote{Efficient
  distributed learning with sparsity,} in \textit{Proceedings of the 34th
  International Conference on Machine Learning}, vol.~70, pp. 3636--3645.

\bibitem[{Wang et~al.(2016)Wang, Lei, and Fienberg}]{wang_lei_fienberg.2016}
Wang, Y.-X., Lei, J., and Fienberg, S.~E. (2016), \enquote{On-average
  KL-privacy and its equivalence to generalization for max-entropy mechanisms,}
  in \textit{Privacy in Statistical Databases}, Cham: Springer International
  Publishing, pp. 121--134.

\bibitem[{Wasserman and Zhou(2010)}]{wasserman_zhou.2010}
Wasserman, L. and Zhou, S. (2010), \enquote{A Statistical Framework for
  Differential Privacy,} \textit{J. Amer. Statist. Assoc.}, 105, 375--389.

\bibitem[{Yuan(2010)}]{yuan.2010jmlr}
Yuan, M. (2010), \enquote{High dimensional inverse covariance matrix estimation
  via linear programming,} \textit{J. Mach. Learn. Res.}, 11, 2261--2286.

\bibitem[{Zhu et~al.(2021)Zhu, Li, and Wang}]{zhu_li_wang.2019}
Zhu, X., Li, F., and Wang, H. (2021), \enquote{Least-Square Approximation for a
  Distributed System,} \textit{J. Comput. Graph. Statist.}, 30, 1004--1018.

\end{thebibliography}

\newpage

\section{Appendix}



\subsection{Proof of Results in Section \ref{sec:medsign_theory}}\label{sec:proof_2-4}

\begin{proof}[Proof of Theorem \ref{thm:medsign_dp}]
	Notice that the stability function $f^S(\cdot)$ has sensitive $2$. By Lemma 2.4 of \cite{dwork_su_li.2018differentially}, we know each round of selection is $(\epsilon',0)$ private, where
	\begin{equation*}
		\epsilon' = \frac{\epsilon}{4\sqrt{2\tilde{s}\log(2/\delta)}}.
	\end{equation*}
	Then by advanced composition theorem (Lemma \ref{thm:adv_comp}), we know that the selection of the set $\tilde{S}$ with $\tilde{s}$ indices is $(\frac{\epsilon}{2},\frac{\delta}{2})$-differentially private. Notice that the functions $f_l(\cdot,1),f_l(\cdot,-1),f_l(\cdot,0)$ all have global sensitivity $2$. By Theorem 3.10 of \cite{dwork_aaron.2014}, for every $l\in\tilde{S}$, the generation of $\hat{Q}_l$ is $(\epsilon',0)$-differentially private. Then by the advanced composition theorem (Lemma \ref{thm:adv_comp}), we know that the generation of the $\tilde{s}$-tuple $\{\hat{Q}_l|l\in \tilde{S}\}$ is $(\frac{\epsilon}{2},\frac{\delta}{2})$-differentially private. Next, we apply the composition theorem (Lemma \ref{lem:two_comp}) and conclude that the composition of these two mechanisms is $(\epsilon,\delta)$-differentially private.
\end{proof}

\begin{proof}[Proof of Theorem \ref{thm:medsign_cons}]\mbox{} To prove the selection consistency of $\dpvote$, we first prove the consistency of $\peel$.

	\noindent\textbf{Consistency of $\peel$:} Let us directly consider the extreme case, where $f^S(\vect{Q}_l^r)$\\ 
	$=\lceil 8\gamma\sqrt{2\tilde{s}\log(2/\delta)}\log p\rceil$ for $l\in \bar{S}$, and $f^S(\vect{Q}_l^r) = 0$ for $l\notin \bar{S}$. Clearly the probability of $\dpvote$ algorithm produces the correct signs for any dataset $\mbQ$ is not less than this case. For simplicity denote it as $\mbP(\text{extreme})$. 
	
	On the other hand, for the $t$-th selection step (where $1\leq t\leq \tilde{s}$), suppose there left $\bar{s}-t+1$ elements in $\bar{S}$ and $p-\bar{s}$ elements in $\bar{S}^c$, the probability that the peeling algorithm select element in $\bar{S}$ is greater than the case when there left only one element in $\bar{S}$, let's denote this probability as $\mbP(1\text{ vs }p-\bar{s})$. Since each round of selection is independent, we know
	\begin{equation}	\label{eq:cons_bound1}
		\mbP(\text{extreme}) \geq \mbP^{\bar{s}}(1\text{ vs }p-\bar{s}).
	\end{equation}
	Therefore, it left to bound the probability $\mbP(1\text{ vs }p-\bar{s})$. 
	
	By Lemma \ref{lem:Lap_max}, we know that
	\begin{align*}
		\mbP(\max_{l\notin \bar{S}}f^S(\vect{Q}_l^r)+g_l\geq 0)
		\leq\mbP(\max_{l\notin \bar{S}}g_l\geq 8\gamma\sqrt{2\tilde{s}\log(2/\delta)}\log p)\leq p^{-\gamma+1},\\
		\mbP(\max_{l\in \bar{S}}f^S(\vect{Q}_l^r)+g_l\leq 0)
		\leq\mbP(\max_{l\in \bar{S}}g_l\leq -8\gamma\sqrt{2\tilde{s}\log(2/\delta)}\log p)\leq  p^{-\gamma+1}.
	\end{align*}
	Therefore, 
	\begin{align*}
		\mbP(1\text{ vs }p-\bar{s}) \geq 1 - \mbP(\max_{l\notin \bar{S}}f^S(\vect{Q}_l^r)+g_l\geq 0)-\mbP(\max_{l\in \bar{S}}f^S(\vect{Q}_l^r)+g_l\leq 0)\geq 1-2p^{-\gamma+1}.
	\end{align*}
	Plugging it into \eqref{eq:cons_bound1} we have that
	\begin{equation}	\label{eq:cons_keybound1}
		\mbP(\text{extreme}) \geq \mbP^{\bar{s}}(1\text{ vs }p-\bar{s})\geq (1-2p^{-\gamma+1})^{\bar{s}}\geq 1-2\bar{s}p^{-\gamma+1}\geq 1-O(p^{-\gamma+2}),
	\end{equation}
	which proves the first part.
	
	\noindent\textbf{Consistency of $\dpvote$:} Suppose that the $\peel$ algorithm has selected the support covering $\bar{S}$, i.e. $\bar{S}\subseteq\tilde{S}$. For each $l\in \bar{S}$, without loss of generality assume $\bar{Q}_l=1$. Then from assumption we know 
	\begin{equation*}
		f^S(\vect{Q}_l^r) = N_l^+- N_l^0-N_l^-   \geq 8(\gamma+1)\sqrt{2\tilde{s}\log(2/\delta)}\log p/\epsilon.
	\end{equation*}
	Then we have that
	\begin{align*}
		\mbP(\hat{Q}_l=1) =& \frac{P_l^+}{P_l^++P_l^-+P_l^0}\\
		=&\frac{1}{1+ \exp\{\epsilon' (N^0_l-N^+_l+N_l^-)/2\}+ \exp\{\epsilon' (N^-_l-N^+_l)/2\}}\\
		\geq&\frac{1}{1+ \exp\{-\epsilon'f_l(\vect{Q}_l^r,1)/2\}+ \exp\{\epsilon' f_l(\vect{Q}_l^r,1)/2\}}\\
		\geq&\frac{1}{1+ 2p^{-\gamma-1}}\geq 1-2p^{-\gamma-1}.
	\end{align*}
	For $l\in\tilde{S}\backslash \bar{S}$,  we have that
	\begin{align*}
		\mbP(\hat{Q}_l=0) =& \frac{P_l^0}{P_l^++P_l^-+P_l^0}\\
		=&\frac{1}{1+ \exp\{-\epsilon' \min\{N^0_l-N^+_l+N^-_l,N_l^0\}/2\}+ \exp\{\epsilon' \min\{N^0_l-N^-_l+N^+_l,N_l^0\}/2\}}\\
		\geq&\frac{1}{1+ 2\exp\{-\epsilon' \min\{N^0_l-N^+_l+N^-_l,N^0_l+N^+_l-N^-_l\}/2\}}\\
		\geq&\frac{1}{1+ 2p^{-\gamma-1}}\geq 1-2p^{-\gamma-1}.
	\end{align*}
	Then we have that
	\begin{equation*}
		\mbP(\hat{Q}_l=\bar{Q}_l,l\in\tilde{S})=\prod_{l\in\tilde{S}}\mbP(\hat{Q}_l=\bar{Q}_l)\geq (1-2p^{-\gamma-1})^{\bar{s}}\geq 1-2\bar{s}p^{-\gamma-1}\geq 1-p^{-\gamma}.
	\end{equation*}
	Therefore,
	\begin{align*}
		\mbP(\dpvote\text{ recovers }\bar{\vect{Q}}) \geq \mbP(\hat{Q}_l=\bar{Q}_l,l\in\tilde{S}) - (1-\mbP(\text{extreme}))\geq 1-p^{-\gamma} - p^{-\gamma+2} =1-O(p^{-\gamma+2}),
	\end{align*}
	which proves the Theorem \ref{thm:medsign_cons}.
\end{proof}

\begin{lemma}{(Theorem 1 of \cite{arratia_etal.1989aop})}	\label{lem:possion_approx}
	Let $\{g_{i}|i\in I\}$ be the collection of random variables in an index set $I$. For each $i\in I$, let $I_i$ be a subset of $I$ with $i\in I_i$. For a given $x\in\mbR$, set $y=\sum_{i\in I}\mbP(g_i>x)$. Then
	\begin{equation}	\label{eq:possion_approx}
		\Big|\mbP\Big(\max_{i\in I}g_i\leq x\Big)-e^{-y}\Big|\leq (1\wedge y^{-1})(T_1+T_2+T_3),
	\end{equation}
	where
	\begin{align*}
		T_1 =& \sum_{i\in I}\sum_{j\in I_i}\mbP(g_i>x)\mbP(g_j>x),\\
		T_2 =& \sum_{i\in I}\sum_{j\in I_i,i\neq j}\mbP(g_i>x,g_j>x),\\
		T_3 =& \sum_{i\in I}\mbE\big[|\mbP(g_i>x|\sigma(g_j,j\notin I_i))-\mbP(g_i>x)|\big].
	\end{align*}
\end{lemma}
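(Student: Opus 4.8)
The plan is to recognize Lemma~\ref{lem:possion_approx} as the classical Chen--Stein bound for Poisson approximation and to reproduce the local-dependence argument. First I would translate the event of interest into an exceedance count. Setting $B_i=\mbI(g_i>x)$, $W=\sum_{i\in I}B_i$, $p_i=\mbE B_i=\mbP(g_i>x)$ and $\lambda=\mbE W=\sum_{i\in I}\mbP(g_i>x)=y$, the event $\{\max_{i\in I}g_i\leq x\}$ is exactly $\{W=0\}$, so the left-hand side of \eqref{eq:possion_approx} equals $|\mbP(W=0)-e^{-\lambda}|$. Thus it suffices to control the distance between the law of $W$ and $\mathrm{Pois}(\lambda)$ evaluated at the point $0$.

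Next I would invoke Stein's equation for the Poisson law. For the target $h=\mbI(\cdot=0)$, let $f$ be the bounded solution of $\lambda f(k+1)-kf(k)=\mbI(k=0)-e^{-\lambda}$ on $k\geq 0$, with the convention $f(0)=0$. Evaluating at $k=W$ and taking expectations gives the identity $\mbP(W=0)-e^{-\lambda}=\mbE[\lambda f(W+1)-Wf(W)]$, so everything reduces to bounding this expectation. From the explicit formula one checks that for this singleton test function the solution is positive and decreasing, whence both the increment and the sup norm obey $\Norm{\Delta f}=\sup_k|f(k+1)-f(k)|\leq (1-e^{-\lambda})/\lambda\leq 1\wedge\lambda^{-1}$ and $\Norm{f}=f(1)=(1-e^{-\lambda})/\lambda\leq 1\wedge\lambda^{-1}$. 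These two bounds are what supply the common prefactor $(1\wedge y^{-1})$ in the conclusion.

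The core step is the local decomposition. For each $i$ I would split $W=V_i+U_i$ with $V_i=\sum_{j\in I_i}B_j$ (the neighbourhood, which contains $B_i$) and $U_i=\sum_{j\notin I_i}B_j$. Writing $\mbE[\lambda f(W+1)-Wf(W)]=\sum_{i\in I}\big(p_i\,\mbE f(W+1)-\mbE[B_i f(W)]\big)$ and inserting and subtracting the terms $p_i\,\mbE f(U_i+1)$ and $\mbE[B_i f(U_i+1)]$, I obtain $\mbP(W=0)-e^{-\lambda}=R_1+R_2+R_3$ with
\begin{align*}
R_1 &= \sum_{i\in I} p_i\,\mbE\big[f(W+1)-f(U_i+1)\big],\\
R_2 &= -\sum_{i\in I}\mbE\big[B_i\big(f(W)-f(U_i+1)\big)\big],\\
R_3 &= \sum_{i\in I}\mbE\big[(p_i-B_i)\,f(U_i+1)\big].
\end{align*}
Using $|f(a)-f(b)|\leq\Norm{\Delta f}\,|a-b|$ with $(W+1)-(U_i+1)=V_i$ bounds $R_1$ by $\Norm{\Delta f}\sum_i p_i\,\mbE V_i=\Norm{\Delta f}\,T_1$; the same Lipschitz bound with $W-(U_i+1)=V_i-1$ on $\{B_i=1\}$ bounds $R_2$ by $\Norm{\Delta f}\sum_i\mbE[B_i(V_i-1)]=\Norm{\Delta f}\,T_2$; and since $U_i$ is $\mcF_i:=\sigma(B_j:j\notin I_i)$-measurable, conditioning gives $R_3=\sum_i\mbE[f(U_i+1)(p_i-\mbE[B_i\mid\mcF_i])]$, which is at most $\Norm{f}\sum_i\mbE|\,p_i-\mbE[B_i\mid\mcF_i]\,|=\Norm{f}\,T_3$. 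Collecting the three pieces and the solution bounds yields $|\mbP(W=0)-e^{-\lambda}|\leq(1\wedge y^{-1})(T_1+T_2+T_3)$, which is \eqref{eq:possion_approx}.

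The step I expect to be the main obstacle is the third term $R_3$: one must verify that $f(U_i+1)$ is genuinely $\mcF_i$-measurable so that the conditioning identity is valid, and then bound it by $\Norm{f}$ rather than a cruder quantity, which is precisely why the explicit sup-norm estimate $\Norm{f}\leq 1\wedge\lambda^{-1}$ for the singleton solution (not just the increment estimate) is needed to land the stated prefactor. By contrast, $R_1$ and $R_2$ are routine Lipschitz estimates once the neighbourhood split $W=V_i+U_i$ is in place, and the reduction to $\{W=0\}$ is immediate.
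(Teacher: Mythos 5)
Your proof is correct and matches the source's own approach: the paper states this lemma without proof, citing it as Theorem 1 of Arratia, Goldstein and Gordon (1989), and your reconstruction---reducing $\{\max_i g_i\le x\}$ to $\{W=0\}$, solving Stein's equation for the singleton test function with the bounds $\Norm{f},\Norm{\Delta f}\le 1\wedge y^{-1}$, and splitting $W=U_i+V_i$ to produce $R_1,R_2,R_3$ matching $T_1,T_2,T_3$---is precisely the Chen--Stein argument underlying that theorem. One cosmetic point: in $R_3$ you condition on $\mcF_i=\sigma(B_j:j\notin I_i)$, which by conditional Jensen and the tower property yields a quantity bounded \emph{above} by $T_3$ (whose conditioning field is the larger $\sigma(g_j:j\notin I_i)$) rather than equal to it; alternatively, condition directly on $\sigma(g_j:j\notin I_i)$, with respect to which $U_i$ is also measurable, and the identity becomes exact---either way the stated bound follows.
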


\begin{lemma}	\label{lem:Lap_max}
	Let $g_1,\dots,g_p$ be $p$ i.i.d. random variable sampled from $\mathrm{Lap}(\lambda)$, then for each $\gamma\geq 2$ we have that
	\begin{equation*}
		\mbP\Big(\max_{i\in I}g_i> \gamma\lambda\log p\Big) \leq p^{-\gamma+1}.
	\end{equation*}
\end{lemma}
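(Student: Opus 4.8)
The plan is to reduce the statement to a one-dimensional tail estimate for a single Laplace variable and then to combine the $p$ coordinates with the union bound; no concentration machinery is needed here. First I would record the closed-form upper tail of $\mathrm{Lap}(\lambda)$. Since the density is $\frac{1}{2\lambda}e^{-|t|/\lambda}$, integrating over $(x,\infty)$ for any $x>0$ gives
\[
\mbP(g_i>x)=\int_x^\infty\frac{1}{2\lambda}e^{-t/\lambda}\,\diff t=\frac{1}{2}e^{-x/\lambda}.
\]
Substituting the threshold $x=\gamma\lambda\log p$ then yields the per-coordinate bound $\mbP(g_i>\gamma\lambda\log p)=\frac{1}{2}e^{-\gamma\log p}=\frac{1}{2}p^{-\gamma}$, using that the $g_i$ are identically distributed.

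Next I would observe that $\{\max_{i\in I}g_i>\gamma\lambda\log p\}=\bigcup_{i\in I}\{g_i>\gamma\lambda\log p\}$, so Boole's inequality over the $p$ variables gives
\[
\mbP\Big(\max_{i\in I}g_i>\gamma\lambda\log p\Big)\leq\sum_{i=1}^p\mbP(g_i>\gamma\lambda\log p)=p\cdot\frac{1}{2}p^{-\gamma}=\frac{1}{2}p^{-\gamma+1}\leq p^{-\gamma+1},
\]
which is exactly the claimed inequality. This completes the argument.

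There is essentially no genuine obstacle in this lemma: it is an elementary combination of the explicit Laplace tail and the union bound, and in fact independence is not even used (only the marginals enter). The hypothesis $\gamma\geq 2$ plays no role in establishing the stated bound; it is relevant only in the applications, where one wants the resulting probability $p^{-\gamma+1}$ to be summable across the $\bar{s}$ peeling rounds so that the error aggregated in \eqref{eq:cons_keybound1} still vanishes. The only point deserving slight care is keeping the harmless factor $\frac{1}{2}$, which I absorb via $\frac{1}{2}p^{-\gamma+1}\leq p^{-\gamma+1}$ rather than trying to sharpen the constant.
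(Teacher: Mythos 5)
Your proof is correct, but it takes a genuinely different and more elementary route than the paper. You combine the exact Laplace tail $\mbP(g_i>x)=\tfrac{1}{2}e^{-x/\lambda}$ with Boole's inequality, getting $\tfrac{1}{2}p^{-\gamma+1}\leq p^{-\gamma+1}$ directly. The paper instead invokes the Chen--Stein Poisson approximation of Arratia, Goldstein and Gordon (its Lemma~\ref{lem:possion_approx}) with singleton neighborhoods $I_i=\{i\}$: it computes $y=\tfrac{1}{2}p^{-\gamma+1}$, uses independence to force $T_2=T_3=0$, evaluates $T_1=\tfrac{1}{4}p^{-2\gamma+1}$, and concludes $\mbP(\max_i g_i>\gamma\lambda\log p)\leq 1-e^{-y}+\tfrac{1}{4}p^{-2\gamma+1}\leq p^{-\gamma+1}$. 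For the one-sided inequality actually stated, your argument is strictly lighter: as you note, it does not even use independence (only the marginals enter), whereas the paper's route needs independence to annihilate $T_2$ and $T_3$; your bound also carries the slightly better constant $\tfrac12$. What the Poisson-approximation machinery would buy --- a two-sided approximation $\mbP(\max_i g_i\leq x)\approx e^{-y}$, hence matching lower bounds and the asymptotic distribution of the maximum --- is simply not exploited anywhere in the paper, so nothing is lost by your simplification. Your side remark that the hypothesis $\gamma\geq 2$ is never needed for the stated bound (only $\gamma>1$ matters for it to be nontrivial, and the restriction is inherited from how the lemma is consumed in the peeling analysis) is accurate for both proofs.
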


\begin{proof}
	To apply Lemma \ref{lem:possion_approx}, we let the index $I=\{1,\dots,p\}$, for every $i$, let $I_i=\{i\}$. Then by taking $x=\gamma\lambda\log p$, we know that
	\begin{align*}
		y &= \sum_{i\in I}\mbP(g_i>x) = \frac{p}{2}e^{-x/\lambda}=\frac{1}{2}p^{-\gamma+1},\\
		T_2& = T_3 = 0,\\
		T_1& = p\sum_{i\in I}\mbP^2(g_i>x)=\frac{p}{4}e^{-2x/\lambda}=\frac{1}{4}p^{-2\gamma+1}.
	\end{align*}
	By plugging them into \eqref{eq:possion_approx} we have that
	\begin{align*}
		&\Big|\mbP\Big(\max_{i\in I}g_i\leq \gamma\lambda\log p\Big)-e^{-\frac{1}{2}p^{-\gamma+1}}\Big|\leq \frac{1}{4}p^{-2\gamma +1},\\
		\Rightarrow&\mbP\Big(\max_{i\in I}g_i> \gamma\lambda\log p\Big) \leq 1-e^{-\frac{1}{2}p^{-\gamma+1}}+\frac{1}{4}p^{-2\gamma+1}\leq p^{-\gamma+1},
	\end{align*}
	which proves the lemma.
\end{proof}


\subsection{Proof of Results in Section \ref{sec:priv_mean_theory}}
\label{sec:proof_3-2}

\begin{lemma}	\label{lemma:berry_esseen}
	(Berry-Esseen Theorem, Theorem 9.1.3 in \cite{chow_teicher.2012}) If $\{X_i, \; i\geq1\}$ are i.i.d. mean-zero random variables with $\mbE[X_i^{2}]=\sigma^2, \mbE|X_i|^{3}<\infty$. Then there exists a constant $C_{\mathrm{B}}>0$ such that
	\begin{equation*}
		\sup_{-\infty<x<\infty}\left|\mbP\left(\sum_{i=1}^nX_i<\sqrt{n}\sigma x\right)-\Phi(x)\right|\leq \frac{C_{\mathrm{B}}}{\sqrt{n}}.
	\end{equation*}
\end{lemma}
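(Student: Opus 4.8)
The plan is to reduce to the standardized case and then run the classical characteristic-function argument built on Esseen's smoothing inequality. First I would set $Y_i = X_i/\sigma$, so that the $Y_i$ are i.i.d., mean-zero, with $\mbE[Y_i^2]=1$ and $\rho := \mbE|Y_i|^3 = \mbE|X_i|^3/\sigma^3 < \infty$; writing $S_n = n^{-1/2}\sum_{i=1}^n Y_i$ and letting $F_n$ denote its distribution function, the claim becomes $\sup_x |F_n(x) - \Phi(x)| \leq C_{\mathrm B}/\sqrt{n}$, where the constant may be taken to depend only on $\rho$ (and Lyapunov's inequality gives $\rho\geq 1$, which is harmless). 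The original bound at scale $\sqrt{n}\sigma x$ is then recovered by undoing the rescaling $x\mapsto \sigma x$, which does not affect the supremum over $x$.

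Next I would invoke Esseen's smoothing lemma: for any $T>0$,
\[
\sup_x |F_n(x)-\Phi(x)| \leq \frac{2}{\pi}\int_0^T \left|\frac{\psi_n(t)-e^{-t^2/2}}{t}\right|\,dt + \frac{24}{\pi T}\sup_x \Phi'(x),
\]
where $\psi_n(t) = \varphi(t/\sqrt{n})^n$ is the characteristic function of $S_n$ and $\varphi(t)=\mbE[e^{itY_1}]$. Since $\sup_x \Phi'(x) = (2\pi)^{-1/2}$, the second term is of order $1/T$, so the whole argument hinges on choosing $T \asymp \sqrt{n}/\rho$ and controlling the integral term.

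The core estimate is a Taylor expansion of $\varphi$. Using $\mbE[Y_1]=0$, $\mbE[Y_1^2]=1$ and the remainder bound for $e^{iu}$, I would establish $|\varphi(t)-(1-t^2/2)| \leq \rho|t|^3/6$, hence $|\varphi(t)| \leq 1 - t^2/2 + \rho|t|^3/6 \leq e^{-t^2/3}$ for $|t|$ suitably small. Evaluating at $t/\sqrt{n}$ with $|t|\leq T := c\sqrt{n}/\rho$ for a small absolute constant $c$, this yields $|\varphi(t/\sqrt{n})|^{\,n-1} \leq e^{-t^2/6}$. The elementary factorization $a^n-b^n=(a-b)\sum_{k=0}^{n-1}a^k b^{n-1-k}$, applied to $a=\varphi(t/\sqrt n)$ and $b=e^{-t^2/(2n)}$, then gives
\[
|\psi_n(t)-e^{-t^2/2}| \leq n\,\bigl|\varphi(t/\sqrt n)-e^{-t^2/(2n)}\bigr|\, e^{-t^2/6} \leq C\rho\,|t|^3 n^{-1/2}\, e^{-t^2/6},
\]
where the last step uses the expansions of both $\varphi$ and the Gaussian factor on the range $|t|\leq T$. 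Dividing by $|t|$ and integrating the convergent Gaussian-type integral $\int_0^\infty t^2 e^{-t^2/6}\,dt$ produces a bound of order $\rho/\sqrt{n}$ for the integral term; combined with the tail term $24/(\pi T)\asymp \rho/\sqrt n$, this closes the proof.

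The step I expect to be the main obstacle is making the comparison $|\varphi(t/\sqrt n)|\le e^{-t^2/3}$ hold uniformly over the full range $|t|\le T$: the naive Taylor bound only controls $\varphi$ near the origin, so the cutoff $T\asymp \sqrt n/\rho$ must be chosen precisely so that the argument $t/\sqrt n$ stays in the regime where the quadratic term dominates the cubic remainder. Care is also needed to verify that, on this range, the quartic contribution from $e^{-t^2/(2n)}$ is absorbed into the cubic term (using $|t|/\sqrt n\lesssim 1/\rho\le 1$), and to track that the final constant $C_{\mathrm B}$ depends only on $\rho$ and not on $n$ or the underlying distribution.
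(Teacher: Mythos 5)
The paper does not prove this lemma at all: it is imported verbatim as Theorem 9.1.3 of \cite{chow_teicher.2012}, where it is established by exactly the route you describe (standardization, Esseen's smoothing inequality with cutoff $T\asymp\sqrt{n}/\rho$, and the third-order expansion $|\varphi(t)-(1-t^2/2)|\leq\rho|t|^3/6$ feeding the telescoping bound on $|\psi_n(t)-e^{-t^2/2}|$). Your sketch is correct --- including the two delicate points you flag, namely keeping $|t|/\sqrt{n}\lesssim 1/\rho$ so the quadratic term dominates and absorbing the quartic Gaussian remainder via $\rho\geq 1$ --- so it coincides with the canonical proof behind the paper's citation.
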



\begin{lemma}	\label{lemma:cai_liu}
	(Exponential Inequality, Lemma 1 in \cite{cai_liu.2011}) Let $X_1,...,X_n$ be i.i.d. random variables with zero mean. Suppose that there exist some $\eta>0$ and $C>0$ such that $\mbE [X_i^2e^{\eta|X_i|}]\leq C$. Then uniformly for $0<x\leq C$ and $n\geq 1$, there is
	\begin{equation*}
		\mbP\left\{\frac{1}{n}\sum_{i=1}^nX_i\geq (\eta+\eta^{-1})x\right\}\leq\exp\left(-\frac{nx^2}{C}\right).
	\end{equation*}
\end{lemma}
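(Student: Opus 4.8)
The plan is to prove this as a textbook Chernoff (exponential Markov) bound; the only real content is converting the single moment hypothesis $\mbE[X_i^2 e^{\eta|X_i|}]\le C$ into a Bernstein-type bound on the moment generating function, and then choosing the Chernoff parameter so that the clean exponent $-nx^2/C$ emerges together with the prescribed threshold $(\eta+\eta^{-1})x$.

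First I would record the elementary pointwise inequality $|u|^{m}\le (m!/\eta^{m})\,e^{\eta|u|}$, valid for every $u\in\mbR$ and integer $m\ge 0$, which follows from $\sup_{u\ge0}u^{m}e^{-\eta u}=(m/(\eta e))^{m}\le m!/\eta^{m}$. Applying it with $m=k-2$ and multiplying by $X_i^2$, then taking expectations and invoking the hypothesis, yields the Bernstein moment bound $\mbE|X_i|^{k}\le \tfrac{(k-2)!}{\eta^{k-2}}C$ for all integers $k\ge 2$; in particular $\mbE X_i^2\le C$ because $e^{\eta|X_i|}\ge 1$. Since the $X_i$ are i.i.d.\ with mean zero, I expand $\mbE e^{tX_i}=1+\sum_{k\ge2}\tfrac{t^k}{k!}\mbE X_i^{k}$, bound $\mbE X_i^{k}\le\mbE|X_i|^{k}$, and use $\tfrac{(k-2)!}{k!}=\tfrac{1}{k(k-1)}\le\tfrac12$ to obtain, for every $0\le t<\eta$,
\[
	\mbE e^{tX_i}\le 1+C t^{2}\sum_{k\ge2}\frac{(t/\eta)^{k-2}}{k(k-1)}\le 1+\frac{C t^{2}}{2(1-t/\eta)}\le \exp\!\Big(\frac{C t^{2}}{2(1-t/\eta)}\Big).
\]
By independence the Chernoff bound then gives, for any $0\le t<\eta$,
\[
	\mbP\Big\{\tfrac1n\textstyle\sum_{i=1}^n X_i\ge (\eta+\eta^{-1})x\Big\}\le \exp\!\Big(-n\Big[t(\eta+\eta^{-1})x-\tfrac{C t^{2}}{2(1-t/\eta)}\Big]\Big).
\]

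It then remains to choose $t$ so that the bracketed per-sample exponent is at least $x^{2}/C$, and this is where the main obstacle lies: the threshold $(\eta+\eta^{-1})x$ and the admissible range $t<\eta$ both depend on $\eta$, yet the target exponent must come out free of $\eta$. I would argue by regimes. When $x\le C\eta/2$ the natural choice $t=x/C$ is admissible and clean: the linear term contributes $(\eta+\eta^{-1})x^{2}/C$, the quadratic penalty is at most $x^{2}/C$ (since $1-t/\eta\ge\tfrac12$), and the inequality $\eta+\eta^{-1}\ge 2$ leaves exactly the slack needed to exceed $x^{2}/C$. In the complementary small-$\eta$ (heavy-tail) regime, where $(\eta+\eta^{-1})x$ is large, I would instead take $t$ a fixed fraction of $\eta$, say $t=\eta s$ with $s$ bounded away from $1$; then the penalty stays of order $C\eta^{2}$ and is dominated by the large linear gain $s(\eta^{2}+1)x$, and here the hypothesis $x\le C$ is precisely what makes $x^{2}/C\le x$ and closes the estimate.

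The only genuinely delicate step is the bookkeeping of constants: one must verify that these two choices (or, after tuning, a single interpolating choice such as $t=\eta x/(C\eta+x)$) push the per-sample exponent above $x^{2}/C$ \emph{uniformly} over $0<x\le C$ and all $\eta>0$, with no hidden $\eta$-dependence left behind. Once the per-sample exponent is shown to dominate $x^{2}/C$, raising the bound to the $n$-th power (equivalently, keeping the factor $n$ that independence already supplied) immediately yields $\exp(-nx^{2}/C)$ and completes the proof.
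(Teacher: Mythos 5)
There is a genuine gap, and it sits exactly at the step you flagged as ``delicate bookkeeping'': with your moment generating function bound, the uniform claim is false, not merely hard to verify. Your Bernstein-type bound $\log\mbE e^{tX}\le \frac{Ct^{2}}{2(1-t/\eta)}$ (valid only for $t<\eta$) is too lossy near $t=\eta$, and in the regime where $\eta$ is small and $x$ is close to $C$ no admissible $t$ reaches the target exponent. Concretely, take $x=C$ and write $t=\eta s$ with $s\in[0,1)$; the per-sample exponent is
\begin{equation*}
g(s)\;=\;sC(1+\eta^{2})-\frac{C\eta^{2}s^{2}}{2(1-s)},
\end{equation*}
whose maximum over $s$ (attained at $1-s=\eta/\sqrt{2+3\eta^{2}}$) is $C\bigl(1-\sqrt{2}\,\eta+O(\eta^{2})\bigr)$, strictly below the required $x^{2}/C=C$ for every small $\eta>0$. (Numerically, at $\eta=0.1$, $C=1$, the supremum is about $0.88<1$.) Your regime-(ii) heuristic fails for the same reason: the linear gain $t(\eta+\eta^{-1})x=s(\eta^{2}+1)x$ is at most $sx<x$, while the target $x^{2}/C$ equals $x$ at $x=C$, so ``a fixed fraction $s$ bounded away from $1$'' can never close the constant, and pushing $s\to1$ blows up the penalty. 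The interpolating choice $t=\eta x/(C\eta+x)$ also fails: it requires $\eta^{2}+1-\eta/2\ge \eta+x/C$, which is violated at $x=C$ for all $\eta<3/2$.

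The fix is small and keeps your moment computation: do not discard the decay of $1/(k(k-1))$. Since $\sum_{k\ge2}\frac{z^{k-2}}{k(k-1)}\le\sum_{k\ge2}\frac{1}{k(k-1)}=1$ for $0\le z\le1$, your expansion gives $\mbE e^{tX}\le 1+Ct^{2}\le e^{Ct^{2}}$ on the \emph{closed} interval $0\le t\le\eta$, with no $1/(1-t/\eta)$ singularity. Then the single choice $t=\eta x/C$ is admissible precisely because $x\le C$, and the exponent is
\begin{equation*}
t(\eta+\eta^{-1})x-Ct^{2}\;=\;\frac{(\eta^{2}+1)x^{2}}{C}-\frac{\eta^{2}x^{2}}{C}\;=\;\frac{x^{2}}{C},
\end{equation*}
uniformly in $\eta$ and $0<x\le C$, with no regime splitting at all. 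This is essentially the route of the cited proof in \cite{cai_liu.2011}, which obtains the even sharper bound $\mbE e^{tX}\le 1+\frac{t^{2}}{2}\mbE\bigl[X^{2}e^{t|X|}\bigr]\le e^{Ct^{2}/2}$ for $0\le t\le\eta$ directly from the pointwise inequality $e^{u}\le 1+u+\frac{u^{2}}{2}e^{|u|}$ (bypassing the moment expansion entirely), and then makes the same choice $t=\eta x/C$, yielding exponent $(1+\eta^{2}/2)x^{2}/C\ge x^{2}/C$. Your preliminary steps (the moment bounds $\mbE|X|^{k}\le (k-2)!\,C/\eta^{k-2}$ and the Chernoff reduction) are correct; only the geometric-series bound and the consequent two-regime tuning need to be replaced as above.
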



\begin{lemma}	\label{lemma:concen_mom}
	Let $N$ i.i.d. random variables $X_1,...,X_{N}$ distributed in $m$ subsets $\mcH_1,...,\mcH_m$, with each subset holds $n_j$ ($1\leq j\leq m$) samples. Here we assume $n_1\asymp ...\asymp n_m\asymp n$. Suppose $\mbE[X_i]=0, \mbE[X_i^{2}]=\sigma^2, \mbE|X_i|^{3}<\infty$. Denote $\bar{X}_j=\sum_{i\in\mcH_j}X_i/n_j$ as the local sample mean on $\mcH_j$. For every $\gamma>1$ and non-negative constant $k$, denote
	\begin{equation*}
		c_{\gamma,k}=\tilde{C}\left(\frac{1}{n}+\frac{k}{m\sqrt{n}}+\sqrt{\frac{\gamma\log p}{mn}}\right),
	\end{equation*} 
	where $\tilde{C}>0$ is sufficiently large enough. Then there is
	\begin{equation}	\label{eq:mom_lemma}
		\mbP\left\{\sum_{j=1}^{m}\mbI\left(\bar{X}_j<c_{\gamma,k}\right)\leq\frac{m}{2}+k\right\}+\mbP\left\{\sum_{j=1}^{m}\mbI\left(\bar{X}_j>-c_{\gamma,k}\right)\leq\frac{m}{2}+k\right\}=O(p^{-\gamma}).
	\end{equation}
\end{lemma}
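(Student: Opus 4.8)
The plan is to treat the two probabilities separately and symmetrically, reducing each to an elementary binomial tail bound across the $m$ independent machines, with the scale $c_{\gamma,k}$ engineered so that each of its three summands cancels a distinct source of error. First I would recast the first event: since $\sum_{j=1}^m\mbI(\bar X_j<c_{\gamma,k})+\sum_{j=1}^m\mbI(\bar X_j\ge c_{\gamma,k})=m$, the event $\{\sum_j\mbI(\bar X_j<c_{\gamma,k})\le m/2+k\}$ coincides with $\{W\ge m/2-k\}$, where $W=\sum_{j=1}^m\mbI(\bar X_j\ge c_{\gamma,k})$. Because the machines hold disjoint i.i.d. blocks, the indicators $\mbI(\bar X_j\ge c_{\gamma,k})$ are i.i.d. Bernoulli with success probability $p_+=\mbP(\bar X_j\ge c_{\gamma,k})$, so $W\sim\mathrm{Bin}(m,p_+)$ and the task becomes an upper-tail bound on a binomial.

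Next I would control $p_+$. Writing $x=\sqrt n\,c_{\gamma,k}/\sigma$ and applying the Berry--Esseen bound (Lemma \ref{lemma:berry_esseen}) to the $n$ summands on a single machine gives $p_+\le 1-\Phi(x)+C_B/\sqrt n$. Using the Gaussian anti-concentration inequality $\Phi(x)-\tfrac12\ge c_0x$, valid on the bounded range in which $x$ lives, I would lower bound the gap
\[
 t:=\tfrac{m}{2}-k-mp_+\ \ge\ m c_0 x-k-\tfrac{m C_B}{\sqrt n}.
\]
Substituting $x=\tfrac{\tilde C}{\sigma}\big(\tfrac{1}{\sqrt n}+\tfrac{k}{m}+\sqrt{\tfrac{\gamma\log p}{m}}\big)$ shows that $mc_0x$ splits into three pieces of respective orders $m/\sqrt n$, $k$, and $\sqrt{m\gamma\log p}$: the first dominates the Berry--Esseen remainder $mC_B/\sqrt n$, the second absorbs the $-k$ slack coming from the ``$m/2+k$'' threshold, and the third survives as the main gap. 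Choosing $\tilde C$ large enough (so that $c_0\tilde C/\sigma$ exceeds $\max\{C_B,1\}$ by a fixed margin) then yields $t\ge c_1\sqrt{m\gamma\log p}$ with $c_1$ as large as we please.

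Finally I would invoke Hoeffding's inequality for the bounded i.i.d. summands: $\mbP(W\ge mp_++t)\le\exp(-2t^2/m)\le\exp(-2c_1^2\gamma\log p)=p^{-2c_1^2\gamma}$, which is $O(p^{-\gamma})$ once $c_1\ge 1/\sqrt2$. The second probability is handled verbatim after replacing $X_i$ by $-X_i$, which preserves the mean-zero, variance-$\sigma^2$, finite-third-moment hypotheses and turns $\{\bar X_j>-c_{\gamma,k}\}$ into $\{-\bar X_j<c_{\gamma,k}\}$; summing the two $O(p^{-\gamma})$ bounds gives the claim.

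I expect the main obstacle to be the constant bookkeeping in the gap estimate --- specifically, guaranteeing that $x$ stays in the regime where the linear anti-concentration bound $\Phi(x)-\tfrac12\ge c_0x$ is available, and that the Berry--Esseen error is dominated uniformly. These are precisely what force the three-term form of $c_{\gamma,k}$ and the ``sufficiently large'' choice of $\tilde C$; the degenerate boundary case where the threshold is so large (equivalently $m$ is small relative to $\gamma\log p$) that $p_+$ is already negligible should be noted separately, but there the gap $t$ is of order $m$ and the Hoeffding bound is only easier.
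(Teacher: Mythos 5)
Your proposal is correct and takes essentially the same route as the paper's proof: both apply the Berry--Esseen theorem to the per-machine tail probability, linearize the normal distribution function near $1/2$ (you lower-bound $\Phi(x)-\tfrac12\geq c_0x$ on a bounded range, the paper equivalently upper-bounds $\Phi^{-1}$ via its derivative on $[1/4,3/4)$), and conclude with an exponential concentration inequality for the i.i.d.\ indicators, with the three terms of $c_{\gamma,k}$ playing exactly the roles you identify; your substitution of Hoeffding's inequality for the paper's Lemma~\ref{lemma:cai_liu} is an interchangeable choice. One small caveat: your closing remark that the degenerate regime is ``only easier'' is not quite accurate --- if, say, $k\geq m/2$ the first event has probability one and the bound fails --- but the paper's proof implicitly restricts to the same non-degenerate regime through its claim that $1/4\leq\Phi(\sqrt{n}x/\sigma)<3/4$ for $m,n$ sufficiently large, so you are on equal footing there.
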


\begin{proof}
	For every $x>0$, there is
	\begin{align*}
		&\mbP\left\{\sum_{j=1}^{m}\mbI\left(\bar{X}_j<x\right)\leq\frac{m}{2}+k\right\}\\
		=&\mbP\left\{\frac{1}{m}\sum_{j=1}^m\big\{\mbI\left(\bar{X}_j<x\right)-\mbP\left(\bar{X}_j<x\right)\big\}\leq\frac{1}{2}+\frac{k}{m}-\frac{1}{m}\sum_{j=1}^m\mbP\left(\bar{X}_j<x\right)\right\}\\
		\leq&\mbP\left\{\frac{1}{m}\sum_{j=1}^m\big\{\mbI\left(\bar{X}_j<x\right)-\mbP\left(\bar{X}_j<x\right)\big\}\leq\frac{1}{2}+\frac{k}{m}+\frac{1}{m}\sum_{j=1}^m\Big\{\frac{C_{\mathrm{B}}}{\sqrt{n_j}}-\Phi\left(\frac{\sqrt{n_j}x}{\sigma}\right)\Big\}\right\}\\
		\leq&\mbP\left\{\frac{1}{m}\sum_{j=1}^m\big\{\mbI\left(\bar{X}_j<x\right)-\mbP\left(\bar{X}_j<x\right)\big\}\leq-\sqrt{\frac{C\gamma\log p}{m}}\right\},
	\end{align*}
	where the last line uses Berry-Esseen Theorem (Lemma \ref{lemma:berry_esseen}), and $x$ is given by 
	\begin{equation*}
		x=\max_{1\leq j\leq m}\frac{\sigma}{\sqrt{n_j}}\Phi^{-1}\left(\frac{1}{2}+\frac{k}{m}+\frac{C_{\mathrm{B}}}{\sqrt{n_j}}+\sqrt{\frac{C\gamma\log p}{m}}\right)\asymp \frac{\sigma}{\sqrt{n}}\Phi^{-1}\left(\frac{1}{2}+\frac{k}{m}+\frac{C_{\mathrm{B}}}{\sqrt{n}}+\sqrt{\frac{C\gamma\log p}{m}}\right).
	\end{equation*}
	Applying Lemma \ref{lemma:cai_liu} to the i.i.d. sequence $\mbI\left(\bar{X}_j<x\right)-\mbP\left(\bar{X}_1<x\right)$, we have
	\begin{equation*}
		\mbP\left\{\frac{1}{m}\sum_{j=1}^m\mbI\left(\bar{X}_j<x\right)-\mbP\left(\bar{X}_1<x\right)\leq-\sqrt{\frac{C\gamma\log p}{m}}\right\}=O( p^{-\gamma}),
	\end{equation*}
	for some $C$ large enough. Moreover, we have the following elementary facts
	\begin{equation*}
		\big|\Phi^{-1}(x_0)\big|=\big|\Phi^{-1}(x_0)-\Phi^{-1}\left(1/2\right)\big|\leq\left|x_0-1/2\right|\left(\Phi^{-1}\right)'(x_0)\leq\frac{|x_0-1/2|}{\psi\left\{\Phi^{-1}(c_{\phi})\right\}},
	\end{equation*}
	holds for any $c_{\phi}\leq x_0<1-c_{\phi}$, where $c_{\phi}\in(0,1/2)$ is some constant. On the other hand, we know that $c_{\phi}\leq\min_j\Phi(\sqrt{n}_jx/\sigma)\leq \max_j\Phi(\sqrt{n}_jx/\sigma)<1-c_{\phi}$ holds for $m,n$ sufficiently large. Denote $C_{\delta}=1/\psi\{\Phi^{-1}(c_{\phi})\}$, then there is
	\begin{align*}
		&\mbP\left\{\sum_{j=1}^{m}\mbI\left(\bar{X}_j<\frac{C_{\delta}\sigma}{\sqrt{n}}\left(\frac{C_{\mathrm{B}}}{\sqrt{n}}+\frac{k}{m}+\sqrt{\frac{C\gamma\log p}{m}}\right)\right)\leq\frac{m}{2}+k\right\} \\
			\leq&\mbP\left\{\frac{1}{m}\sum_{j=1}^m\mbI\left(\bar{X}_j<x\right)-\mbP\left(\bar{X}_1<x\right)\leq-\sqrt{\frac{C\gamma\log p}{m}}\right\}\leq p^{-\gamma}.
	\end{align*}
	Therefore, if we choose $\tilde{C}\geq\max\{C_{\delta}C_B\sigma, C_{\delta}\sqrt{C}\sigma\}$, the bound of the first term in the left hand side of \eqref{eq:mom_lemma} is proved. By repeating the same procedure, we can prove the bound of the second term, which yields the desired result.
\end{proof}	

Before proving Theorem \ref{prop:dp_mean_sign}, we shall first prove the following proposition.
\begin{proposition}	\label{thm:priv_mean_supp}
	(Sign consistency of $\majvote$) Let $N$ i.i.d. random vectors $\{\vect{X}_1,\dots,\vect{X}_N\}$ be sampled from $\mcP(\vect{\theta}^*, C)$. Moreover, there are sufficiently large constants $C_3,C_4, \gamma_0>0$, such that
	\begin{itemize}
		\item[(a')] The dimension $p$ satisfies $p=O(n^{\gamma_0})$, and take $\lambda_N=C_3(\sqrt{\log p/N}+1/n)$;
		\item[(b')] Denote $S=\supp(\vect{\theta}^*)$, then there is $\min_{l\in S}|\theta_l^*|\geq C_4\lambda_N$.
	\end{itemize}
	Then $\bar{\vect{Q}}(\mbX)$ defined in \eqref{eq:med-mean_sign} satisfies that, for some large $\gamma_2$ depends on $C_1,C_2,\gamma_0$, there is
	\begin{equation*}
		\mbP\Big(\bar{\vect{Q}}(\mbX)=\sgn(\vect{\theta}^*)\Big)\geq 1-O(p^{-\gamma_2}).
	\end{equation*}	
\end{proposition}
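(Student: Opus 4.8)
The plan is to reduce the vector-level statement to a coordinate-wise one and then feed each coordinate into the concentration bound of Lemma~\ref{lemma:concen_mom}. First I would write
\[
\mbP\big(\bar{\vect{Q}}(\mbX)\neq\sgn(\vect{\theta}^*)\big)\leq\sum_{l=1}^{p}\mbP\big(\bar{Q}_l\neq\sgn(\theta_l^*)\big),
\]
so that it suffices to bound each coordinate's failure probability by $O(p^{-\gamma})$ for a suitable $\gamma$; the union bound over the $p=O(n^{\gamma_0})$ coordinates then gives $O(p^{-\gamma+1})$, and choosing $\gamma=\gamma_2+1$ yields the assertion. Throughout I would use the third-moment constraint in $\mcP(\vect{\theta}^*,C)$ to bound the coordinate variances $\sigma_l^2=\mbE|X_l-\theta_l^*|^2\leq C^{2/3}$ uniformly (Lyapunov), so that the constant $\tilde{C}$ entering $c_{\gamma,k}$ in Lemma~\ref{lemma:concen_mom} may be taken uniform in $l$.

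For a null coordinate $l\notin S$ the local means $\bar{X}_{j,l}$ are already centered, and $\bar{Q}_l=0$ fails only if $N_l^+\geq(m+1)/2$ or $N_l^-\geq(m+1)/2$. Since $N_l^+=m-\sum_{j=1}^{m}\mbI(\bar{X}_{j,l}\leq\lambda_N)$ and $\lambda_N\geq c_{\gamma,0}$ once $C_3$ is large (recall $c_{\gamma,0}\asymp 1/n+\sqrt{\gamma\log p/N}$), I would bound $\sum_{j}\mbI(\bar{X}_{j,l}\leq\lambda_N)\geq\sum_{j}\mbI(\bar{X}_{j,l}<c_{\gamma,0})$ and apply the first estimate of Lemma~\ref{lemma:concen_mom} with $k=0$ to conclude $\sum_{j}\mbI(\bar{X}_{j,l}<c_{\gamma,0})>m/2$, hence $N_l^+<m/2$, with probability $1-O(p^{-\gamma})$. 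The mirror-image argument, with $N_l^-=m-\sum_{j}\mbI(\bar{X}_{j,l}\geq-\lambda_N)$ and the second estimate of the lemma, gives $N_l^-<m/2$; together these force $\bar{Q}_l=0$.

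For a signal coordinate $l\in S$, say $\theta_l^*>0$ (the negative case being symmetric), the vote returns the correct sign iff $N_l^+\geq(m+1)/2$. Setting $W_{j,l}=\bar{X}_{j,l}-\theta_l^*$ and using $\theta_l^*\geq C_4\lambda_N$ I would note
\[
N_l^+=\sum_{j=1}^{m}\mbI\big(W_{j,l}>\lambda_N-\theta_l^*\big)\geq\sum_{j=1}^{m}\mbI\big(W_{j,l}>-(C_4-1)\lambda_N\big)\geq\sum_{j=1}^{m}\mbI\big(W_{j,l}>-c_{\gamma,1}\big),
\]
where the final inequality holds once $(C_4-1)\lambda_N\geq c_{\gamma,1}$, which I arrange by enlarging $C_4$. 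The second estimate of Lemma~\ref{lemma:concen_mom}, applied to the centered $W_{j,l}$ with $k=1$ (the extra unit providing the strict-majority margin), then gives $\sum_{j}\mbI(W_{j,l}>-c_{\gamma,1})>m/2+1\geq(m+1)/2$ with probability $1-O(p^{-\gamma})$, so $\majvote\{\vect{Q}_l^r\}=1=\sgn(\theta_l^*)$.

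The conceptual heart of the result---why a deliberately small local threshold $\lambda_N\asymp\sqrt{\log p/N}\ll\sqrt{\log p/n}$ still yields consistent signs---is already encapsulated in Lemma~\ref{lemma:concen_mom}: for a null coordinate the probability that a single local mean falls below $c_{\gamma,0}$ exceeds $1/2$ only by a margin of order $\sqrt{\gamma\log p/m}$ (quantified through the Berry--Esseen bound of Lemma~\ref{lemma:berry_esseen}), and the lemma shows this margin suffices to push the count past $m/2$ with probability $1-O(p^{-\gamma})$, despite the abundant false positives committed by individual machines. Consequently I expect the only real labor in assembling the proof to be bookkeeping: matching the discrete majority cutoffs $(m+1)/2$ to the counting events the lemma controls (handled by the $k$-margin) and fixing the order of quantifiers on the constants---first fix $\gamma=\gamma_2+1$, then choose $C_3,C_4$ large enough, through $\tilde{C}$, the variance bound, and the $\sqrt{\gamma}$ factor in $c_{\gamma,k}$, so that $\lambda_N\geq c_{\gamma,0}$ and $(C_4-1)\lambda_N\geq c_{\gamma,1}$ hold uniformly in $l$. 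No analytic obstacle beyond the already-proved concentration lemma is anticipated.
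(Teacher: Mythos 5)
Your proof is correct and follows essentially the same route as the paper's: a union bound over coordinates followed by Lemma~\ref{lemma:concen_mom} applied separately to null and signal coordinates, with $\lambda_N\geq c_{\gamma,0}$ handling $l\notin S$ and the beta-min condition absorbing $c_{\gamma,k}+\lambda_N$ for $l\in S$. Your only departures are cosmetic: the paper invokes the lemma with $k=0$ even for signal coordinates (the counts are integers, so exceeding $m/2$ already gives $\geq(m+1)/2$, making your $k=1$ margin harmless overkill), and your explicit Lyapunov bound $\sigma_l^2\leq C^{2/3}$ treats the uniformity of $\tilde{C}$ over coordinates slightly more carefully than the paper's implicit handling.
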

\begin{proof}[Proof of Proposition \ref{thm:priv_mean_supp}]
	If we can show that
	\begin{equation}	\label{eq1:thm_mean_supp}
		\max_{1\leq l\leq p}\mbP\Big(\bar{Q}_l(\mbX)\neq\sgn(\theta_l^*)\Big)=O(p^{-\gamma}),
	\end{equation} 
	where $\gamma>0$ is large enough. Then this theorem is proved as follows
	\begin{align*} 
		&1 - \mbP\Big(\bar{\vect{Q}}(\mbX)=\sgn(\vect{\theta}^*)\Big)\\
		\leq& p\max_{1\leq l\leq p}\mbP\Big(\bar{Q}_l(\mbX)\neq\sgn(\theta_l^*)\Big)= O(pp^{-\gamma})= O(p^{-\gamma+\gamma_0}),
	\end{align*}
	provided that $\gamma>\gamma_0$ and sufficient large $\gamma_0$. For the $l$-th coordinate, we firstly suppose that $\theta_l^*=0$. Since $\vect{X}_i$'s are sampled from the distribution \eqref{eq:sample_dist}, by Lemma \ref{lemma:concen_mom} with $k=0$, if 
	\begin{equation*}
		\lambda_N= C_0\left(\sqrt{\frac{\log p}{mn}}+\frac{1}{n}\right)\geq c_{\gamma,0},
	\end{equation*}
	we have
	\begin{align*}
		&\mbP\Big(\bar{Q}_l(\mbX)\neq\sgn(\theta_l^*)\Big)\\
		\leq &\mbP\Big\{\sum_{j=1}^{m}\mbI\Big(\bar{X}_{j,l}<\lambda_N\Big)\leq\frac{m}{2}\Big\}+\mbP\Big\{\sum_{j=1}^{m}\mbI\Big(\bar{X}_{j,l}>-\lambda_N\Big)\leq\frac{m}{2}\Big\}=O(p^{-\gamma}).
	\end{align*}
	Next we assume $\theta_l^*>0$. We use Lemma \ref{lemma:concen_mom} again, if 
	\begin{equation*}
		\theta_l^*\geq C_1\Big(\sqrt{\frac{\log p}{mn}}+\frac{1}{n}\Big)\geq c_{\gamma,0}+\lambda_N,
	\end{equation*}
	then there is
	\begin{align*}
		&\mbP\Big(\bar{Q}_l(\mbX)\neq\sgn(\theta_l^*)\Big)\leq \mbP\Big\{\sum_{j=1}^{m}\mbI\Big(\bar{X}_{j,l}>\lambda_N\Big)\leq\frac{m}{2}\Big\}\\
		\leq &\mbP\Big\{\sum_{j=1}^{m}\mbI\Big(\bar{X}_{j,l}-\theta_l^*>-c_{\gamma,0}\Big)\leq\frac{m}{2}\Big\}=O(p^{-\gamma}).
	\end{align*}
	Lastly, when $\theta_l^*<0$, the proof is the same as above. Therefore \eqref{eq1:thm_mean_supp} is proved.
\end{proof}

\begin{proof}[Proof of Theorem \ref{prop:dp_mean_sign}]
	By Theorem \ref{thm:priv_mean_supp}, we know that $\bar{\vect{Q}}(\mbX)=\sgn(\vect{\theta}^*)$ holds with probability not less than $1-p^{-\gamma_1}$. To prove the sign consistency of $\hat{\vect{Q}}(\mbX)$, from Theorem \ref{thm:medsign_cons} we know that it is enough to show for each $l\in S$, there is
	\begin{equation*}
		f^S(\vect{Q}^r_l) \geq 8(\gamma+1)\sqrt{2\tilde{s}\log(2/\delta)}\log p/\epsilon
	\end{equation*}
	hold with probability $1-O(p^{-\gamma})$. For $l\in S$, without loss of generality assume that $\sgn(\theta^*_l)=\sgn(\bar{Q}_l(\mbX))=1$, then
	\begin{align*}
		f_l(\vect{Q}^r_l,\bar{Q}_l(\mbX)) =& N_l^+-N_l^--N_l^0=2N_l^+ - m\\
		=&2\sum_{j=1}^m\mbI\Big(\bar{X}_{j,l}>\lambda_N\Big)-m\\
		=&2\Big\{\sum_{j=1}^m\mbI\Big(\bar{X}_{j,l}-\theta_l^*>\lambda_N-\theta_l^*\Big)-\frac{m}{2}\Big\}.
	\end{align*} 
	Therefore,
	\begin{align*}
		&\mbP\Big(f^S(\vect{Q}_l^r)\geq 8(\gamma+1)\sqrt{2\tilde{s}\log(2/\delta)}\log p/\epsilon\Big)\\
		=& \mbP\left(\sum_{j=1}^m\mbI\Big(\bar{X}_{j,l}-\theta_l^*>\lambda_N-\theta_l^*\Big)-\frac{m}{2}\geq 4(\gamma+1)\sqrt{2\tilde{s}\log(2/\delta)}\log p/\epsilon\right)\\
		\geq& 1-O(p^{-\gamma_1}),
	\end{align*}
	as long as 
	\begin{align*}
		\theta_l^*-\lambda_N\geq\frac{C_{\delta}\sigma}{\sqrt{n}}\left(\frac{C_{\mathrm{B}}}{\sqrt{n}}+\frac{4(\gamma+1)\sqrt{2\tilde{s}\log(2/\delta)}\log p}{m\epsilon}+\sqrt{\frac{C\gamma_1\log p}{m}}\right),
	\end{align*}
	which proves the first part of the proposition. Similarly, when $\tilde{s}>s$,for $l\notin S$, there is
	\begin{align*}
		f_l(\vect{Q}^r_l,0) =& \min\{N_l^0-N_l^++N_l^-,N_l^0+N_l^+-N_l^-\}\\
		=&\min\Big\{2\sum_{j=1}^m\mbI\Big(\bar{X}_{j,l}\leq\lambda_N\Big),2\sum_{j=1}^m\mbI\Big(\bar{X}_{j,l}\geq-\lambda_N\Big)\Big\}-m.
	\end{align*} 
	Therefore 
	\begin{align*}
		&\mbP\Big(2\sum_{j=1}^m\mbI\Big(\bar{X}_{j,l}\leq\lambda_N\Big)-m\geq 8(\gamma+1)\sqrt{2\tilde{s}\log(2/\delta)}\log p/\epsilon\Big)\\
		=& \mbP\left(\sum_{j=1}^m\mbI\Big(\bar{X}_{j,l}\leq\lambda_N\Big)-\frac{m}{2}\geq 4(\gamma+1)\sqrt{2\tilde{s}\log(2/\delta)}\log p/\epsilon\right)\\
		\geq& 1-O(p^{-\gamma_1}),
	\end{align*}
	as long as 
	\begin{align*}
		\lambda_N\geq\frac{C_{\delta}\sigma}{\sqrt{n}}\left(\frac{C_{\mathrm{B}}}{\sqrt{n}}+\frac{4(\gamma+1)\sqrt{2\tilde{s}\log(2/\delta)}\log p}{m\epsilon}+\sqrt{\frac{C\gamma_1\log p}{m}}\right).
	\end{align*}
	Similarly, we can show that
	\begin{align*}
		&\mbP\Big(2\sum_{j=1}^m\mbI\Big(\bar{X}_{j,l}\geq-\lambda_N\Big)-m\geq 8(\gamma+1)\sqrt{2\tilde{s}\log(2/\delta)}\log p/\epsilon\Big)\geq 1-O(p^{-\gamma_1}).
	\end{align*}
	Therefore we have
	\begin{align*}
	    &\mbP\Big(f_l(\vect{Q}^r_l,0) \geq 8(\gamma+1)\sqrt{2\tilde{s}\log(2/\delta)}\log p/\epsilon\Big)\geq 1-O(p^{-\gamma_1}),
	\end{align*}
	and Theorem \ref{thm:medsign_cons} applies.
\end{proof}


\subsection{Proof of Results in Section \ref{sec:priv_lasso_theory}}\label{sec:proof_4-2}

\begin{lemma}	\label{eq:cov_bound}
	Let $\vect{X}_1,\dots,\vect{X}_n$ be i.i.d. random vectors sampled from the distribution in \eqref{eq:cov_dist}. Denote its covariance matrix as $\vect{\Sigma}$, and the sample covariance matrix as $\hat{\vect{\Sigma}}$. Then for every $\gamma>1$, there exists a constant $\tilde{C}>0$ such that
	\begin{equation*}
		\mbP\Big(\Norm{\vect{\Sigma}^{-1}\hat{\vect{\Sigma}}-\vect{I}}_{\infty}\geq\tilde{C}\sqrt{\frac{\log p}{n}}\Big)=O(p^{-\gamma}).
	\end{equation*}
\end{lemma}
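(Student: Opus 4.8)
The plan is to reduce this entrywise max bound to a coordinatewise sub-exponential concentration inequality followed by a union bound. First I would write $\vect{\Sigma}^{-1}\hat{\vect{\Sigma}}-\vect{I}=\vect{\Sigma}^{-1}(\hat{\vect{\Sigma}}-\vect{\Sigma})$ and denote by $\vect{\omega}_a$ the $a$-th column of $\vect{\Sigma}^{-1}$, so that $\vect{\Sigma}^{-1}=(\vect{\omega}_1,\dots,\vect{\omega}_p)$ as in the statement of Theorem \ref{prop:dp_lasso_sign}. Using symmetry of $\vect{\Sigma}^{-1}$ and $\hat{\vect{\Sigma}}=n^{-1}\sum_{i=1}^n\vect{X}_i\vect{X}_i^{\tp}$, the $(a,b)$ entry is
\[
	\big(\vect{\Sigma}^{-1}\hat{\vect{\Sigma}}-\vect{I}\big)_{ab}=\frac{1}{n}\sum_{i=1}^n\Big\{(\vect{\omega}_a^{\tp}\vect{X}_i)X_{i,b}-\mbE\big[(\vect{\omega}_a^{\tp}\vect{X}_i)X_{i,b}\big]\Big\},
\]
because $\mbE[(\vect{\omega}_a^{\tp}\vect{X})X_b]=\vect{\omega}_a^{\tp}\vect{\Sigma}\vect{e}_b=(\vect{\Sigma}^{-1}\vect{\Sigma})_{ab}=\delta_{ab}=\vect{I}_{ab}$. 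Thus each entry is an average of i.i.d. mean-zero variables $W_i^{(a,b)}:=(\vect{\omega}_a^{\tp}\vect{X}_i)X_{i,b}-\delta_{ab}$.

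The key step is to show these summands are sub-exponential with parameters uniform in $a,b$, so that the exponential inequality of Lemma \ref{lemma:cai_liu} applies. Since $\Lambda_{\min}(\vect{\Sigma})\geq\rho$, we have $|\vect{\omega}_a|_2\leq\Norm{\vect{\Sigma}^{-1}}\leq\rho^{-1}$, so applying the uniform sub-Gaussian bound in \eqref{eq:cov_dist} to the unit vector $\vect{\omega}_a/|\vect{\omega}_a|_2$ shows that both $U:=\vect{\omega}_a^{\tp}\vect{X}$ and $V:=X_b=\vect{e}_b^{\tp}\vect{X}$ are sub-Gaussian with uniformly bounded proxies. From the elementary inequality $\eta|UV|\leq\tfrac{\eta}{2c}U^2+\tfrac{\eta c}{2}V^2$ and convexity of the exponential, a suitable choice of $c$ and $\eta$ (using $|\vect{\omega}_a|_2\leq\rho^{-1}$) gives $\mbE\exp(\eta|UV|)\leq C$ uniformly. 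As $|\delta_{ab}|\leq1$, the centered variable $W^{(a,b)}=UV-\delta_{ab}$ then satisfies $\mbE[(W^{(a,b)})^2\exp(\eta'|W^{(a,b)}|)]\leq C'$ for some $\eta',C'>0$ independent of $a,b$, since $w^2e^{\eta'|w|}\leq C''e^{\eta|w|}$ for appropriate constants.

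With this moment bound, I would apply Lemma \ref{lemma:cai_liu} (and its mirror image for the lower tail, obtained by replacing $W_i^{(a,b)}$ by $-W_i^{(a,b)}$) to the i.i.d. sequence $\{W_i^{(a,b)}\}_{i=1}^n$, taking $x\asymp\sqrt{\log p/n}$; the admissibility requirement $x\leq C'$ holds for $n$ large because $p=O(n^{\gamma_0})$ forces $\log p=o(n)$. Choosing $\tilde{C}$ large enough, the resulting bound $\exp(-nx^2/C')$ becomes $O(p^{-(\gamma+2)})$, so that
\[
	\mbP\Big(\big|\big(\vect{\Sigma}^{-1}\hat{\vect{\Sigma}}-\vect{I}\big)_{ab}\big|\geq\tilde{C}\sqrt{\tfrac{\log p}{n}}\Big)=O\big(p^{-(\gamma+2)}\big).
\]
A union bound over the $p^2$ entries then yields the claim, since $p^2\cdot O(p^{-(\gamma+2)})=O(p^{-\gamma})$. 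The main obstacle is the uniform sub-exponential moment verification in the second step: one must keep every scale constant independent of the indices $a,b$, which is precisely what the uniform sub-Gaussian condition in \eqref{eq:cov_dist} and the eigenvalue lower bound $\Lambda_{\min}(\vect{\Sigma})\geq\rho$ deliver. Everything else is a routine Bernstein-type tail bound together with the union bound.
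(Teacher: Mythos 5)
Your proposal is correct and follows essentially the same route as the paper's own proof: the entrywise decomposition $(\vect{\Sigma}^{-1}\hat{\vect{\Sigma}}-\vect{I})_{ab}=n^{-1}\sum_{i=1}^n\{(\vect{\omega}_a^{\tp}\vect{X}_i)X_{i,b}-\delta_{ab}\}$, a uniform exponential-moment bound obtained from $|\vect{\omega}_a|_2\leq\rho^{-1}$ together with the sub-Gaussian condition in \eqref{eq:cov_dist}, an application of Lemma \ref{lemma:cai_liu} at level $x\asymp\sqrt{\log p/n}$, and a union bound over the $p^2$ entries. If anything, you are slightly more careful than the paper, since you explicitly convert the exponential moment into the form $\mbE[W^2e^{\eta'|W|}]\leq C'$ required by Lemma \ref{lemma:cai_liu} and verify the admissibility constraint $x\leq C'$ via $p=O(n^{\gamma_0})$, both of which the paper's proof leaves implicit.
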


\begin{proof}
	Recall that the inverse covariance matrix is denoted as $\vect{\Sigma}^{-1}=(\vect{\omega}_1,\dots,\vect{\omega}_p)$, and $\vect{e}_l$ as the $l$-th coordinate vector. Then the $(l_1,l_2)$-entry of the matrix $\vect{\Sigma}^{-1}\hat{\vect{\Sigma}}-\vect{I}$ is
	\begin{equation*}
	\begin{aligned}
		(\vect{\Sigma}^{-1}\hat{\vect{\Sigma}}-\vect{I})_{l_1,l_2}=\frac{1}{n}\sum_{i=1}^n\vect{\omega}_{l_1}^{\tp}\vect{X}_i\cdot \vect{e}_{l_2}^{\tp}\vect{X}_i-\delta_{l_1,l_2}.
	\end{aligned}
	\end{equation*}
	Since the dimension $p$ is assumed to be bounded, and the covariance matrix is positive definite, there exist a constant $\rho\in(0,1)$ such that
	\begin{equation*}
		\rho\leq\Lambda_{\min}(\vect{\Sigma})\leq\Lambda_{\max}(\vect{\Sigma})\leq\rho^{-1}.
	\end{equation*}
	Then we have
	\begin{equation*}
		\max_{1\leq l\leq p}|\vect{\omega}_l|_2\leq\Norm{\vect{\Sigma}^{-1}}\leq \rho^{-1}.
	\end{equation*}
	Since $\vect{X}$ is sub-Gaussian in \eqref{eq:cov_dist}, we obtain
	\begin{align*}
		&\max_{1\leq l_1,l_2\leq p}\mbE\big\{\exp\eta_1\rho\big|\vect{\omega}_{l_1}^{\tp}\vect{X}_i\cdot \vect{e}_{l_2}^{\tp}\vect{X}_i-\delta_{l_1,l_2}\big|\big\}\\
		\leq&e^{\eta_1\rho}\cdot\sup_{|\vect{v}|_2\leq 1}\mbE\big\{\eta_1|\vect{v}^{\tp}\vect{X}|^2\big\}\leq e^{\eta_1\rho}C_1.
	\end{align*}
	Therefore, we can apply Lemma \ref{lemma:cai_liu} to each coordinate and yield
	\begin{align*}
		&\mbP\Big(\Norm{\vect{\Sigma}^{-1}\hat{\vect{\Sigma}}-\vect{I}}_{\infty}\geq \tilde{C}\sqrt{\frac{\log p}{n}}\Big)\\
		\leq& p^2\max_{1\leq l_1,l_2\leq p}\mbP\Big(\Big|\frac{1}{n}\sum_{i=1}^n\vect{\omega}_{l_1}^{\tp}\vect{X}_i\cdot \vect{e}_{l_2}^{\tp}\vect{X}_i-\delta_{l_1,l_2}\Big|\geq \tilde{C}\sqrt{\frac{\log p}{n}}\Big)\\
		= & O(p^2p^{-\gamma-2})=O(p^{-\gamma}),
	\end{align*}
	for some $\tilde{C}$ sufficiently large. Therefore, the lemma is proved.
\end{proof}
	
Before proving Theorem \ref{prop:dp_lasso_sign}, we first provide the sign consistency of the non-private $\majvote$ Lasso estimator. Namely, we define
\begin{equation}    \label{eq:med-lasso_sign}
    \bar{\vect{Q}}(\mbX) = \majvote(\mcQ_0(\hat{\vect{\theta}}_j)|1\leq j\leq m).
\end{equation}
Then we have the following proposition.

\begin{proposition}	\label{thm:priv_reg_supp}
	(Sign consistency of $\majvote$ Lasso) Let $N$ i.i.d. random vectors $\mbX=\{(\vect{X}_1,Y_1)$, $\dots,(\vect{X}_N,Y_N)\}$ be sampled from $\mcP_{\vect{X},Y}(\vect{\theta}^*,\rho,\eta_1,C_1,\eta_2,C_2)$. Moreover, we assume the same conditions as in Theorem \ref{prop:dp_lasso_sign} holds. Then $\bar{\vect{Q}}(\mbX)$ defined in \eqref{eq:med-lasso_sign} satisfies that, there is constant $\gamma_2>0$ such that
	\begin{equation*}
		\mbP\Big(\bar{\vect{Q}}(\mbX)=\sgn(\vect{\theta}^*)\Big)\geq 1-O(p^{-\gamma_2}).
	\end{equation*}	
\end{proposition}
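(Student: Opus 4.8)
The plan is to follow the coordinate-wise decomposition used in the proof of Proposition~\ref{thm:priv_mean_supp}. Since $\bar{\vect{Q}}(\mbX)=\sgn(\vect{\theta}^*)$ can fail only if some coordinate is wrong, a union bound gives
\begin{equation*}
1-\mbP\big(\bar{\vect{Q}}(\mbX)=\sgn(\vect{\theta}^*)\big)\leq \sum_{l=1}^p\mbP\big(\bar{Q}_l(\mbX)\neq\sgn(\theta_l^*)\big),
\end{equation*}
so it suffices to prove $\max_{1\leq l\leq p}\mbP(\bar{Q}_l(\mbX)\neq\sgn(\theta_l^*))=O(p^{-\gamma})$ for some $\gamma>\gamma_0$, and the claim follows with $\gamma_2=\gamma-\gamma_0$. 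Recalling the majority rule, for $l\in S$ with $\theta_l^*>0$ the event $\{\bar{Q}_l(\mbX)\neq 1\}$ is contained in $\{N_l^+\leq m/2\}$, while for $l\in S^c$ the event $\{\bar{Q}_l(\mbX)\neq 0\}$ is contained in $\{N_l^+>m/2\}\cup\{N_l^->m/2\}$. Thus everything reduces to counting, across the $m$ machines, how often the local Lasso sign $\sgn(\hat{\theta}_{j,l})$ takes each value, which is exactly the type of quantity controlled by Lemma~\ref{lemma:concen_mom}. The essential new difficulty, compared with the mean model, is that $\sgn(\hat{\theta}_{j,l})$ is a complicated nonlinear functional of the local sample rather than a clean sample mean.

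The key device I would use is a primal--dual witness / KKT analysis reducing each local sign to a score statistic that \emph{is} a sample mean. Writing the KKT conditions for \eqref{eq:local_Lasso}, on the event that the sample covariance is well behaved (controlled by Lemma~\ref{eq:cov_bound}, so that $\Norm{\vect{\Sigma}^{-1}\hat{\vect{\Sigma}}_{(j)}-\vect{I}}_{\infty}\lesssim\sqrt{\log p/n}$), I would show that the fate of $\hat{\theta}_{j,l}$ is governed by the coordinate score
\begin{equation*}
W_{j,l}=\frac{1}{n}\sum_{i\in\mcH_j}\big(\vect{\omega}_l^{\tp}\vect{X}_i\big)z_i,
\end{equation*}
an average of i.i.d.\ mean-zero terms with a finite third moment (by the sub-Gaussianity in \eqref{eq:cov_dist}), hence amenable to Lemma~\ref{lemma:concen_mom}. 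For $l\in S^c$ the precision-matrix incoherence condition \eqref{eq:incoherence} makes the deterministic part of the associated dual variable bounded by $1-\Delta_0$, so a false selection $\sgn(\hat{\theta}_{j,l})\neq0$ forces $|W_{j,l}|$ to exceed a positive threshold $\tau$ of order $\Delta_0\lambda_j$, up to a bias of order $\sqrt{\log p/n}\,\tilde{s}\,\lambda_j=o(\lambda_j)$; here the sparsity bound $\tilde{s}=o(\sqrt{n/\log p})$ is what makes the bias negligible. For $l\in S$ I would first show $\supp(\hat{\vect{\theta}}_j)\subseteq S$, so that the restricted estimator satisfies $\hat{\theta}_{j,l}=\theta_l^*+W_{j,l}+(\text{bias})$, whence $\sgn(\hat{\theta}_{j,l})=\sgn(\theta_l^*)$ fails only when $W_{j,l}$ lies on the wrong side of the threshold $-\theta_l^*+(\text{bias})$, which the $\max_j\lambda_j$ margin in \eqref{eq:regression_theta_min} pushes strictly below zero.

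With these reductions in place I would invoke Lemma~\ref{lemma:concen_mom} as in the mean case. For $l\in S^c$, taking $k=0$, each false-sign count $N_l^{\pm}$ is dominated by $\sum_j\mbI(\pm W_{j,l}>\tau)$; since assumption~(a) forces $\lambda_N$, and hence $\tau$, to dominate $c_{\gamma,0}$, Lemma~\ref{lemma:concen_mom} gives $\sum_j\mbI(\pm W_{j,l}\leq\tau)>m/2$ with probability $1-O(p^{-\gamma})$, so $N_l^{\pm}\leq m/2$ and $\bar{Q}_l(\mbX)=0$. For $l\in S$ with $\theta_l^*>0$, the beta-min condition \eqref{eq:regression_theta_min} makes the threshold for $W_{j,l}$ at most $-c_{\gamma,0}$, so $\{W_{j,l}>-c_{\gamma,0}\}\subseteq\{\sgn(\hat{\theta}_{j,l})=1\}$ and Lemma~\ref{lemma:concen_mom} yields $N_l^+\geq\sum_j\mbI(W_{j,l}>-c_{\gamma,0})>m/2$ with probability $1-O(p^{-\gamma})$; the case $\theta_l^*<0$ is symmetric. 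A union bound over the $p$ coordinates with $\gamma>\gamma_0$ completes the argument.

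The hard part will be the reduction in the second paragraph: making the primal--dual witness argument valid \emph{uniformly} over all $m$ machines and all $p$ coordinates, while the regularization parameters $\lambda_j$ are themselves random and data-dependent through \eqref{eq:lambda_set}. I would handle this by first establishing, on a single high-probability event, that every $\lambda_j$ lies in $(\lambda_N,\tilde{C}\sqrt{\log p/n})$ and that every local restricted Lasso obeys the requisite $\ell_1$/$\ell_\infty$ error bounds, so that all the bias terms above are simultaneously controlled; only then can the clean score $W_{j,l}$ be isolated and fed into the count concentration of Lemma~\ref{lemma:concen_mom}. Keeping the threshold $\tau$ safely above $c_{\gamma,0}$ for $l\in S^c$, and safely below $-c_{\gamma,0}$ for $l\in S$, uniformly over the random $\lambda_j$, is the delicate bookkeeping step on which the whole proof hinges.
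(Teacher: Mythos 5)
Your overall architecture coincides with the paper's proof: a union bound over coordinates, the KKT identity for \eqref{eq:local_Lasso} rearranged and multiplied by $\vect{\Sigma}^{-1}$ (the paper's \eqref{eq1:thm_reg_supp}), reduction of each local sign to the score $W_{j,l}=n^{-1}\sum_{i\in\mcH_j}\vect{\omega}_l^{\tp}\vect{X}_iz_i$, bias control via Lemma \ref{eq:cov_bound} together with $\tilde{s}=o(\sqrt{n/\log p})$, the incoherence condition \eqref{eq:incoherence} forcing $|W_{j,l}|\gtrsim\Delta_0\lambda_j\omega_{l,l}$ whenever a null coordinate is falsely selected, and finally the counting concentration of Lemma \ref{lemma:concen_mom} applied per machine. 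All of that matches the paper's treatment of $l\in S^c$ essentially step for step, including your observation that $\lambda_j\geq\lambda_N$ from \eqref{eq:lambda_set} makes the threshold dominate $c_{\gamma,0}$.

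The genuine gap is your treatment of $l\in S$: you propose to ``first show $\supp(\hat{\vect{\theta}}_j)\subseteq S$'' via a primal--dual witness and only then expand the restricted estimator. This step fails on two counts. First, the witness construction requires the classical mutual incoherence condition on the design, $\max_{l\in S^c}|\vect{\Sigma}_{lS}\vect{\Sigma}_{SS}^{-1}|_1\leq 1-\Delta$, as in \cite{wainwright2009sharp}; the paper's condition (c) is a row-wise diagonal dominance of the precision matrix, which is a different assumption and is not known to imply support containment, so the step is unsupported by the stated hypotheses. Second, and more fundamentally, support containment is \emph{false} in the regime the theorem is designed for: by \eqref{eq:lambda_set} and \eqref{eq:regression_lambdaN}, $\lambda_j$ can be as small as $\lambda_N\asymp\sqrt{\log p/N}$, far below the local level $\sqrt{\log p/n}$ needed for a single machine to avoid false selections, so each local Lasso overselects with non-negligible probability per coordinate---uniformity over $m$ machines is hopeless. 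The entire point of the majority vote (cf.\ the discussion in Section \ref{sec:majvote_mean}) is that these false local signs are nearly symmetric and cancel in the vote. The repair is easy and is exactly what the paper does: no support information is needed, since the coordinate-wise fixed-point expansion \eqref{eq2:thm_reg_supp}, using only $|\vect{Z}_j|_{\infty}\leq 1$, gives $\hat{\theta}_{j,l}\geq\theta_l^*+W_{j,l}-\Norm{\vect{\Sigma}^{-1}}_{L_{\infty}}\lambda_j+o_{\mbP}(\log p/n)$ for every coordinate, after which the $\max_{1\leq j\leq m}\lambda_j$ margin built into \eqref{eq:regression_theta_min} (assumed to hold with probability tending to one, which also disposes of your worry about the data-dependent $\lambda_j$'s on the signal side) lets Lemma \ref{lemma:concen_mom} conclude as in your final paragraph.
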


\begin{proof}[Proof of Proposition \ref{thm:priv_reg_supp}]
	For each $j\in\{1,\dots,m\}$, taking sub-gradient of \eqref{eq:local_Lasso} at $\hat{\vect{\theta}}_j$, we have that
	\begin{equation*}
		-\frac{1}{n_j}\sum_{i\in\mcH_j}(Y_i-\vect{X}_i^{\tp}\hat{\vect{\theta}}_j)\vect{X}_i+\lambda_j\vect{Z}_j=0,
	\end{equation*}
	where $\vect{Z}_j$ is the sub-gradient satisfying $|\vect{Z}_j|_{\infty}\leq1$. Rearranging the terms and multiplying $\vect{\Sigma}^{-1}$ on the both sides, we have
	\begin{equation}	\label{eq1:thm_reg_supp}
		\hat{\vect{\theta}}_j-\vect{\theta}^*=(\vect{I}-\vect{\Sigma}^{-1}\frac{1}{n_j}\sum_{i\in\mcH_j}\vect{X}_i\vect{X}_i^{\tp})(\hat{\vect{\theta}}_j-\vect{\theta}^*)+\frac{1}{n_j}\sum_{i\in\mcH_j}\vect{\Sigma}^{-1}\vect{X}_iz_i-\lambda_j\vect{\Sigma}^{-1}\vect{Z}_j.
	\end{equation}
	Taking $\tilde{\eta}=\min\{\eta_1\rho,\eta_2\}$, since the noise and covariates are assumed to be sub-Gaussian in \eqref{eq:cov_dist}, for each coordinate $l\in \{1,\dots, p\}$, we have
	\begin{align*}
		&\max_{1\leq l\leq p}\mbE\big\{\exp\tilde{\eta}|\vect{\omega}_l\vect{X}\cdot z|\big\}\\
		\leq&\max_{1\leq l\leq p}\mbE\Big\{\exp\Big(\frac{1}{2}\eta\rho|\vect{\omega}_l\vect{X}|^2+\frac{1}{2}\eta_2|z|^2\Big)\Big\}\\
		\leq&\max_{1\leq l\leq p}\Big[\mbE\big\{\exp\eta\rho|\vect{\omega}_l\vect{X}|^2\big\}\cdot\mbE\big\{\exp\eta_2|z|^2\big\}\Big]^{1/2}\leq \sqrt{C_1C_2}.
	\end{align*}
	Therefore by Lemma \ref{lemma:cai_liu}, we know that there exists a constant $\tilde{C}_1>0$ such that
	\begin{equation}	\label{eq:thm3_keybound1}
		\max_{1\leq j\leq m}\Big|\frac{1}{n_j}\sum_{i\in\mcH_j}\vect{\Sigma}^{-1}\vect{X}_iz_i\Big|_{\infty}\leq \tilde{C}_1\sqrt{\frac{\log p}{n}},
	\end{equation}
	with probability larger than $1-O(p^{-\gamma})$. On the other hand, by the fact that $|\vect{Z}_j|_{\infty}\leq 1$, we have 
	\begin{equation}	\label{eq:thm3_keybound2}
		|\lambda_j\vect{\Sigma}^{-1}\vect{Z}_j|_{\infty}\leq \lambda_j\Norm{\vect{\Sigma}^{-1}}_{L_{\infty}}.
	\end{equation}
	Moreover, since $|\hat{\vect{\theta}}_j|_0\leq\tilde{s}$ by Lemma \ref{eq:cov_bound}, we know
	\begin{equation}	\label{eq:thm3_keybound3}
		\max_{1\leq j\leq m}\Big|\Big(\vect{I}-\vect{\Sigma}^{-1}\frac{1}{n_j}\sum_{i\in\mcH_j}\vect{X}_i\vect{X}_i^{\tp}\Big)(\hat{\vect{\theta}}_j-\vect{\theta}^*)\Big|_{\infty}=O_{\mbP}\Big(\tilde{s}\sqrt{\frac{\log p}{n}}|\hat{\vect{\theta}}_j-\vect{\theta}^*|_{\infty}\Big)\leq \frac{1}{2}|\hat{\vect{\theta}}_j-\vect{\theta}^*|_{\infty}.
	\end{equation}
	By substituting equations \eqref{eq:thm3_keybound1}, \eqref{eq:thm3_keybound2}, and \eqref{eq:thm3_keybound3} into \eqref{eq1:thm_reg_supp}, we have
	\begin{equation*}
		|\hat{\vect{\theta}}_j-\vect{\theta}^*|_{\infty}\leq 2\lambda_j\Norm{\vect{\Sigma}^{-1}}_{L_{\infty}}+2\tilde{C}_1\sqrt{\frac{\log p}{n}}.
	\end{equation*}
	From \eqref{eq1:thm_reg_supp}, the $l$-th coordinate can be written in the following form
	\begin{equation}	\label{eq2:thm_reg_supp}
		\hat{\theta}_{j,l}-\theta^*_l=-\lambda_j\omega_{l,l}Z_{j,l}-\lambda_j\vect{\omega}^{\tp}_{l,-l}\vect{Z}_{j,-l}+\frac{1}{n_j}\sum_{i\in\mcH_j}\vect{\omega}_l^{\tp}\vect{X}_iz_i+o_{\mbP}\Big(\frac{\log p}{n}\Big).
	\end{equation}
	It left to rehash the argument in the proof of Theorem \ref{thm:priv_mean_supp}. From Lemma \ref{lem:theta_sign} below we have that
	\begin{align*}
		&1-\mbP\Big(\bar{\vect{Q}}(\mbX)=\sgn(\vect{\theta}^*)\Big)\leq p\cdot\max_{1\leq l\leq p}\mbP\Big(\bar{Q}_l(\mbX)\neq\sgn(\theta_l^*)\Big)=O(p^{-\gamma+1}).
	\end{align*}
	Therefore we proved Proposition \ref{thm:priv_reg_supp}.
\end{proof}

\begin{lemma}	\label{lem:theta_sign}
	Assume the same conditions in Proposition \ref{thm:priv_reg_supp}. For every $1\leq l\leq p$, we have
	\begin{equation*}
		\mbP\Big(\bar{Q}_l(\mbX)\neq\sgn(\theta_l^*)\Big)=O(p^{-\gamma}),
	\end{equation*}
	for arbitrarily fixed $\gamma>1$.
\end{lemma}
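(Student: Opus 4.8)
The plan is to reduce the claim to a coordinatewise large‑deviation count, mirroring the argument for the mean model in Proposition \ref{thm:priv_mean_supp} but with the local sample mean replaced by the linearized Lasso noise. Fix a coordinate $l$ and recall from \eqref{eq2:thm_reg_supp} the representation $\hat\theta_{j,l}-\theta_l^*=-\lambda_j\omega_{l,l}Z_{j,l}-\lambda_j\vect{\omega}_{l,-l}^{\tp}\vect{Z}_{j,-l}+W_{j,l}+R_{j,l}$, where $W_{j,l}:=\frac{1}{n}\sum_{i\in\mcH_j}\vect{\omega}_l^{\tp}\vect{X}_iz_i$ is the leading stochastic term and $R_{j,l}=[(\vect{I}-\vect{\Sigma}^{-1}\hat{\vect{\Sigma}}_j)(\hat{\vect{\theta}}_j-\vect{\theta}^*)]_l$ is the remainder. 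The variables $\{W_{j,l}\}_{j=1}^m$ are i.i.d.\ across machines, each an average of $n$ i.i.d.\ mean‑zero summands $\vect{\omega}_l^{\tp}\vect{X}_iz_i$ (mean zero since $z\perp\vect{X}$ and $\mbE z=0$), with common variance $\vect{\omega}_l^{\tp}\vect{\Sigma}\vect{\omega}_l\,\sigma_z^2=\omega_{l,l}\sigma_z^2$, which is bounded above and below uniformly in $l$. Since $\vect{\omega}_l^{\tp}\vect{X}$ and $z$ are sub‑Gaussian under \eqref{eq:cov_dist}, their product is sub‑exponential, so the exponential‑moment and third‑moment hypotheses of Lemmas \ref{lemma:cai_liu} and \ref{lemma:berry_esseen} hold and Lemma \ref{lemma:concen_mom} applies verbatim to $\{W_{j,l}\}_j$. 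Throughout I would work on the good event $\mcE$ on which \eqref{eq:thm3_keybound1}, the bound $\max_j|\hat{\vect{\theta}}_j-\vect{\theta}^*|_\infty\lesssim\lambda_j+\sqrt{\log p/n}$, the $\ell_1$ bound $\max_j|\hat{\vect{\theta}}_j-\vect{\theta}^*|_1\lesssim s\lambda_j$, and Lemma \ref{eq:cov_bound} all hold; these give $\mbP(\mcE^c)=O(p^{-\gamma})$, so it suffices to bound the voting failure on $\mcE$.

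Next I would split on the sign of $\theta_l^*$. Suppose first $l\in S$, say $\theta_l^*>0$; it suffices to show $N_l^+>m/2$ with probability $1-O(p^{-\gamma})$. Using only $|Z_{j,l}|\le1$ and $|\vect{Z}_{j,-l}|_\infty\le1$, the penalty block is bounded in modulus by $\lambda_j|\vect{\omega}_l|_1$, so $\hat\theta_{j,l}\ge\theta_l^*+W_{j,l}-\lambda_j|\vect{\omega}_l|_1-|R_{j,l}|$. Because $|\vect{\omega}_l|_1$ is bounded (as $\Norm{\vect{\Sigma}^{-1}}\le\rho^{-1}$) and, on $\mcE$, $\lambda_j\le\tilde{C}\sqrt{\log p/n}$ and $|R_{j,l}|\lesssim s\lambda_j\sqrt{\log p/n}=o(\lambda_j)$, the minimal‑signal condition \eqref{eq:regression_theta_min} (with $C_4$ large) forces $\theta_l^*-\lambda_j|\vect{\omega}_l|_1-|R_{j,l}|\ge c_{\gamma,0}$, so $W_{j,l}>-c_{\gamma,0}$ already implies $\hat\theta_{j,l}>0$. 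Hence $N_l^+\ge\sum_j\mbI(W_{j,l}>-c_{\gamma,0})$, and the second term of Lemma \ref{lemma:concen_mom} with $k=0$ gives $\mbP(\sum_j\mbI(W_{j,l}>-c_{\gamma,0})\le m/2)=O(p^{-\gamma})$, i.e.\ $\bar{Q}_l=1$ with the required probability. The case $\theta_l^*<0$ is symmetric via the first term of the lemma.

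Finally suppose $l\in S^c$, so $\theta_l^*=0$ and I must show the vote returns $0$, i.e.\ both $N_l^+\le m/2$ and $N_l^-\le m/2$. If $\hat\theta_{j,l}>0$ then necessarily $Z_{j,l}=1$, and combining this with the incoherence condition \eqref{eq:incoherence}, which yields $\omega_{l,l}-|\vect{\omega}_{l,-l}|_1\ge\Delta_0\omega_{l,l}$, gives $0<\hat\theta_{j,l}\le W_{j,l}-\lambda_j\Delta_0\omega_{l,l}+|R_{j,l}|$, so $\hat\theta_{j,l}>0\Rightarrow W_{j,l}>\lambda_j\Delta_0\omega_{l,l}-|R_{j,l}|$. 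The crucial point is that the incoherence margin and the remainder live on the same scale $\lambda_j$, but $|R_{j,l}|\lesssim s\lambda_j\sqrt{\log p/n}=o(\lambda_j)$, so the margin dominates, and using $\lambda_j\ge\lambda_N$ with the large constant $C_3$ in \eqref{eq:regression_lambdaN} I obtain $\lambda_j\Delta_0\omega_{l,l}-|R_{j,l}|\ge c_{\gamma,0}$. Consequently $N_l^+\le\sum_j\mbI(W_{j,l}>c_{\gamma,0})$, and the $k=0$ case of Lemma \ref{lemma:concen_mom} (read as a bound on the upper count) gives $\mbP(N_l^+>m/2)=O(p^{-\gamma})$; the bound on $N_l^-$ is symmetric. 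Combining the three cases yields $\mbP(\bar{Q}_l\ne\sgn(\theta_l^*))=O(p^{-\gamma})$.

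The main obstacle is the $S^c$ case: unlike the mean model, the threshold separating a false positive from a null is not $0$ but the subgradient‑induced margin, and one must check that incoherence supplies a strictly positive margin of order $\lambda_j$ surviving the Lasso linearization error. This is why $R_{j,l}$ must be controlled through the $\ell_1$ error $|\hat{\vect{\theta}}_j-\vect{\theta}^*|_1\lesssim s\lambda_j$ (so that $R_{j,l}=o(\lambda_j)$) rather than through the cruder $\ell_\infty/\ell_0$ bound, and why one must confirm that the per‑machine margin $\lambda_j\Delta_0\omega_{l,l}$ exceeds the aggregate voting resolution $c_{\gamma,0}\asymp\sqrt{\log p/N}$ — which holds precisely because $\lambda_N$ is itself of this order with a large leading constant. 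Note that the privacy term $\sqrt{\tilde{s}\log(1/\delta)}\log p/(m\sqrt{n}\epsilon)$ plays no role at this non‑private majority‑vote stage; it enters later through the stability‑function condition \eqref{eq:medsign_cons_term1} when passing from $\bar{\vect{Q}}$ to $\dpvote$.
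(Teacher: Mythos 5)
Your proof is correct and follows essentially the same route as the paper's: the KKT linearization \eqref{eq2:thm_reg_supp}, the observation that $\hat{\theta}_{j,l}>0$ forces $Z_{j,l}=1$ so that the incoherence condition \eqref{eq:incoherence} yields the margin $\Delta_0\lambda_j\omega_{l,l}$ on null coordinates, and Lemma \ref{lemma:concen_mom} with $k=0$ applied to the i.i.d.\ averages $\frac{1}{n}\sum_{i\in\mcH_j}\vect{\omega}_l^{\tp}\vect{X}_iz_i$, with the minimal-signal condition \eqref{eq:regression_theta_min} absorbing the $\max_j\lambda_j$ penalty term exactly as in the paper. The only immaterial difference is that you control the linearization remainder via the $\ell_1$-error bound $|\hat{\vect{\theta}}_j-\vect{\theta}^*|_1\lesssim s\lambda_j$, whereas the paper uses the sparsity bound $|\hat{\vect{\theta}}_j|_0\leq\tilde{s}$ together with $\Norm{\vect{\Sigma}^{-1}\hat{\vect{\Sigma}}-\vect{I}}_{\infty}\lesssim\sqrt{\log p/n}$ (see \eqref{eq:thm3_keybound3}); both give a remainder of order $o(\lambda_j)$, so your claim that the $\ell_1$ route is necessary is a slight overstatement but harmless.
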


\begin{proof}
	When $\theta^*_{l}=0$, we know that
	\begin{equation*}	
	\mbP\Big(\bar{Q}_l(\mbX)\neq\sgn(\theta_l^*)\Big)\leq\mbP\Big\{\sum_{j=1}^{m}\mbI\Big(\hat{\theta}_{j,l}< 0\Big)\geq\frac{m}{2}\Big\}+\mbP\Big\{\sum_{j=1}^{m}\mbI\Big(\hat{\theta}_{j,l}>0\Big)\geq\frac{m}{2}\Big\}.
	\end{equation*} 
	By symmetry of the formulation, it is enough to prove
	\begin{equation}	\label{eq:lem:theta_sign}
		\mbP\Big\{\sum_{j=1}^m\mbI\Big(\hat{\theta}_{j,l}>0\Big)\geq\frac{m}{2}\Big\}=O(p^{-\gamma}).
	\end{equation}
	When $\hat{\theta}_{j,l}>0$, we know that $Z_{j,l}=1$ (see \eqref{eq2:thm_reg_supp}). Moreover, by condition (c), we have
	\begin{align*}
	    0<\hat{\theta}_{j,l}\leq&-\lambda_j\omega_{l,l}+\lambda_j|\vect{\omega}_{l,-l}|_{1}+\frac{1}{n_j}\sum_{i\in\mcH_j}\vect{\omega}_l^{\tp}\vect{X}_iz_i+o_{\mbP}\Big(\frac{\log p}{n}\Big)\\
	    \leq&-\Delta_0\lambda_j\omega_{l,l}+\frac{1}{n_j}\sum_{i\in\mcH_j}\vect{\omega}_l^{\tp}\vect{X}_iz_i+o_{\mbP}\Big(\frac{\log p}{n}\Big)\\
	    \leq&-\frac{1}{2}\Delta_0\lambda_N\omega_{l,l}+\frac{1}{n_j}\sum_{i\in\mcH_j}\vect{\omega}_l^{\tp}\vect{X}_iz_i.
	\end{align*}
	Therefore from \eqref{eq2:thm_reg_supp} we have that
	\begin{equation}	\label{eq:thetal_zero}
	\begin{aligned}
		\mbP\Big\{\sum_{j=1}^m\mbI\Big(\hat{\theta}_{j,l}> 0\Big)\geq\frac{m}{2}\Big\}\leq\mbP\Big\{\sum_{j=1}^m\mbI\Big(\frac{1}{n_j}\sum_{i\in\mcH_j}\vect{\omega}_l^{\tp}\vect{X}_iz_i\geq\frac{1}{2}\Delta_0\lambda_N\omega_{l,l}\Big)\geq\frac{m}{2}\Big\}.
	\end{aligned}
	\end{equation}
	Applying Lemma \ref{lemma:concen_mom} with $k=0$ to the i.i.d. random variables $\vect{\omega}_l^{\tp}\vect{X}_iz_i$ we can prove \eqref{eq:lem:theta_sign} by taking
	\begin{equation}	\label{eq:lem:theta_sign_lambda}
		\lambda_N\geq\frac{2\tilde{C}_1}{\Delta_0\omega_{l,l}}\Big(\sqrt{\frac{\log p}{mn}}+\frac{1}{n}\Big),
	\end{equation}
	with $\tilde{C}$ sufficiently large. Repeat the argument for the other half, we can prove the case for $\theta^*_{l}=0$.
	
	When $\theta^*_{l}>0$, again from equation \eqref{eq2:thm_reg_supp}, we have
	\begin{equation*}
		\hat{\theta}_{j,l}\geq\theta^*_l+\frac{1}{n_j}\sum_{i\in\mcH_j}\vect{\omega}_l^{\tp}\vect{X}_iz_i-\Norm{\vect{\Sigma}^{-1}}_{L_{\infty}}\lambda_j+o_{\mbP}\Big(\frac{\log p}{n}\Big).
	\end{equation*}
	Therefore
	\begin{equation}	\label{eq:thetal_nonzero}
	\begin{aligned}
	&\mbP\Big(\bar{Q}_l(\mbX)\neq\sgn(\theta_l^*)\Big)\leq\mbP\Big\{\sum_{j=1}^{m}\mbI\Big(\hat{\theta}_{j,l}\leq0\Big)\geq\frac{m}{2}\Big\}\\
	\leq &\mbP\Big\{\sum_{j=1}^{m}\mbI\Big(-\frac{1}{n_j}\sum_{i\in\mcH_j}\vect{\omega}_l^{\tp}\vect{X}_iz_i\geq \theta^*_l-2\Norm{\vect{\Sigma}^{-1}}_{L_{\infty}}\lambda_j \Big)\geq\frac{m}{2}\Big\}.
	\end{aligned}
	\end{equation}
	Applying Lemma \ref{lemma:concen_mom} we can show that
	\begin{equation*}
		\mbP\Big\{\sum_{j=1}^{m}\mbI\Big(-\frac{1}{n_j}\sum_{i\in\mcH_j}\vect{\omega}_l^{\tp}\vect{X}_iz_i\geq \theta^*_l-2\Norm{\vect{\Sigma}^{-1}}_{L_{\infty}}\lambda_j \Big)\geq\frac{m}{2}\Big\}=O(p^{-\gamma}),
	\end{equation*}
	provided that
	\begin{equation*}
		\theta^*_l-2\Norm{\vect{\Sigma}^{-1}}_{L_{\infty}}\lambda_j\geq \tilde{C}_2\Big(\sqrt{\frac{\log p}{mn}}+\frac{1}{n}\Big),
	\end{equation*}
	for all $j\in\{1,\dots,m\}$ with $\tilde{C}_2$ sufficiently large. Therefore we have that
	\begin{equation*}
		\theta_l^*\geq \tilde{C}_3\Big(\sqrt{\frac{\log p}{mn}}+\frac{1}{n}+\max_{1\leq j\leq m}\lambda_j\Big),
	\end{equation*}
	for some $\tilde{C}_3>0$. When $\theta^*_{l}<0$, the prove is essentially the same as above, hence we omit it for brevity. Thus the lemma is proved.
\end{proof}

\begin{proof}[Proof of Theorem \ref{prop:dp_lasso_sign}]
	Similarly as the proof of Proposition \ref{prop:dp_mean_sign}, for $l\in S$ such that $\sgn(\theta_l^*)=\sgn(\bar{Q}_l(\mbX))=1$, we wish there holds
	\begin{equation*}
		f^S(\vect{Q}^r_l) = 2\sum_{j=1}^m\mbI\Big(\hat{\theta}_{j,l}>0\Big)-m\geq 8(\gamma+1)\sqrt{2\tilde{s}\log(2/\delta)}\log p/\epsilon.
	\end{equation*}
	By \eqref{eq:thetal_nonzero} we know
	\begin{align*}
		&\mbP\Big(f^S(\vect{Q}_l^r)\geq 8(\gamma+1)\sqrt{2\tilde{s}\log(2/\delta)}\log p/\epsilon\Big)\\
		=& \mbP\left(\sum_{j=1}^m\mbI\Big(\frac{1}{n}\sum_{i\in\mcH_j}\vect{\omega}_l^{\tp}\vect{X}_iz_i>\theta^*_l-2\Norm{\vect{\Sigma}^{-1}}_{L_{\infty}}\lambda_j\Big)-\frac{m}{2}\geq 4(\gamma+1)\sqrt{2\tilde{s}\log(2/\delta)}\log p/\epsilon\right)\\
		\geq& 1-O(p^{-\gamma_1}),
	\end{align*}
	as long as
	\begin{align*}
		\theta_l^*-2\Norm{\vect{\Sigma}^{-1}}_{L_{\infty}}\max_{1\leq j\leq m}\lambda_j\geq \tilde{C}\left(\frac{1}{n}+\frac{4(\gamma+1)\sqrt{2\tilde{s}\log(2/\delta)}\log p}{m\sqrt{n}\epsilon}+\sqrt{\frac{\log p}{mn}}\right),
	\end{align*}
	which proves the first part of the proposition. Similarly, when $\tilde{s}>s$,for $l\notin S$, there is
	\begin{align*}
		f_l(\vect{Q}^r_l,0)=\min\Big\{2\sum_{j=1}^m\mbI\Big(\hat{\theta}_{j,l}\leq 0\Big),2\sum_{j=1}^m\mbI\Big(\hat{\theta}_{j,l}\geq 0\Big)\Big\}-m.
	\end{align*} 
	Therefore by \eqref{eq:thetal_zero},
	\begin{align*}
		&\mbP\Big(f_l(\vect{Q}_l^r,0)\geq 8(\gamma+1)\sqrt{2\tilde{s}\log(2/\delta)}\log p/\epsilon\Big)\\
		\geq&1- \mbP\left(\sum_{j=1}^m\mbI\Big(\frac{1}{n_j}\sum_{i\in\mcH_j}\vect{\omega}_l^{\tp}\vect{X}_iz_i\leq\frac{1}{2}\Delta_0\lambda_N\omega_{l,l}\Big)-\frac{m}{2}\leq 4(\gamma+1)\sqrt{2\tilde{s}\log(2/\delta)}\log p/\epsilon\right)\\
		&-\mbP\left(\sum_{j=1}^m\mbI\Big(\frac{1}{n_j}\sum_{i\in\mcH_j}\vect{\omega}_l^{\tp}\vect{X}_iz_i\geq-\frac{1}{2}\Delta_0\lambda_N\omega_{l,l}\Big)-\frac{m}{2}\leq 4(\gamma+1)\sqrt{2\tilde{s}\log(2/\delta)}\log p/\epsilon\right)\\
		\geq& 1-O(p^{-\gamma_1}),
	\end{align*}
	as long as 
	\begin{align*}
		\lambda_N\geq \tilde{C}\left(\frac{1}{n}+\frac{4(\gamma+1)\sqrt{2\tilde{s}\log(2/\delta)}\log p}{m\sqrt{n}\epsilon}+\sqrt{\frac{\log p}{mn}}\right).
	\end{align*}
\end{proof}

\end{document}